\documentclass[envcountsame]{llncs}
\usepackage[T1]{fontenc}
\usepackage[utf8]{inputenc}
\usepackage{amssymb,amsmath}
\usepackage{bm,mathrsfs}
\usepackage{xcolor}
\usepackage{graphicx}
\usepackage{url}
\usepackage{array,multirow}
\usepackage{orcidlink}
\usepackage{xifthen}
\usepackage{subcaption}
\usepackage{stmaryrd}
\usepackage{xspace}

\usepackage{booktabs}
\usepackage{numprint}
\npstyleenglish

\usepackage{pgfplots}
\pgfplotsset{
compat=1.17,
mystyle/.style={
    scale only axis,
    width=0.7\columnwidth,
    height=0.5\columnwidth,
    label style={inner sep=0, font=\normalsize},
    tick label style={font=\scriptsize},
    legend style={font=\scriptsize},
    mark size=3,
    major grid style={dashed},
    line width=0.8pt,
    axis line style = thin}
}

\usepackage{tikz}
\usetikzlibrary{calc, trees, positioning, arrows, arrows.meta, chains,%
shapes, shapes.geometric, shapes.symbols, decorations.pathreplacing,%
 decorations.pathmorphing, matrix, quotes, patterns, fit, spy, calligraphy, plotmarks, tikzmark}

\usepackage[ruled, lined,titlenumbered,linesnumbered]{algorithm2e}
\DontPrintSemicolon
\SetKwInOut{Input}{Input}
\SetKwInOut{Output}{Output}
\SetKwComment{Comment}{}{}

\SetKwRepeat{DoWhile}{do}{while}
\SetKwBlock{Repeat}{Repeat}{}
\SetKwProg{Fn}{Function}{:}{}
\SetArgSty{textup}

\usepackage{mathtools}
\mathtoolsset{showonlyrefs}
\allowdisplaybreaks

\usepackage[nolist]{acronym}

\usepackage[capitalize,nameinlink]{cleveref}

\newcommand{\ReSo}{Reed--Solomon\xspace}
\newcommand{\reskew}{ReSkew\xspace}

\newcommand{\defeq}{:=}

\renewcommand{\mod}{\; \textnormal{ mod } \;}

\newcommand{\Fq}{\ensuremath{\mathbb F_{q}}}
\newcommand{\Fqm}{\ensuremath{\mathbb F_{q^m}}}

\newcommand{\Polyring}{\ensuremath{\Fqm[x]}}

\newcommand{\Linpolyring}{\mathbb{L}_{q^m}\![x]}

\newcommand{\SkewPolyringZeroDer}{\ensuremath{\Fqm[x;\aut]}}

\newcommand{\NN}{\ensuremath{\mathbb{N}}}

\newcommand{\RR}{\ensuremath{\mathbb{R}}}

\newcommand{\pe}{\ensuremath{\zeta}}

\newcommand{\aut}{\ensuremath{\theta}}
\newcommand{\autinv}{\ensuremath{\aut^{-1}}}

\newcommand{\frob}{\sigma}

\newcommand{\remev}[2]{{#1}\!\left[#2\right]}
\newcommand{\opev}[3]{\ensuremath{{#1}(#2)_{#3}}}
\newcommand{\opfull}[2]{\ensuremath{\mathcal{D}_{\aut}(#2)_{#1}}}

\newcommand{\opfullexp}[3]{\ensuremath{\mathcal{D}_{\aut}^{#3}(#2)_{#1}}}
\newcommand{\opfullexpinv}[3]{\ensuremath{\mathcal{D}_{\autinv}^{#3}(#2)_{#1}}}

\newcommand{\opexp}[3]{\ensuremath{\mathcal{D}^{#3}(#2)_{#1}}}

\newcommand{\OCompl}[1]{\ensuremath{\mathcal{O}({#1})}}

\DeclareMathOperator{\wt}{wt}
\DeclareMathOperator{\rk}{rk}

\DeclareMathOperator{\diag}{diag}

\newcommand{\mat}[1]{\ensuremath{\bm{#1}}}

\let\Oslash\O %

\renewcommand{\a}{\boldsymbol{a}}
\renewcommand{\b}{\boldsymbol{b}}
\renewcommand{\c}{\boldsymbol{c}}
\renewcommand{\d}{\boldsymbol{d}}
\newcommand{\e}{\boldsymbol{e}}

\newcommand{\m}{\boldsymbol{m}}
\newcommand{\n}{\boldsymbol{n}}

\newcommand{\s}{\boldsymbol{s}}

\renewcommand{\v}{\boldsymbol{v}}
\newcommand{\w}{\boldsymbol{w}}
\newcommand{\x}{\boldsymbol{x}}
\newcommand{\y}{\boldsymbol{y}}

\newcommand{\vecalpha}{\boldsymbol{\alpha}}

\newcommand{\veclambda}{\boldsymbol{\lambda}}

\newcommand{\G}{\boldsymbol{G}}
\renewcommand{\H}{\boldsymbol{H}}
\newcommand{\I}{\boldsymbol{I}}

\newcommand{\M}{\boldsymbol{M}}

\renewcommand{\O}{\boldsymbol{O}}
\renewcommand{\P}{\boldsymbol{P}}

\newcommand{\R}{\boldsymbol{R}}

\newcommand{\T}{\boldsymbol{T}}
\newcommand{\U}{\boldsymbol{U}}

\newcommand{\0}{\ensuremath{\mathbf 0}}
\newcommand{\1}{\ensuremath{\mathbf 1}}

\newcommand{\cZero}{\ensuremath{\c_{0}}}
\newcommand{\cHat}{\ensuremath{\widehat{\c}}}

\newcommand{\mycode}[1]{\ensuremath{\mathcal{#1}}}

\newcommand{\Gab}[1]{\mathrm{Gab}\ensuremath{[#1]}}

\newcommand{\GRS}[1]{\ensuremath{\mathrm{GRS}[#1]}}

\newcommand{\skewRS}[1]{\ensuremath{\mathrm{SRS}[#1]}}
\newcommand{\genSkewRS}[1]{\ensuremath{\mathrm{GSRS}[#1]}}
\newcommand{\genRS}[1]{\ensuremath{\mathrm{GRS}[#1]}}

\newcommand{\linRS}[1]{\ensuremath{\mathrm{LRS}[#1]}} %

\newcommand{\GLRS}[1]{\ensuremath{\mathrm{GLRS}[#1]}}

\newcommand{\shot}[2]{\ensuremath{{#1}^{(#2)}}}
\newcommand{\subShot}[3]{\ensuremath{{#1}^{(#3)}}_{#2}}
\newcommand{\shots}{\ensuremath{\ell}}
\newcommand{\len}{\ensuremath{n}}
\newcommand{\lenShot}[1]{\ensuremath{\len_{#1}}}

\newcommand{\wtH}{\ensuremath{\wt_{H}}}
\newcommand{\wtRk}{\ensuremath{\wt_{R}}}
\newcommand{\wtSrk}{\ensuremath{\wt_{\Sigma R}}}

\newcommand{\wtSkewIdx}[1]{\ensuremath{\wt_{S}^{#1}}}

\newcommand{\dH}{\ensuremath{d_{H}}}
\newcommand{\dRk}{\ensuremath{d_{R}}}
\newcommand{\dSrk}{\ensuremath{d_{\Sigma R}}}

\newcommand{\dSkewIdx}[1]{\ensuremath{d_{S}^{#1}}}

\newcommand{\distH}{d_H}

\newcommand{\Gsec}{\ensuremath{\G_{\text{\normalfont sec}}}}

\newcommand{\Hpub}{\ensuremath{\H_{\text{\normalfont pub}}}}

\newcommand{\skewVandermonde}[2]{\ensuremath{\mat{V}_{\aut}^{#1}(#2)}}
\newcommand{\skewVandermondeInv}[2]{\ensuremath{\mat{V}_{\autinv}^{#1}(#2)}}
\newcommand{\opVandermonde}[3]{\ensuremath{\mathfrak{M}_{\aut}^{#1}(#2)_{#3}}}
\newcommand{\opVandermondeInv}[3]{\ensuremath{\mathfrak{M}_{\autinv}^{#1}(#2)_{#3}}}

\newcommand{\gcrd}{\ensuremath{\mathrm{gcrd}}}
\newcommand{\lclm}{\ensuremath{\mathrm{lclm}}}

\newcommand{\functionTitle}[2]{\textsc{#1}\textup{(#2)}}

\DeclareMathOperator{\sk}{sk}
\DeclareMathOperator{\pk}{pk}
\DeclareMathOperator{\params}{params}

\DeclareMathOperator{\id}{id}

\newcommand{\brackAut}[1]{\ensuremath{[#1]_{\aut}}}
\newcommand{\brackWIndex}[2]{\ensuremath{[#1]_{#2}}}

\newcommand{\doubleBrackAut}[1]{\ensuremath{{\llbracket #1 \rrbracket}_{\aut}}}
\newcommand{\doubleBrackAutinv}[1]{\ensuremath{{\llbracket #1 \rrbracket}_{\autinv}}}
\newcommand{\doubleBrackWIndex}[2]{\ensuremath{{\llbracket #1 \rrbracket}_{#2}}}

\newcommand{\PAut}{{P}_{\aut}}
\newcommand{\PAutinv}{{P}_{\autinv}}
\newcommand{\PIdx}[1]{{P}_{#1}}

\newcommand{\kIdx}[1]{k_{#1}}

\newcommand{\ConjAut}[2]{\gamma_{\aut}(#1, #2)}
\newcommand{\ConjFrob}[2]{\gamma_{\frob}(#1, #2)}
\newcommand{\ConjAutinv}[2]{\gamma_{\autinv}(#1, #2)}

\newcommand{\puncCIdx}[1]{\mycode{C}_{\text{pun}, #1}}
\newcommand{\puncVarIdx}[2]{#1_{\text{pun}, #2}}
\newcommand{\shortCIdx}[1]{\mycode{C}_{\text{sho}, #1}}
\newcommand{\shortVarIdx}[2]{#1_{\text{sho}, #2}}

\newcommand{\qdeg}{\deg_{q}}
\newcommand{\mystack}[2]{\ensuremath{\genfrac{}{}{0pt}{}{#1}{#2}}}
\newcommand{\numtimes}[2]{#1} %

\begin{acronym}
	\acro{KEM}{key-encapsulation mechanism}
	\acro{LRS}{linearized Reed--Solomon}
	\acro{GLRS}{generalized linearized Reed--Solomon}
	\acro{NIST}{National Institute of Standards and Technology}
	\acro{PQC}{post-quantum cryptography}
	\acro{RS}{Reed--Solomon}
	\acro{GRS}{generalized Reed--Solomon}
	\acro{SDP}{syndrome-decoding problem}
	\acro{SRS}{skew Reed--Solomon}
	\acro{GSRS}{generalized skew Reed--Solomon}
	\acro{ISD}{information-set decoding}
	\acro{MDS}{maximum distance separable}
	\acro{MSD}{maximum skew distance}
	\acro{MSRD}{maximum sum-rank distance}
	\acro{MRD}{maximum rank distance}
	\acro{BSI}{Federal Office for Information Security}
	\acro{DSA}{digital signature algorithm}
	\acro{HQC}{Hamming Quasi-Cyclic}
	\acro{BIKE}{Bit Flipping Key Encapsulation}
	\acro{QC-MDPC}{quasi-cyclic moderate-density parity-check}
	\acro{DFR}{decryption failure rate}
	\acro{PKE}{public-key encryption}
	\acro{PKC}{public-key cryptosystem}
	\acro{IND-CPA}{indistinguishability under chosen-plaintext attacks}
	\acro{OW-CPA}{one-way under chosen-plaintext attacks}
	\acro{IND-CCA2}{indistinguishability under adaptive chosen-ciphertext attacks}
	\acro{ROM}{random oracle model}
	\acro{QROM}{quantum random oracle model}
	\acro{RAM}{random access machine}
	\acro{lclm}{least common left multiple}
	\acro{gcrd}{greatest common right divisor}
	\acro{ISO}{International Organization for Standardization}
	\acro{GSE}{generalized skew evaluation}
\end{acronym}

\title{Distinguishers for Skew and Linearized Reed--Solomon Codes}

\author{Felicitas Hörmann\inst{1,2}\,\orcidlink{0000-0003-2217-9753} \and
Anna-Lena Horlemann\inst{2}\,\orcidlink{0000-0003-2685-2343}}

\institute{Institute of Communications and Navigation, German Aerospace Center (DLR), Oberpfaffenhofen--Wessling, Germany \\
\email{felicitas.hoermann@dlr.de}
\and
Institute of Computer Science, University of St.Gallen, St.\ Gallen, Switzerland \\
\email{anna-lena.horlemann@unisg.ch}}

\begin{document}
	
	\maketitle
	
	\begin{abstract}
		Generalized Reed--Solomon (GRS) and Gabidulin codes have been proposed for various code-based cryptosystems, though most such schemes without elaborate disguising techniques have been successfully attacked. Both code classes are prominent examples of the isometric families of (generalized) skew and linearized Reed--Solomon ((G)SRS and (G)LRS) codes which are obtained as evaluation codes from skew polynomials. Both GSRS and GLRS codes share the advantage of achieving the maximum possible error-decoding radius and thus promise smaller key sizes than e.g.\ Classic McEliece.
		
		We investigate whether these generalizations can avoid the known structural attacks on GRS and Gabidulin codes. In particular, we prove that both GSRS and GLRS codes decompose into GRS subcodes and are thus efficiently distinguishable from random codes with a square code method.
		This applies to all parameters for which the code length $n$ and its dimension $k$ over the field $\Fqm$ satisfy $m + 1 < k < n - \tfrac{1}{2} (m^2 + 3m)$. The distinguishability extends to GSRS and GLRS codes with Hamming-isometric disguising.
		
		We further relate these findings to existing distinguishers for GRS, Ga\-bi\-du\-lin, and LRS codes, and extend known results on duals of SRS and LRS codes to the generalized setting allowing nonzero column multipliers. Finally, we provide explicit transformations between GSRS and GLRS codes, clarifying the algebraic relationship between the skew and linearized frameworks.
		
		\keywords{code-based cryptography, evaluation codes, skew polynomials, distinguishers}
	\end{abstract}

	\acresetall

	\section{Introduction}
	
	\paragraph{Post-quantum cryptography.}
	We have known for decades that quantum computers pose a threat for current public key cryptography based on integer factorization or variants of the discrete logarithm problem.
	This awareness lead to the emergence of the field of \ac{PQC}, which studies alternative security assumptions and develops cryptographic algorithms that resist both classical and quantum attacks.
	Since there has been tremendous progress in quantum computing and store-now-decrypt-later attacks constitute an additional risk, the transition to \ac{PQC} is not only a crucial but also a time-pressing matter.
	For instance, Germany's \ac{BSI} conservatively estimates in its 2024 status report on quantum computing that cryptographically relevant quantum computers will become available by around 2040~\cite{BSIquantum}.

	The U.S.\ \ac{NIST} started a \ac{PQC} standardization project by publishing a call for quantum-resistant algorithms in late 2016.
	While the branch for standardizing \acp{DSA} is still ongoing, the part dealing with \acp{KEM} recently came to an end.
	There, \ac{NIST} selected the lattice-based CRYSTALS-Kyber after three rounds in 2022 and the code-based {HQC} after the fourth round in 2025~\cite{AlagicAponEtAl2022StatusReportFourth}.
	
	\paragraph{Code-based cryptography.}
	McEliece pioneered the field of code-based cryptography with his seminal paper~\cite{McEliece1978PublicKeyCryptosystem} in 1978.
	There, Alice picks a linear code with an efficient decoding algorithm and publishes a random-looking generator matrix of an equivalent code as public key.
	Bob then encrypts a message by encoding it and adding a random correctable error.
	Since Alice knows an efficient decoder for the secret code, she can decrypt successfully, while the adversary Eve faces the hard problem of decoding in a seemingly random code.
	Note that Eve's problem is in fact NP-complete if the input code is arbitrary and gets decoded by means of a maximum-likelihood decoder~\cite{BerlekampMcElieceEtAl1978InherentIntractabilityCertain,Barg1994SomeNewNp}.
	Niederreiter followed a somewhat dual approach when he proposed another cryptosystem based on algebraic codes in~\cite{Niederreiter1986KnapsackTypeCryptosystems}.
	In his framework, the public key is a random-looking parity-check matrix of the secret code and the ciphertexts are syndromes.
	Both formulations were shown to be equivalent from a security point of view~\cite{LiDengEtAl1994EquivalenceMceliecesNiederreiters} and we usually include both variants when referring to McEliece-like schemes for simplicity.
	
	The major shortcoming of McEliece-like systems are their large keys, and the literature recorded numerous suggestions to improve upon this in the last decades.
	Most proposals use strategies like more advanced disguising techniques or alternative decoding metrics, or they employ other code families.
	While the original McEliece system guarantees correct decryption,
	proposals such as {HQC} and \acs{BIKE} use decoders with decoding failures.
	The resulting decryption failures require a careful analysis and potentially lead to side-channel vulnerabilities.
	Unfortunately, most of the systems employing algebraic codes with non-failing decoders and isometric disguising were broken by structural attacks which allow to recover the secret code structure from the public key.

	\paragraph{Cryptosystems using generalized Reed--Solomon and Gabidulin codes.}
	Prominent choices for algebraic code families are \ac{GRS} codes in the Hamming metric and their rank-metric analogs, that is, Gabidulin codes.
	Both classes of codes exhibit the highest possible unique decoding radius and have efficient decoders.
	As a consequence, the public keys of respective cryptosystems are expected to be small due to the rising cost of \ac{ISD} for a growing error weight.
	Another argument for \ac{GRS} and Gabidulin codes is their compact algebraic representation as evaluation codes which yields small secret keys.
	In the following, we focus on schemes based on the unique decoding of \ac{GRS} and Gabidulin codes, and exclude proposals using e.g.\ interleaved codes, subcode constructions, or list-decoding problems.
	
	\paragraph{Structural attacks on GRS codes.}
	Niederreiter developed the first McEliece-like scheme based on \ac{GRS} codes~\cite{Niederreiter1986KnapsackTypeCryptosystems} but Sidelnikov and Shestakov showed that the secret code parameters can be recovered from the isometrically disguised generator matrix, i.e., from the public key~\cite{SidelnikovShestakov1992InsecurityCryptosystemsBased}.
	Wieschebrink introduced additional random columns in the public scrambled generator matrix to mitigate this attack~\cite{Wieschebrink2006TwoNpComplete}, yet the system was broken by the square code distinguisher~\cite{Wieschebrink2010CryptanalysisNiederreiterPublic,CouvreurGaboritEtAl2014DistinguisherBasedAttacks}.
    Indeed, the distinguisher allows to extract the \ac{GRS} columns from the public key and thus makes the scheme susceptible to Sidelnikov--Shestakov's attack.
	The BBCRS scheme~\cite{BaldiBianchiEtAl2014EnhancedPublicKey} mitigates structural attacks with a special disguising technique which amplifies the Hamming weight of the error for the attacker but keeps it decodable for the legitimate receiver.
	Its updated parameter sets appear to be still secure after the attack from~\cite{CouvreurOtmaniEtAl2015PolynomialTimeAttack} broke all initial parameter suggestions.
	
	\paragraph{Structural attacks on Gabidulin codes.}
	The GPT system~\cite{GabidulinParamonovEtAl1991IdealsOverNon} was the start of a series of cryptosystems based on Gabidulin codes~\cite{Gabidulin2008AttacksCounterAttacks,RashwanGabidulinEtAl2010SmartApproachGpt}.
	Unfortunately, the attacking side was strong and many of these proposals were broken by Gibsons's attacks~\cite{Gibson1995SeverelyDentingGabidulin,Gibson1996SecurityGabidulinPublic} as well as by Overbeck's attack~\cite{Overbeck2005NewStructuralAttack,Overbeck2007StructuralAttacksPublic,Overbeck2007PublicKeyCryptography} and its extensions~\cite{Horlemann-TrautmannMarshallEtAl2016ConsiderationsRankBased,Horlemann-TrautmannMarshallEtAl2018ExtensionOverbecksAttack,OtmaniKalachiEtAl2017ImprovedCryptanalysisRank}.
	In contrast to these schemes, Loidreau's cryptosystem~\cite{Loidreau2016EvolutionGptCryptosystem,Loidreau2017NewRankMetric} employs Gabidulin codes but hides their structure by means of rank amplifiers.
	Some of the originally chosen parameter sets were broken by the Coggia--Couvreur attack~\cite{CoggiaCouvreur2020SecurityLoidreauRank} but new parameter choices fixed the vulnerability.
	Recently, a generalization of the system was presented in~\cite{NouetowaLoidreau2025AnalysisGeneralizationLoidreaus}.
	
	\paragraph{Skew evaluation codes.}
	\ac{GRS} and Gabidulin codes are special cases of the isometric families of \acf{GSRS} and \acf{GLRS} codes (see \cite{Martinez-Penas2018SkewLinearizedReed} for example), which are both obtained as evaluation codes from skew polynomial rings.
	The efficient decoding of these codes was showcased by means of various decoding techniques in e.g. \cite{BoucherUlmer2014LinearCodesUsing,LiuManganielloEtAl2015ConstructionDecodingGeneralized,Martinez-PenasKschischang2019ReliableSecureMultishot}.
	\ac{GSRS} and \ac{GLRS} codes are \ac{MDS} and thus promising for reducing the key size in cryptographic applications while potentially being resistant against the known structural attacks against \ac{GRS} and Gabidulin codes.
	Within this research direction, e.g. the distinguishing of \ac{GLRS} codes under isometric disguising was studied in \cite{HoermannBartzEtAl2023DistinguishingRecoveringGeneralized}.
	The main result was an Overbeck-like distinguisher whose complexity is polynomial-time only for trivial column multipliers and partial knowledge of the secret code parameters.
	Further, a cryptosystem based on \ac{GLRS} codes with trivial column multipliers was introduced recently \cite{Nouetowa2026PostQuantumEncryption}.
	The proposal does not use isometric disguising but hides the secret code structure by means of low-dimensional subspaces similar to Loidreau's idea for Gabidulin codes.
	
	\paragraph{Contributions.}
	We study the algebraic structure and the distinguishability of skew evaluation codes with the goal of understanding if they are promising candidates for McEliece-like cryptosystems with isometric disguising.
	In particular, we show that \acf{GSRS} codes and \acf{GLRS} codes can be expressed as the direct sum of \ac{GRS} codes and thus have a predictable square code dimension.
	This structural perspective yields a polynomial-time distinguisher for \ac{GSRS} and \ac{GLRS} codes over $\Fqm$, as long as their dimension $k$ and length $n$ satisfy $m + 1 < k < n - \tfrac{1}{2} (m^2 + 3m)$.
	The results also carry over to codes disguised by Hamming-metric isometries.
	
	We try to give a comprehensive overview of the connections between the isometric \ac{GSRS} and \ac{GLRS} codes as well as the relations to their special cases of \ac{GRS} and Gabidulin codes.
	Thus, we compare our results to known distinguishers and also enhance known results on the duals of \ac{GSRS} and \ac{GLRS} codes for trivial column multipliers to the general setting.
	
	\paragraph{Outline.}
	After this introduction, we start the paper by collecting useful preliminaries about codes and skew polynomials in~\cref{sec:prelims}.
	\cref{sec:skew_eval_codes} introduces \ac{GSRS} and \ac{GLRS} codes, gives results about their duals, and highlights \ac{GRS} and Gabidulin codes as special cases.
	The end of the section explains the connection between the distinguishing problem and security in code-based cryptography.
	In \cref{sec:grs_subcodes}, we show that \ac{GSRS} and \ac{GLRS} codes can be decomposed into \ac{GRS} subcodes.
	This implies the existence of a square code distinguisher which we present in \cref{sec:square_code_dist}.
	We discuss connections to known distinguishers for \ac{GRS}, Gabidulin, and \ac{LRS} codes in \cref{sec:other_distinguishers} and showcase experimental results in \cref{sec:experiments}.
	Finally, we give a summary and mention interesting open research questions in~\cref{sec:conclusion}.
	
	In addition to the main part of the paper, \cref{sec:reskew} introduces the cryptosystem \reskew based on \textbf{Re}ed--\textbf{S}olomon codes in a s\textbf{kew} setting.
	It is a straightforward instantiation of the McEliece system with respect to the Hamming metric and isometric disguising and thus showcases a system for which the presented distinguisher is relevant.
	Further, it quantifies the potential of \ac{GSRS} codes in a McEliece-like setting by providing key and ciphertext sizes for parameter sets adhering to the \ac{NIST} security levels.
	In fact, the size of the public key is reduced by a factor of at least three compared to Classic McEliece.

	\section{Preliminaries}
	\label{sec:prelims}
	
	We mainly work in finite extension fields and denote by $\Fq$ and $\Fqm$ the fields of prime-power order $q$ and $q^{m}$, respectively.
	Field elements are usually denoted by lowercase letters, and bold lowercase and uppercase letters represent vectors and matrices, respectively.
	Further, $\0$ denotes the all-zero vector or the all-zero matrix, and $\1$ represents the all-one vector or the all-one matrix, both of suitable dimensions depending on the context.
	
	We interpret vectors as row vectors and write e.g.\ $\v = (v_1, \dots, v_n) \in \Fqm^n$.
	If we want to highlight that $\v$ is divided into $\shots$ blocks whose lengths are the entries of the vector $\n = (\lenShot{1}, \dots, \lenShot{\shots}) \in \NN^{\shots}$, we denote it as
	\begin{equation}
	    \v = (\shot{\v}{1} \mid \dots \mid \shot{\v}{\shots})\in \Fqm^{\n}
	    \text{ with } \shot{\v}{i} = \bigl(\subShot{v}{1}{i}, \dots, \subShot{v}{\lenShot{i}}{i}\bigr) \in \Fqm^{\lenShot{i}}
		\text{ for } i = 1, \dots, \shots.
	\end{equation}
	This notation carries over to matrices $\M \in \Fqm^{k \times \n}$.
	For two vectors $\v, \w \in \Fqm^n$, we define their \emph{star product}, which is also known as the \emph{Schur product}, as
	\begin{equation}
	    \v\star \w:= (v_1 w_1,\dots, v_n w_n) \in \Fqm^n.
	\end{equation}
	
	Moreover, we use the shorthand $\Fqm^{\ast} = \Fqm \setminus \{0\}$ to denote the multiplicative group of $\Fqm$.
	Recall that $\Fqm^{\ast}$ is cyclic and $\pe \in \Fqm^{\ast}$ is called a primitive element if it generates $\Fqm^{\ast}$.

	\subsection{Skew polynomials and their evaluation}
	
	A \emph{skew polynomial ring} $\SkewPolyringZeroDer$ contains all formal polynomials with finitely many nonzero coefficients from $\Fqm$.
	The addition in $\SkewPolyringZeroDer$ is the same as for the conventional polynomial ring $\Fqm[x]$.
	However, the multiplication in $\SkewPolyringZeroDer$ is determined by the rule $x a = \aut(a) x$ which allows to swap the indeterminate $x$ with a field element $a \in \Fqm$~\cite{Ore1933TheoryNonCommutative}.
	Here, $\aut: \Fqm \to \Fqm$ denotes a field automorphism of $\Fqm$, i.e., it satisfies the properties $\aut(a + b) = \aut(a) + \aut(b)$ and $\aut(a \cdot b) = \aut(a) \cdot \aut(b)$ for any $a, b \in \Fqm$.
	
	Note that the notion of degree from $\Fqm[x]$ carries over naturally to $\SkewPolyringZeroDer$ and is denoted by $\deg(\cdot)$.
	Further, skew polynomial rings are left and right Euclidean domains, which means that the \emph{\acl{lclm}} and the \emph{\acl{gcrd}} of two skew polynomials $f, g \in \SkewPolyringZeroDer$ exist and are denoted by $\lclm(f,g)$ and $\gcrd(f,g)$, respectively.
	
	\begin{remark}
		\label{rem:general_skew_polys}
		Note that \Oslash{}re \cite{Ore1933TheoryNonCommutative} defined skew polynomials for potentially infinite division rings instead of finite fields and also included a $\aut$-derivation in addition to the automorphism $\aut$.
		However, the finite setting is more appropriate for coding theory and finite division rings are fields due to Wedderburn's theorem \cite{wedderburn1905theorem}.
		In this case, nonzero derivations do not really change the setup since skew polynomial rings with nonzero derivation are isomorphic to the one with zero derivation as detailed in \cite[Prop.~40]{Martinez-Penas2018SkewLinearizedReed}.
		This means that our setting is barely a restriction from a coding-theoretic perspective.
	\end{remark}
	
	\paragraph{Automorphism choices.}
	Each automorphism $\aut: \Fqm \to \Fqm$ has a fixed field and we choose $\aut$ such that this is precisely $\Fq$.
	In other words, $\aut$ has the form $\aut(x) = x^{q^{s}}$ for an $s$ with $\gcd(s, m) = 1$.
	In particular, $\aut$ is the $s$-th power of the \emph{Frobenius automorphism} $\sigma$ which is the $q$-powering map $x \mapsto x^q$ for all $x \in \Fqm$.
	Similar to \cite{Liu2016GeneralizedSkewReed}, we introduce the notations
	\begin{equation}
		\brackAut{i} %
		= q^{is}
		\qquad \text{and} \qquad
		\doubleBrackAut{i} %
		= \sum_{j = 0}^{i-1} q^{js} = q^{(i-1) s} + \dots + q^s + 1
	\end{equation}
	for $\aut = \frob^s$ with an arbitrary $s = 0, \dots, m - 1$ and for all $i \in \NN$.
	In particular, $\aut^i(x) = x^{\brackAut{i}}$ applies and $\doubleBrackAut{i} = \tfrac{q^{is} - 1}{q^{s} - 1}$ for $s \neq 0$.
	
	\paragraph{Remainder evaluation.}
	One approach to skew polynomial evaluation is enforcing a remainder theorem~\cite{Lam1986GeneralTheoryVandermonde,LamLeroy1988VandermondeWronskianMatrices}.
	Namely, the evaluation of $f \in \SkewPolyringZeroDer$ at a point $a \in \Fqm$ is defined as the remainder of the right division of $f$ by $x - a$.
	We will denote it by $\remev{f}{a}$ to visually distinguish it from other function evaluations.
	In fact, {remainder evaluation} can be expressed similarly to usual polynomial evaluation, namely~\cite[Lem.~2.4]{LamLeroy1988VandermondeWronskianMatrices} yields
	\begin{equation}
	    \remev{f}{a} = \sum\nolimits_{i} f_i a^{\doubleBrackAut{i}}
	\end{equation}
	for any $a \in \Fqm$ and $f \in \SkewPolyringZeroDer$ having the form $f(x) = \sum_{i} f_i x^{i}$.
	For vectors $\a \in \Fqm^n$, we write $\remev{f}{\a} = (\remev{f}{a_1}, \dots, \remev{f}{a_n}) \in \Fqm^{n}$.
	
	\paragraph{Generalized operator evaluation.}
	Another way of evaluating skew polynomials is based on the operator $\opfull{a}{b} \defeq \aut(b)a$ for all $a,b \in \Fqm$ and its powers $\opfullexp{a}{b}{i} = \aut^i(b) \cdot a^{\doubleBrackAut{i}}$ for all $i>0$ (see e.g.\ \cite{leroy1995pseudolinear,Martinez-Penas2018SkewLinearizedReed}).
	The \emph{generalized operator evaluation} of $f(x) = \sum_i f_i x^{i} \in \SkewPolyringZeroDer$ at $b \in \Fqm$ with respect to the evaluation parameter $a \in \Fqm$ is then
	\begin{equation}\label{eq:def_gen_op_eval}
	  \opev{f}{b}{a} \defeq \sum\nolimits_{i}f_i\opfullexp{a}{b}{i} = \sum\nolimits_i f_i \aut^{i}(b) a^{\doubleBrackAut{i}}.
	\end{equation}
	Observe that there is the useful relation
	\begin{equation}\label{eq:connection_remainder_gen-op}
		\opev{f}{b}{a} b^{-1} = \remev{f}{\opfull{a}{b} b^{-1}} \quad \text{for any } a, b \in \Fqm
	\end{equation}
	between remainder and generalized operator evaluation \cite[Lem.~24]{Martinez-Penas2018SkewLinearizedReed}.
	For simplicity, we generalize the notations $\opfullexp{\cdot}{\cdot}{i}$ and $\opev{f}{\cdot}{\cdot}$ to vectors $\a, \b \in \Fqm^n$ via elementwise application and understand $\opfullexp{\a}{\b}{i} = (\opfullexp{a_1}{b_1}{i}, \dots, \opfullexp{a_n}{b_n}{i}) \in \Fqm^{n}$ and $\opev{f}{\b}{\a} = (\opev{f}{b_1}{a_1}, \dots, \opev{f}{b_n}{a_n}) \in \Fqm^n$, respectively.
	
	\paragraph{Conjugacy and norm.}

		For any two elements $a\in\Fqm$ and $c\in\Fqm^*$, we define
		\begin{equation}
			\ConjAut{a}{c} \defeq \aut(c)ac^{-1}.
		\end{equation}
		Two elements $a,b\in\Fqm$ are called \emph{$\aut$-conjugates}, if there exists an element $c\in\Fqm^*$ such that $b = \ConjAut{a}{c}$ holds~\cite{Lam1986GeneralTheoryVandermonde}.
	The notion of $\aut$-conjugacy defines an equivalence relation on $\Fqm$ and thus a partition of $\Fqm$ into $\aut$-conjugacy classes~\cite{LamLeroy1988VandermondeWronskianMatrices}.
	We use the shorthand $\ConjAut{\a}{\c}$ for two vectors $\a, \c \in \Fqm^n$ with $\c$ having nonzero entries to denote $(\ConjAut{a_1}{c_1}, \dots, \ConjAut{a_n}{c_n}) \in \Fqm^n$.
	
	The \emph{norm} of $b \in \Fqm$ with respect to $\aut$ is $N_{\aut}(b) = \prod_{j=0}^{m-1} \aut^j(b) = b^{\doubleBrackAut{m}}$.
	It follows from Hilbert's Theorem 90 that two field elements have the same norm if and only if they belong to the same $\aut$-conjugacy class~\cite[Eq.~(1.2)]{LamLeroy1994Hilbert90Theorems}.
	We extend the notation $N_{\aut}(\cdot)$ elementwise to vectors and write $N_{\aut}(\b) = (N_{\aut}(b_1), \dots, N_{\aut}(b_n))$ for any $\b \in \Fqm^n$.
	
	\paragraph{P-independence.}
	We call a vector $\vecalpha \in \Fqm^{n}$ \emph{$\PAut$-independent} if $\lclm_{i=1, \dots, n} (x - \alpha_i) \in \SkewPolyringZeroDer$ has degree $n$ \cite{BoucherNouetowa2025DecodingAlgorithmSkew}.
	Interestingly, P-independence can be expressed in terms of $\Fq$-linear independence when $\vecalpha$ is represented in a particular form.
	Namely, we group the entries by $\aut$-conjugacy class and express each of them as a conjugate of a selected class representative.
	When $a_1, \dots, a_{\shots} \in \Fqm$ belong to distinct conjugacy classes and $\n \in \NN^{\shots}$ is a block partition, let us define the vector $\a \in \Fqm^{\n}$ such that the $i$-th block $\shot{\a}{i} \in \Fqm^{\lenShot{i}}$ of $\a$ contains precisely $\lenShot{i}$ copies of $a_i$.
	This will allow us to keep the notation as simple as possible and we refer to $\a$ as the $\n$-vector corresponding to $a_1, \dots, a_{\shots}$ in the following.
	
	\begin{lemma}
		\label{lem:p-indep_conditions}
		Consider an arbitrary vector $\vecalpha \in \Fqm^n$ whose entries belong to $\shots$ distinct $\aut$-conjugacy classes.
		Pick representatives $a_1, \dots, a_{\shots} \in \Fqm$ of the respective conjugacy classes and denote the corresponding $\n$-vector by $\a \in \Fqm^{\n}$.
		Then there exist a permutation $\pi \in S_n$ and a vector $\b \in \Fqm^{\n}$ with nonzero entries such that $\pi(\vecalpha) = \ConjAut{\a}{\b}$ holds.
	    Further, the following statements are equivalent:
	    \begin{enumerate}
		    \item $\vecalpha \in \Fqm^{n}$ is $\PAut$-independent.
		    \item $\a$ contains only nonzero entries and each block of $\b$ contains $\Fq$-linearly independent elements.
	    \end{enumerate}
	\end{lemma}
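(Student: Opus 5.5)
The existence part is immediate from the definition of $\aut$-conjugacy. Since the entries of $\vecalpha$ lie in $\shots$ distinct conjugacy classes, we choose a permutation $\pi$ that sorts them by class. Then the $i$-th block of $\pi(\vecalpha)$ contains exactly the $\lenShot{i}$ entries conjugate to $a_i$. Each such entry $\alpha$ equals $\ConjAut{a_i}{b}$ for some $b \in \Fqm^\ast$, and we put these $b$'s into $\shot{\b}{i}$. P-independence depends only on the set of linear factors $x-\alpha_j$, so it is invariant under $\pi$. We may therefore assume $\vecalpha = \ConjAut{\a}{\b}$ from now on.

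For the equivalence, we reduce to one conjugacy class at a time.

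\emph{Step one.} We use the standard fact, from Lam--Leroy's theory of P-independent sets, that a union of sets lying in distinct conjugacy classes is P-independent if and only if each part is. One way to see this is that the minimal polynomials of the classes have pairwise trivial gcrd, so the degrees of the lclm's add up. This is the step I expect to be the main obstacle if it has to be proved from scratch rather than cited. I would cite \cite{LamLeroy1988VandermondeWronskianMatrices}.

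\emph{Step two.} For a single class with representative $a$ and elements $\ConjAut{a}{b_1}, \dots, \ConjAut{a}{b_r}$, we use \eqref{eq:connection_remainder_gen-op}. It gives $\remev{f}{\ConjAut{a}{b}} = 0$ if and only if $\opev{f}{b}{a} = 0$. The map $b \mapsto \opev{f}{b}{a}$ is $\Fq$-linear, since $\aut$ fixes $\Fq$. Hence the set of $b$ whose conjugate is a root of $f$, together with $0$, is an $\Fq$-subspace.

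Next, the degree of $\lclm_j(x - \ConjAut{a}{b_j})$ equals the rank of the skew Vandermonde matrix with entries $\ConjAut{a}{b_j}^{\doubleBrackAut{i}}$. Right-multiplying column $j$ by $b_j$ turns it into the operator Vandermonde matrix with entries $\opfullexp{a}{b_j}{i} = \aut^i(b_j)\, a^{\doubleBrackAut{i}}$, which has the same rank.

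\emph{Case $a \neq 0$.} Factor out the nonzero row scalars $a^{\doubleBrackAut{i}}$. What remains is the Moore matrix $(\aut^i(b_j))$, which has full rank $r$ if and only if $b_1, \dots, b_r$ are $\Fq$-linearly independent. Here we use that $\aut$ generates the Galois group of $\Fqm/\Fq$ because $\gcd(s,m)=1$, and that $r \le m$ automatically in the independent case.

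\emph{Case $a = 0$.} The class is $\{0\}$, all $\alpha_j = 0$, and the lclm is just $x$. P-independence then forces that block to have length $1$. In the statement's convention this zero case has to be excluded, which is why $\a$ must have nonzero entries. I would double-check how a singleton zero class is meant to be handled, because a single $0$ with $b \neq 0$ is trivially $\Fq$-independent. Presumably the intended statement treats $\alpha_j = 0$ as not allowed, or requires $a_i \neq 0$ as part of the normal form, and I would state that convention explicitly.

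Combining the two steps, $\vecalpha$ is $\PAut$-independent if and only if every block gives a rank equal to its length. By Step two this holds if and only if each $a_i \neq 0$ and each block of $\b$ is $\Fq$-linearly independent.
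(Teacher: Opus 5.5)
Your argument is correct. The existence part is the same as the paper's: sort the entries by class with $\pi$ and write each one as $\ConjAut{a_i}{b}$.

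For the equivalence the routes differ. The paper gives no argument of its own and only invokes \cite[Lem.~1]{LiuManganielloEtAl2017MatroidalStructureSkew} and \cite[Lem.~29]{Martinez-Penas2018SkewLinearizedReed}. You reconstruct the mechanism behind those citations. First you split $\vecalpha$ along conjugacy classes. Inside one class you use \eqref{eq:connection_remainder_gen-op} to rewrite the skew Vandermonde matrix as $\diag(a^{\doubleBrackAut{i}})$ times the Moore matrix $(\aut^{i}(b_j))$. That Moore matrix has full rank exactly when the $b_j$ are $\Fq$-independent, because $\aut$ generates $\mathrm{Gal}(\Fqm/\Fq)$.

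What your route buys is a nearly self-contained proof. Its only external inputs are the rank--degree identity \cite[Thm.~4.5]{LamLeroy1988VandermondeWronskianMatrices} and the Moore-matrix criterion for $\aut=\frob^s$. It also shows exactly where $a_i \neq 0$ enters: the row scalars $a^{\doubleBrackAut{i}}$ must be invertible.

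Your Step one is easier than you expect. The polynomial $x^m - N_{\aut}(a_i)$ has remainder evaluation $N_{\aut}(\beta) - N_{\aut}(a_i) = 0$ at every $\beta$ in the class of $a_i$. Hence the lclm over that class right-divides it. For two distinct classes these polynomials differ by the nonzero constant $N_{\aut}(a_j) - N_{\aut}(a_i)$, using Hilbert 90. So the class-wise lclms have trivial gcrd, and their degrees add.

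Your concern about the class $\{0\}$ is justified, and it is a defect of the statement rather than of your proof. For example, $\vecalpha = (0,1)$ is $\PAut$-independent, since $\lclm(x, x-1)$ has degree $2$. Yet its $\a$ contains the representative $0$. So 1.\ $\Rightarrow$ 2.\ fails whenever $\vecalpha$ has exactly one zero entry, and the paper's citation-based proof does not address this.

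There are two possible repairs:
\begin{itemize}
\item assume $\vecalpha$ has only nonzero entries, in which case $a_i \neq 0$ is automatic; or
\item replace 2.\ by ``a block with $a_i = 0$, if present, has length one, and every other block of $\b$ is $\Fq$-linearly independent.''
\end{itemize}
Your case $a = 0$ proves exactly the second version. Only your closing sentence, which still claims that each $a_i \neq 0$ follows, needs to be adjusted to match.
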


	\begin{proof}
		Recall that $\aut$-conjugacy is an equivalence relation and each element $\alpha \in \Fqm$ can thus be represented as a conjugate $\ConjAut{a}{b}$ of a class representative $a \in \Fqm$ with a nonzero $b \in \Fqm$.
		When we fix a set of representatives of the conjugacy classes and let $\pi$ group the entries of $\vecalpha$ by their class membership, we arrive at the claimed representation $\pi(\vecalpha) = \ConjAut{\a}{\b}$.
	    The equivalence of 1. and 2. follows from \cite[Lem.~1]{LiuManganielloEtAl2017MatroidalStructureSkew} and~\cite[Lem.~29]{Martinez-Penas2018SkewLinearizedReed}.
		\qed
	\end{proof}
	
	\begin{remark}
	    \label{rem:sum-rank_blocks}
		We only introduced the permutation $\pi$ in \cref{lem:p-indep_conditions} to simplify the notation.
	    $\pi$ rearranges the entries of $\vecalpha$ by $\aut$-conjugacy class and allows to reflect this structure also in the block partition $\n$ of the vector $\b \in \Fqm^{\n}$.
	    This will be beneficial later in the context of \ac{GLRS} codes and the related sum-rank weight that depends on the block partition $\n$.
	    However, there is no mathematical problem if the entries of the representation $\ConjAut{\a}{\b}$ are not sorted by class.
	\end{remark}

	\subsection{Linear codes}
	
	An \emph{$\Fqm$-linear code} $\mycode{C}$ of length $n$ and dimension $k$ is a $k$-dimensional $\Fqm$-linear subspace of $\Fqm^{n}$.
	We call the ratio $\tfrac{k}{n}$ its \emph{rate}.
	$\mycode{C}$ can be represented as the $\Fqm$-linear row space of a full-rank \emph{generator matrix} $\G \in \Fqm^{k \times n}$ or as the kernel of a full-rank \emph{parity-check matrix} $\H \in \Fqm^{(n - k) \times n}$.
	Remark that $\G \H^{\top} = \0$ applies to any pair consisting of a generator and a parity-check matrix, and that both representations are not unique.
	The \emph{dual code} of $\mycode{C}$ is the code generated by $\H$.
	It has length $n$ and dimension $n-k$, and is denoted by $\mycode{C}^{\perp}$.
	
	\paragraph{Puncturing and shortening.}
	We will need two ways of deriving a shorter code from a given one (see e.g.\ in \cite[Sec.~1.5]{HuffmanPless2003FundamentalsErrorCorrecting}).
	First, we \emph{puncture} a linear code $\mycode{C} \subseteq \Fqm^{n}$ at position $i = 1, \dots, n$ by removing the $i$-th coordinate from each codeword.
	In other words,
	\begin{equation}
	    \puncCIdx{i} := \{(c_1, \dots, c_{i-1}, c_{i+1}, \dots, c_n) \mid (c_1, \dots, c_n) \in \mycode{C} \} \subseteq \Fqm^{n-1}.
	\end{equation}
	We obtain a generating matrix of $\puncCIdx{i}$ from a generator matrix of $\mycode{C}$ by removing the $i$-th column.
	Second, we \emph{shorten} $\mycode{C}$ at the $i$-th coordinate when we consider the subcode that vanishes at position $i$ and puncture it at position $i$.
	More precisely,
	\begin{equation}
	    \shortCIdx{i} := \{(c_1, \dots, c_{i-1}, c_{i+1}, \dots, c_n) \mid (c_1, \dots, c_{i-1}, 0, c_{i+1}, \dots, c_n) \in \mycode{C} \} \subseteq \Fqm^{n-1}.
	\end{equation}
	Both notions can be extended from one index to a set of coordinates in a straightforward manner.
	Further, it is worth noting that puncturing and shortening are dual operations in the sense that $(\puncCIdx{i})^{\perp} = \shortVarIdx{(\mycode{C}^{\perp})}{i}$ and $(\shortCIdx{i})^{\perp} = \puncVarIdx{(\mycode{C}^{\perp})}{i}$ apply according to \cite[Thm.~1.5.7]{HuffmanPless2003FundamentalsErrorCorrecting}.
	
	\paragraph{Different decoding metrics.}
	Codes can be endowed with different decoding metrics which measure e.g.\ the distance of an erroneous received vector from the closest codeword or the error weight.
	Classically, the \emph{Hamming weight} $\wtH(\v)$ of a vector $\v \in \Fqm^{n}$ is the number of its nonzero entries.
	For the \emph{rank weight}, we use the fact that $\Fqm$ is an $m$-dimensional vector space over its subfield $\Fq$.
	We define $\wtRk(\v)$ as the maximum number of $\Fq$-linearly independent entries of $\v$ and note that this is precisely the rank of the matrix representation of $\v$ over $\Fq$.
	The \emph{sum-rank weight} is a mixture between the Hamming weight and the rank weight.
	It is defined for vectors $\v \in \Fqm^{\n}$ which are divided into $\shots$ blocks of lengths $n_1, \dots, n_{\shots}$ and is given as
	\begin{equation}
		\wtSrk(\v) = \sum_{i=1}^{\shots} \wtRk(\shot{\v}{i}).
	\end{equation}
	Note that the sum-rank weight depends on the chosen block partition but we omit that fact from the notation for simplicity.
	The case $n_1 = n$ results in one block and recovers the rank metric, whereas the case of $n$ length-one blocks, i.e., $n_1 = \dots = n_n = 1$, corresponds to the Hamming metric \cite{Martinez-Penas2018SkewLinearizedReed}.
	The \emph{skew weight} depends on $\aut$ and on a $\PAut$-independent vector $\vecalpha \in \Fqm^n$.
	Its definition can be found in e.g.\ \cite{BoucherNouetowa2025DecodingAlgorithmSkew} and reads
	\begin{equation}
		\wtSkewIdx{\aut, \vecalpha}(\v) = \deg\bigl( \lclm_{\mystack{i = 1, \dots, n}{v_i \neq 0}} (x - \ConjAut{\alpha_i}{v_i}) \bigr).
	\end{equation}
	
	Each of the above discussed weights allows to obtain a metric on $\Fqm^n$ by letting the distance between two vectors $\v, \w \in \Fqm^n$ equal the weight of their difference $\v - \w$.
	The \emph{minimum distance} $d$ of a linear code $\mycode{C}$ with respect to a certain metric is the minimum weight of a nonzero codeword in the corresponding metric.
	It determines the error-correction capability $\left\lfloor \tfrac{d-1}{2} \right\rfloor$ for unique decoding with respect to errors in the respective metric.
	The Singleton bound relates the length $n$, the dimension $k$, and the minimum Hamming distance $\dH$ of a linear code via $d \leq n - k + 1$.
	Every code with minimum distance precisely $n - k + 1$ is called \emph{\acf{MDS}}.
	There are analog bounds for the minimum distances $\dRk$, $\dSrk$, and $\dSkewIdx{\;\aut, \vecalpha}$ with respect to rank, sum-rank, and skew metric.
	For $\Fqm$-linear codes, the upper bound is $n - k + 1$ in each case, and linear codes achieving these bounds have \emph{\ac{MRD}}, \emph{\ac{MSRD}}, and \emph{\ac{MSD}} with respect to $\aut$ and $\vecalpha$, respectively \cite{Martinez-Penas2018SkewLinearizedReed}.
	
	Later, we will consider $\Fqm$-linear \emph{isometries} for the Hamming metric, i.e., bijective linear maps that preserve the Hamming weight of vectors in $\Fqm^n$.
	The application of any such mapping to a code $\mycode{C} \subseteq \Fqm^n$ yields a \emph{monomially equivalent} code $\mycode{C}'$.
	The transition from $\mycode{C}$ to $\mycode{C}'$ corresponds to multiplying a generator matrix of $\mycode{C}$ by a monomial matrix from the right.
	Recall that a \emph{monomial matrix} $\M \in \Fqm^{n\times n}$ can be expressed as the product of a diagonal matrix $\diag(\d) \in \Fqm^{n \times n}$ with nonzero diagonal entries $\d \in \Fqm^{n}$ and a permutation matrix $\P \in \Fqm^{n \times n}$ containing precisely one one in each column and in each row and zeros elsewhere \cite[Sec.~1.7]{HuffmanPless2003FundamentalsErrorCorrecting}.

	\section{Skew evaluation codes}
	\label{sec:skew_eval_codes}
	
	This section contains definitions and basic properties of \ac{GSRS} and \ac{GLRS} codes and their special cases.
	Both code families are obtained as evaluation codes from skew polynomial rings with respect to remainder and generalized operator evaluation, respectively.
	That is why we use the term skew evaluation codes to represent all these code classes.
	Note however that the same or a similar term was used in the literature for slightly different codes \cite{BoucherUlmer2014LinearCodesUsing,LiuManganielloEtAl2015ConstructionDecodingGeneralized,Liu2016GeneralizedSkewReed}.
	We will discuss the differences after we have given the necessary background to understand them.
	
	\subsection{Generalized skew \ReSo codes and their duals}
	
	We start with the definition of \acf{GSRS} codes, which were e.g.\ studied in \cite{BoucherUlmer2014LinearCodesUsing,LiuManganielloEtAl2015ConstructionDecodingGeneralized,Liu2016GeneralizedSkewReed}.
	
	\begin{definition}[{GSRS codes}]\label{defi:GSRS}
		Consider a $\PAut$-independent vector $\vecalpha \in \Fqm^n$ and column multipliers $\veclambda \in \Fqm^n$ with only nonzero entries.
		Then, the \acf{GSRS} code of length $n$ and dimension $k$ is defined as
		\begin{equation}
			\genSkewRS{\vecalpha, \veclambda; n, k}_\aut := \bigl\{ \remev{f}{\vecalpha} \star \veclambda : f \in \SkewPolyringZeroDer \text{ with } \deg(f) < k \bigr\} \subseteq \Fqm^n.
		\end{equation}
		We call $\vecalpha$ its code locators and $\veclambda$ its column multipliers.
		Further, trivial column multipliers $\veclambda = \1$ yield a \ac{SRS} code which we denote by $\skewRS{\vecalpha; n, k}_\aut$.
	\end{definition}
	
	\ac{GSRS} codes have a generator matrix of a particularly nice form.
	Namely,
    \begin{equation}
        \G = \skewVandermonde{k}{\vecalpha} \cdot \diag(\veclambda):=\begin{pmatrix}
			\alpha_{1}^{\doubleBrackAut{0}} & \cdots & \alpha_{n}^{\doubleBrackAut{0}} \\
			\alpha_{1}^{\doubleBrackAut{1}} & \cdots & \alpha_{n}^{\doubleBrackAut{1}} \\
			\vdots & \vdots & \vdots \\
			\alpha_{1}^{\doubleBrackAut{k - 1}} & \cdots & \alpha_{n}^{\doubleBrackAut{k - 1}}
		\end{pmatrix}\cdot \diag(\veclambda)
    \end{equation}
    generates $\genSkewRS{\vecalpha, \veclambda; n, k}_\aut$, where $\diag(\veclambda)$ denotes the diagonal matrix having $\lambda_1,\dots,\lambda_n$ on the diagonal.
    In general, we call $\skewVandermonde{k}{\vecalpha}$ a \emph{skew Vandermonde matrix} and it has full rank $k \leq n$ if the entries of $\vecalpha$ are $\PAut$-independent (see \cite[Thm.~4.5]{LamLeroy1988VandermondeWronskianMatrices} and \cref{lem:p-indep_conditions}).

	It is known that \ac{GSRS} codes have minimum Hamming distance $d = n - k + 1$ and are thus \ac{MDS} codes~\cite[Thm.~4.1.4]{Liu2016GeneralizedSkewReed}.
	Particular parameter choices result in properties with respect to the skew metric, as was shown in \cite[Thm.~2]{BoucherNouetowa2025DecodingAlgorithmSkew}:
	
	\begin{lemma}[Skew-metric \ac{GSRS} codes]\label{lem:skew-metric-gsrs}
	    Let $\vecalpha, \veclambda \in \Fqm^n$ be \ac{GSRS} parameters according to \cref{defi:GSRS}.
	    Assume further that $\ConjAut{\vecalpha}{\veclambda} \in \Fqm^{n}$ is $\PAut$-independent.
		Then, the \ac{GSRS} code $\genSkewRS{\ConjAut{\vecalpha}{\veclambda}, \veclambda; n, k}_\aut$ is \ac{MSD} with respect to $\aut$ and $\vecalpha$.
	\end{lemma}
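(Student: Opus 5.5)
Fix a nonzero codeword $\c = \remev{f}{\vecbeta} \star \veclambda$ with $\vecbeta := \ConjAut{\vecalpha}{\veclambda}$ and $\deg(f) < k$. By the Singleton-type bound, it suffices to show $\wtSkewIdx{\aut,\vecalpha}(\c) \geq n-k+1$. The plan is to show that, for each $i$ with $c_i \neq 0$, the point $\ConjAut{\alpha_i}{c_i}$ appearing in the skew weight is related to the root structure of $f$. Then the degree of the lclm can be bounded via P-independence of $\vecbeta$.

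\textbf{Step 1: rewrite $c_i$ in operator form.} By \eqref{eq:connection_remainder_gen-op} with $a=\alpha_i$ and $b=\lambda_i$,
\[
\opev{f}{\lambda_i}{\alpha_i} = \remev{f}{\beta_i}\,\lambda_i = c_i ,
\]
so $\c = \opev{f}{\veclambda}{\vecalpha}$. Generalized operator evaluation is $\Fq$-linear in $b$ for fixed $a$. Hence, for fixed $\alpha$, the map $b \mapsto \opev{f}{b}{\alpha}$ is $\Fq$-linear. By \eqref{eq:connection_remainder_gen-op}, its kernel corresponds to the roots of $f$ in the conjugacy class of $\alpha$.

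\textbf{Step 2: the key identity.} For $c_i \neq 0$ I will try to prove
\[
\ConjAut{\alpha_i}{c_i} = \ConjAut{\alpha_i}{\opev{f}{\lambda_i}{\alpha_i}} .
\]
The intended interpretation, via the product theorem for remainder evaluation ($\remev{(gh)}{a} = \remev{g}{\ConjAut{a}{\remev{h}{a}}}\,\remev{h}{a}$ when $\remev{h}{a}\neq 0$), is that this is the conjugate of $\beta_i$ that appears when $f$ is multiplied by polynomials. Let $g = \lclm_{c_i\neq 0}\bigl(x - \ConjAut{\alpha_i}{c_i}\bigr)$ have degree $t$. I aim to show that $g f$ vanishes on all of $\vecbeta$:
\begin{itemize}
\item at indices with $c_i = 0$, because $f$ already vanishes at $\beta_i$;
\item at indices with $c_i \neq 0$, by the product rule together with $\remev{g}{\ConjAut{\beta_i}{\remev{f}{\beta_i}}} = 0$.
\end{itemize}
For the second point, check that $\ConjAut{\beta_i}{\remev{f}{\beta_i}} = \ConjAut{\alpha_i}{\lambda_i \remev{f}{\beta_i}} = \ConjAut{\alpha_i}{c_i}$. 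This uses multiplicativity of conjugation, $\ConjAut{\ConjAut{a}{b}}{c} = \ConjAut{a}{cb}$, which is routine. So $g f$ lies in the left ideal generated by $\lclm_j(x-\beta_j)$.

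\textbf{Step 3: counting degrees.} Since $\vecbeta$ is $\PAut$-independent, that lclm has degree $n$. Since $gf \neq 0$ (because $f\neq 0$), we get $t + \deg f \geq n$, hence $t \geq n - k + 1$. This is exactly the MSD bound.

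\textbf{Main obstacle.} The delicate point is the product rule for remainder evaluation, specifically its validity and exact form with the sided conventions used here (right division by $x-a$). I would verify it first from the expression $\remev{f}{a} = \sum_i f_i a^{\doubleBrackAut{i}}$ and the Lam–Leroy product formula. If the conventions flip, I would replace $g f$ by $f g$ and adjust the conjugates accordingly.
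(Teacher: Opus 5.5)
Your proof is correct. The paper gives no argument of its own for this lemma; it only imports it from \cite[Thm.~2]{BoucherNouetowa2025DecodingAlgorithmSkew}. Your proposal is therefore a self-contained replacement rather than a reconstruction of an in-paper proof.

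The step you flagged as the main obstacle holds exactly as you wrote it, with right division by $x-a$. Write $h = p\,(x-a) + r$ with $r = \remev{h}{a} \neq 0$. Then $\remev{(gh)}{a} = \remev{(gr)}{a} = \sum_i g_i \aut^i(r)\, a^{\doubleBrackAut{i}}$. The telescoping identity $(\aut(r) r^{-1})^{\doubleBrackAut{i}} = \aut^i(r) r^{-1}$ turns this into $\remev{g}{\ConjAut{a}{r}}\, r$, so no switch to $fg$ is needed. Combine this with $\ConjAut{\ConjAut{\alpha_i}{\lambda_i}}{d} = \ConjAut{\alpha_i}{d\lambda_i}$ and degree additivity in $\SkewPolyringZeroDer$. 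This gives $\wtSkewIdx{\aut,\vecalpha}(\c) \geq n - \deg f \geq n-k+1$, and the Singleton-type bound stated in the preliminaries finishes the argument.

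One could instead deduce the result from external machinery, e.g.\ the skew/sum-rank isometry of \cite{Martinez-Penas2018SkewLinearizedReed}, which your Step~1 observation $\c = \opev{f}{\veclambda}{\vecalpha}$ points toward. Your root-counting route is more elementary. It also makes visible that the lower bound uses only the $\PAut$-independence of $\boldsymbol{\beta} = \ConjAut{\vecalpha}{\veclambda}$. The $\PAut$-independence of $\vecalpha$ enters only through the definition of the weight.

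Two small cleanups:
\begin{itemize}
\item The kernel remark in Step~1 and the displayed ``key identity'' in Step~2 are tautological, since $c_i = \opev{f}{\lambda_i}{\alpha_i}$. They can be dropped.
\item Add that the same count with $g = 1$ shows that $\c = \0$ forces $\deg f \geq n \geq k$. Hence the encoding map is injective and the code really has dimension $k$, which the MSD claim presupposes.
\end{itemize}
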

	
	Recently, a formula for the dual of an \ac{SRS} code was given in \cite[Thm.~3]{BoucherNouetowa2025DecodingAlgorithmSkew}.
	We can extend this result to generalized codes with column multipliers as follows:
	
	\begin{lemma}[{GSRS duals}]
		\label{lem:gsrs_duals}
		Consider \ac{GSRS} parameters $\vecalpha, \veclambda \in \Fqm^n$ as in \cref{defi:GSRS}.
		Moreover, let $\v \in \Fqm^n$ be a nonzero vector in $\genSkewRS{\vecalpha, \veclambda; n, n - 1}_{\aut}^{\perp}$ and assume that $\ConjAutinv{\autinv(\vecalpha)}{\v \star \veclambda}$ is $\PAutinv$-independent.
		Then,
		\begin{equation}
			\genSkewRS{\vecalpha, \veclambda; n, k}_{\aut}^{\perp} = \genSkewRS{\ConjAutinv{\autinv(\vecalpha)}{\v \star \veclambda}, \v; n, n - k}_{\autinv}.
		\end{equation}
	\end{lemma}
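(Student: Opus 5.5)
The plan is to reduce the statement to the case of trivial column multipliers, i.e.\ to the known description of \ac{SRS} duals in \cite[Thm.~3]{BoucherNouetowa2025DecodingAlgorithmSkew}, by tracking how column multipliers behave under duality. I read that theorem in the following form: for a $\PAut$-independent $\vecalpha$ and a nonzero $\boldsymbol{u} \in \skewRS{\vecalpha; n, n-1}_{\aut}^{\perp}$ such that $\ConjAutinv{\autinv(\vecalpha)}{\boldsymbol{u}}$ is $\PAutinv$-independent, we have
\begin{equation}
\skewRS{\vecalpha; n, k}_{\aut}^{\perp} = \genSkewRS{\ConjAutinv{\autinv(\vecalpha)}{\boldsymbol{u}}, \boldsymbol{u}; n, n-k}_{\autinv}.
\end{equation}
Matching this form against the cited statement is the step I expect to be the main obstacle.

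First I will record two elementary facts. Since \ac{GSRS} codes are \ac{MDS}, the code $\genSkewRS{\vecalpha, \veclambda; n, n-1}_{\aut}$ is \ac{MDS} of dimension $n-1$. Its dual is then \ac{MDS} of dimension $1$, so it is spanned by a vector of Hamming weight $n$. Hence $\v$ has only nonzero entries, and so does $\v \star \veclambda$; in particular the conjugate $\ConjAutinv{\cdot}{\v \star \veclambda}$ is well defined and $\v$ is admissible as a column multiplier. Second, directly from \cref{defi:GSRS} we have $\genSkewRS{\vecalpha, \veclambda; n, k}_{\aut} = \skewRS{\vecalpha; n, k}_{\aut} \star \veclambda$. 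For any code $\mycode{C}$ and any $\veclambda$ with nonzero entries, $(\mycode{C} \star \veclambda)^{\perp} = \mycode{C}^{\perp} \star \veclambda^{-1}$, with the inverse taken entrywise. This holds because $\langle \c \star \veclambda, \w \rangle = \langle \c, \w \star \veclambda \rangle$ and the dimensions agree.

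Applying the second fact with $k = n-1$ shows that $\v \in \genSkewRS{\vecalpha, \veclambda; n, n-1}_{\aut}^{\perp}$ holds if and only if $\boldsymbol{u} := \v \star \veclambda \in \skewRS{\vecalpha; n, n-1}_{\aut}^{\perp}$. The hypothesis of the lemma is then exactly the $\PAutinv$-independence assumption required for $\boldsymbol{u}$, so the \ac{SRS} dual formula applies. Using $\genSkewRS{\boldsymbol{\beta}, \boldsymbol{u}; n, n-k}_{\autinv} \star \veclambda^{-1} = \genSkewRS{\boldsymbol{\beta}, \boldsymbol{u} \star \veclambda^{-1}; n, n-k}_{\autinv}$ and $\boldsymbol{u} \star \veclambda^{-1} = \v$, we obtain
\begin{equation}
\genSkewRS{\vecalpha, \veclambda; n, k}_{\aut}^{\perp} = \skewRS{\vecalpha; n, k}_{\aut}^{\perp} \star \veclambda^{-1} = \genSkewRS{\ConjAutinv{\autinv(\vecalpha)}{\v \star \veclambda}, \v; n, n-k}_{\autinv},
\end{equation}
which is the claim.

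The main obstacle is making sure the cited \ac{SRS} result really has the form I assumed: the conjugation by $\autinv$ applied to $\autinv(\vecalpha)$ with the dual vector as conjugating element and as column multiplier. If it is phrased differently, I will instead prove the \ac{SRS} case directly, in two steps. The right-hand side has dimension $n-k$, because its skew Vandermonde matrix has full rank by the $\PAutinv$-independence assumption and the multipliers $\v$ are nonzero. It therefore suffices to check orthogonality of the generators, i.e.\ that $\sum_j \alpha_j^{\doubleBrackAut{i}} u_j \beta_j^{\doubleBrackAutinv{t}} = 0$ for $0 \le i < k$ and $0 \le t < n-k$, where $\beta_j = \ConjAutinv{\autinv(\alpha_j)}{u_j}$. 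Expanding $\beta_j^{\doubleBrackAutinv{t}}$ via the conjugation rule should give $\autinv$-twists of $u_j$ times products of $\autinv$-powers of $\alpha_j$, and each such sum should reduce to an application of $\aut^{-t}$ to the relations $\sum_j \alpha_j^{\doubleBrackAut{i'}} u_j = 0$ for $i' < n-1$.
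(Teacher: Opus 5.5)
Your proof is correct, but it is organized differently from the paper's. The paper never passes through the SRS case. It carries $\veclambda$ through a direct check that $\G\H^\top=\0$: the conjugation by $\v\star\veclambda$ telescopes, and each inner product collapses to $\aut^{-t}\bigl(\sum_j \alpha_j^{\doubleBrackAut{i+t}} v_j\lambda_j\bigr)$, which vanishes because $\v$ is orthogonal to $\genSkewRS{\vecalpha,\veclambda;n,n-1}_\aut$. It then counts dimensions using $\PAutinv$-independence and the fact that $\v$ has no zero entries; for the latter it only points to analogous arguments in the literature.

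You instead remove the multipliers with $(\mycode{C}\star\veclambda)^\perp=\mycode{C}^\perp\star\veclambda^{-1}$. This shows that the generalized lemma is a formal consequence of the trivial-multiplier case. You also obtain the full weight of $\v$ cleanly: the dual of an $[n,n-1]$ MDS code is an $[n,1,n]$ code.

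Your worry about the exact phrasing of the cited SRS theorem is harmless. Your fallback is exactly the paper's computation specialized to $\veclambda=\1$, and it goes through. With $\boldsymbol{u}=\v\star\veclambda$ and $\beta_j=\ConjAutinv{\autinv(\alpha_j)}{u_j}$, the product telescopes to $\alpha_j^{\doubleBrackAut{i}}\beta_j^{\doubleBrackAutinv{t}}u_j=\aut^{-t}\bigl(\alpha_j^{\doubleBrackAut{i+t}}u_j\bigr)$. Since $i+t\le n-2$, the sum over $j$ vanishes.

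Your route separates the roles cleanly: multipliers are handled by a general duality fact, and the nonzero-entries claim gets an actual proof. The paper's route is self-contained in a single computation and does not depend on how the cited result is stated.
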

	
	\begin{proof}
		The codes $\genSkewRS{\vecalpha, \veclambda; n, k}_{\aut}$ and $\genSkewRS{\ConjAutinv{\autinv(\vecalpha)}{\v \star \veclambda}, \v; n, n - k}_{\autinv}$ have generator matrices of the form $\G = \skewVandermonde{k}{\vecalpha} \cdot \diag(\veclambda) \in \Fqm^{k \times n}$ and $\H = \skewVandermondeInv{n-k}{\ConjAutinv{\autinv(\vecalpha)}{\v \star \veclambda}} \cdot \diag(\v) \in \Fqm^{(n - k) \times n}$, respectively.
		We first show that $\G \H^{\top} = \0$ holds, i.e., that the \ac{GSRS} code on the right-hand side is a subcode of the wanted dual.
		This is the case if and only if for all $g = 1, \dots, k$ and all $h = 1, \dots, n - k$
		\begin{equation}
			\bigl( \vecalpha^{\doubleBrackAut{g - 1}} \star \veclambda \bigr) \cdot \bigl( \ConjAutinv{\autinv(\vecalpha)}{\v \star \veclambda}^{\doubleBrackAutinv{h - 1}} \star \v \bigr)^{\top} = 0
		\end{equation}
		applies.
		This equality can be rephrased as
		\begin{align}
		    0 &= \sum_{j = 1}^{n} \alpha_{j}^{\doubleBrackAut{g - 1}} \autinv(\alpha_j)^{\doubleBrackAutinv{h - 1}} \autinv(v_j \lambda_j)^{\doubleBrackAutinv{h - 1}} (v_{j}^{-1} \lambda_{j}^{-1})^{\doubleBrackAutinv{h - 1}} v_j \lambda_j
			\\
		    &= \sum_{j = 1}^{n} \aut^{- (h - 1)}\bigl( \alpha_{j}^{\doubleBrackAut{g + h - 2}} \bigr) \aut^{- (h - 1)}(v_j \lambda_j)
			= \aut^{- (h - 1)}\Bigl( \sum_{j = 1}^{n} \alpha_{j}^{\doubleBrackAut{g + h - 2}} v_j \lambda_j \Bigr)
		\end{align}
		and is satisfied if and only if $\sum_{j = 1}^{n} \alpha_{j}^{\doubleBrackAut{e - 1}} v_j \lambda_j = 0$ holds for all $e = 1, \dots, n - 1$.
		In other words, it holds for any $\v \in \genSkewRS{\vecalpha, \veclambda; n, n - 1}_{\aut}^{\perp}$ as assumed in the statement.
		
		The equality of the two codes follows from the assumption that the entries of $\ConjAutinv{\autinv(\vecalpha)}{\v \star \veclambda}$ are $\PAutinv$-independent which is sufficient for the skew Vandermonde matrix $\skewVandermondeInv{n - k}{\ConjAutinv{\autinv(\vecalpha)}{\v \star \veclambda}}$ to have rank $n-k$.
		Further, it can be shown with a similar argument as in \cite[Thm.~4]{Martinez-PenasKschischang2019ReliableSecureMultishot} and \cite[Lem.~2]{BoucherNouetowa2025DecodingAlgorithmSkew} that every $\v \in \genSkewRS{\vecalpha, \veclambda; n, n - 1}_{\aut}^{\perp}$ contains only nonzero elements.
		This ensures that any generator matrix of $\genSkewRS{\ConjAutinv{\autinv(\vecalpha)}{\v \star \veclambda}, \v; n, n - k}_{\autinv}$ has rank $n-k$, i.e., the code on the right-hand side has dimension $n-k$.
		\qed
	\end{proof}

	\subsection{Generalized linearized \ReSo codes and their duals}
	
	\Ac{LRS} codes were introduced in \cite[Def.~31]{Martinez-Penas2018SkewLinearizedReed} and generalized to block multipliers in \cite{HoermannBartzEtAl2023DistinguishingRecoveringGeneralized}.
	We extend the definition to arbitrary nonzero column multipliers:
	
	\begin{definition}[GLRS codes]
		\label{def:glrs_code}
		Consider a vector $\b \in \Fqm^{\n}$ with $\wtSrk(\b) = n$ and let $\veclambda \in \Fqm^{\n}$ contain only nonzero elements.
		Further, let $a_1, \dots, a_{\shots} \in \Fqm$ belong to distinct nontrivial
		$\aut$-conjugacy classes of $\Fqm$ and let $\a \in \Fqm^{\n}$ be the corresponding $\n$-vector as defined above \cref{lem:p-indep_conditions}. %
		Then, the {\acf{GLRS} code} of length $n$ and dimension $k$ is
		\begin{equation}
			\GLRS{\b, \a, \veclambda; n, k}_\aut \defeq
			\bigl\{ \opev{f}{\b}{\a} \star \veclambda :
			f \in \SkewPolyringZeroDer \text{ with } \deg(f) < k \bigr\}
			\subseteq \Fqm^{\n}.
		\end{equation}
		We call $\b$ its code locators, $\a$ its evaluation parameters, and $\veclambda$ its column multipliers.
		If $\veclambda = \1$, we obtain an \ac{LRS} code and denote it by $\linRS{\b, \a; n, k}_\aut$.
	\end{definition}

	The code $\GLRS{\b, \a, \veclambda; n, k}_\aut$ has a generator matrix of the form
	\begin{equation}\label{eq:GLRS_gen_mat}
		\G = \opVandermonde{k}{\b}{\a} \cdot \diag(\veclambda)
	\end{equation}
	where the blocks of the \emph{generalized Moore matrix} $\opVandermonde{k}{\b}{\a} \in \Fqm^{k \times \n}$ are defined as
	\begin{equation}
		\opVandermonde{k}{\shot{\b}{i}}{a_i} =
	    \begin{pmatrix}
	        \subShot{b}{1}{i} & \dots & \subShot{b}{\lenShot{i}}{i} \\
	        \opfull{a_i}{\subShot{b}{1}{i}} & \dots & \opfull{a_i}{\subShot{b}{\lenShot{i}}{i}} \\
	        \vdots & \dots & \vdots \\
	        \opfullexp{a_i}{\subShot{b}{1}{i}}{k-1} & \dots & \opfullexp{a_i}{\subShot{b}{\lenShot{i}}{i}}{k-1}
	    \end{pmatrix}
		\quad \text{for all } i = 1, \dots, \shots.
	\end{equation}
	
	Since the multiplication of each block of a code with a nonzero $\Fqm$-element or with a full-rank $\Fq$-matrix is a sum-rank isometry \cite{Martinez-Penas2020HammingSimplexCodes,AlfaranoLobilloEtAl2022SumRankProduct}, \ac{GLRS} codes with the following parameter restrictions are \ac{MSRD}:
	
	\begin{lemma}[Sum-rank-metric \ac{GLRS} codes]
		\label{lem:sr-glrs-codes}
		Consider \ac{GLRS} parameters $\a, \b, \veclambda \in \Fqm^{\n}$ as in \cref{def:glrs_code}.
	    If $\subShot{\lambda}{1}{i} = \dots = \subShot{\lambda}{\lenShot{i}}{i}$ applies for all $i = 1, \dots, \shots$ or if $\veclambda \in \Fq^n$, the code $\GLRS{\b, \a, \veclambda; n, k}_\aut$ is \ac{MSRD}.
	\end{lemma}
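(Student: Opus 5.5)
We reduce the claim to the known \ac{MSRD} property of \ac{LRS} codes, using that sum-rank isometries preserve both the code dimension and the minimum sum-rank distance. The starting point is the result of \cite{Martinez-Penas2018SkewLinearizedReed}, which states that $\linRS{\b, \a; n, k}_\aut$ is \ac{MSRD} whenever $\wtSrk(\b) = n$ and $a_1, \dots, a_{\shots}$ lie in distinct nontrivial $\aut$-conjugacy classes. These are exactly the hypotheses of \cref{def:glrs_code}. With the generator matrix \eqref{eq:GLRS_gen_mat}, we have
\begin{equation}
	\GLRS{\b, \a, \veclambda; n, k}_\aut = \linRS{\b, \a; n, k}_\aut \cdot \diag(\veclambda).
\end{equation}
It therefore suffices to show that, in either case of the statement, right multiplication by $\diag(\veclambda)$ is an $\Fqm$-linear sum-rank isometry of $\Fqm^{\n}$. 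An $\Fqm$-linear bijection maps a $k$-dimensional code to a $k$-dimensional code. If it also preserves $\wtSrk$, the minimum sum-rank distance stays $n-k+1$, and the image is \ac{MSRD}.

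\emph{Case 1: constant multipliers within each block.} Suppose $\subShot{\lambda}{j}{i} = \lambda^{(i)}$ for all $j$. Then $\diag(\veclambda)$ multiplies the $i$-th block $\shot{\c}{i}$ by the scalar $\lambda^{(i)} \in \Fqm^{\ast}$. Multiplication by a nonzero scalar is an $\Fq$-linear bijection of $\Fqm$, so it maps $\Fq$-linearly independent entries to $\Fq$-linearly independent entries. Hence $\wtRk(\lambda^{(i)} \shot{\c}{i}) = \wtRk(\shot{\c}{i})$. Summing over $i$ shows that $\wtSrk$ is preserved.

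\emph{Case 2: $\veclambda \in \Fq^n$.} Each block is now multiplied on the right by $\diag(\shot{\veclambda}{i})$, which is an invertible $\Fq$-matrix. Write $\shot{\c}{i}$ as an $m \times \lenShot{i}$ matrix over $\Fq$ with respect to a fixed $\Fq$-basis of $\Fqm$. Right multiplication by an invertible $\Fq$-matrix acts on the columns of this matrix representation. It therefore leaves the $\Fq$-rank unchanged, so $\wtRk$ is preserved blockwise, and hence $\wtSrk$ is preserved.

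Both isometry facts are the ones cited from \cite{Martinez-Penas2020HammingSimplexCodes,AlfaranoLobilloEtAl2022SumRankProduct}, and they are elementary to check directly. The only point that needs care is making sure the \ac{LRS} \ac{MSRD} theorem applies to the given parameters. In particular, it requires $k \le n$ and that the parameters of \cref{def:glrs_code} coincide with the hypotheses of \cite[Thm.~4]{Martinez-Penas2018SkewLinearizedReed}, namely $\b$ of full sum-rank and representatives from distinct nontrivial classes. I expect this matching to be the only real obstacle. The remaining steps are routine.
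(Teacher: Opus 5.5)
Your proposal is correct and takes the same approach as the paper, which justifies the lemma only in the sentence before it: blockwise multiplication by a nonzero $\Fqm$-scalar, or by an invertible $\Fq$-matrix, is a sum-rank isometry, so it carries the \ac{MSRD} \ac{LRS} code of \cite{Martinez-Penas2018SkewLinearizedReed} to an \ac{MSRD} \ac{GLRS} code. You make the two isometry checks explicit, and the point you flag—that the cited \ac{LRS} result must cover full-sum-rank locators that differ from block to block—is also left to the literature in the paper.
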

	
	We can now derive a statement about \ac{GLRS} duals similar to \cref{lem:gsrs_duals} in the \ac{GSRS} setting.
	We obtain the following:
	
	\begin{lemma}[GLRS duals]
		\label{lem:glrs_duals}
		Let $\a, \b, \veclambda \in \Fqm^{\n}$ be \ac{GLRS} parameters according to \cref{def:glrs_code}.
		Further, choose any nonzero vector $\v \in \GLRS{\b, \a, \veclambda; n, n - 1}_{\aut}^{\perp}$ and assume that $\veclambda \star \v$ has sum-rank weight $n$.
		Then, it holds
	    \begin{equation}
	        \GLRS{\b, \a, \veclambda; n, k}_{\aut}^{\perp} = \GLRS{\v \star \veclambda, \autinv(\a), \veclambda^{-1}; n, n - k}_{\autinv}.
	    \end{equation}
	\end{lemma}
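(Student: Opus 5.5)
We argue as in \cref{lem:gsrs_duals}: first an orthogonality check, then a dimension count. Write $\G = \opVandermonde{k}{\b}{\a} \cdot \diag(\veclambda)$ for the generator matrix \eqref{eq:GLRS_gen_mat} of the left-hand code. The right-hand code $\GLRS{\v \star \veclambda, \autinv(\a), \veclambda^{-1}; n, n - k}_{\autinv}$ then has generator matrix
\begin{equation}
\H = \opVandermondeInv{n-k}{\v \star \veclambda}{\autinv(\a)} \cdot \diag(\veclambda^{-1}).
\end{equation}
We show $\G \H^\top = \0$ and then that $\H$ has full rank $n-k$. Together these give equality of the codes.

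\textbf{Orthogonality.} For $g = 1, \dots, k$ and $h = 1, \dots, n-k$, the $(g,h)$ entry of $\G\H^\top$ is
\begin{equation}
\sum_{i=1}^{\shots} \sum_{j=1}^{\lenShot{i}} \aut^{g-1}\bigl(\subShot{b}{j}{i}\bigr)\, a_i^{\doubleBrackAut{g-1}}\, \subShot{\lambda}{j}{i}\, \aut^{-(h-1)}\bigl(\subShot{v}{j}{i}\subShot{\lambda}{j}{i}\bigr)\, \autinv(a_i)^{\doubleBrackAutinv{h-1}}\, \bigl(\subShot{\lambda}{j}{i}\bigr)^{-1}.
\end{equation}
The two column-multiplier factors $\subShot{\lambda}{j}{i}$ and $(\subShot{\lambda}{j}{i})^{-1}$ cancel. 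Next we use the identity
\begin{equation}
\autinv(a)^{\doubleBrackAutinv{h-1}} = \prod_{t=1}^{h-1}\aut^{-t}(a) = \aut^{-(h-1)}\bigl(a^{\doubleBrackAut{h-1}}\bigr).
\end{equation}
Applying $\aut^{h-1}$ to the whole sum then turns each summand into
\begin{equation}
\aut^{g+h-2}\bigl(\subShot{b}{j}{i}\bigr)\, \aut^{h-1}\bigl(a_i^{\doubleBrackAut{g-1}}\bigr)\, a_i^{\doubleBrackAut{h-1}}\, \subShot{v}{j}{i}\subShot{\lambda}{j}{i}.
\end{equation}
Since $\aut^{h-1}\bigl(a^{\doubleBrackAut{g-1}}\bigr) a^{\doubleBrackAut{h-1}} = a^{\doubleBrackAut{g+h-2}}$, the entry vanishes if and only if
\begin{equation}
\sum_{i,j} \opfullexp{a_i}{\subShot{b}{j}{i}}{e-1}\, \subShot{\lambda}{j}{i}\, \subShot{v}{j}{i} = 0, \qquad e = g+h-1 \in \{1,\dots,n-1\}.
\end{equation}
This is exactly the condition $\v \in \GLRS{\b, \a, \veclambda; n, n-1}_{\aut}^{\perp}$. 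Hence the right-hand code is contained in the dual.

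\textbf{Dimension.} The right-hand code is a valid GLRS code in the sense of \cref{def:glrs_code}, now over $\SkewPolyringZeroDer$ with $\autinv$. Four things need checking:
\begin{itemize}
\item its locators $\v \star \veclambda$ have sum-rank weight $n$, which is assumed;
\item the column multipliers $\veclambda^{-1}$ are nonzero;
\item the $\autinv(a_i)$ lie in distinct nontrivial $\autinv$-conjugacy classes;
\item $\opVandermondeInv{n-k}{\v \star \veclambda}{\autinv(\a)}$ has rank $n-k$.
\end{itemize}

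For the third point, note that $N_{\autinv}(\autinv(a)) = \autinv(N_\aut(a))$. Since $N_\aut(a) \in \Fq$ and $\aut$ fixes $\Fq$, this equals $N_\aut(a)$. The norm characterizes conjugacy classes (Hilbert 90), so distinctness and nontriviality transfer from the $a_i$ to the $\autinv(a_i)$.

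For the fourth point, we use that $\opVandermonde{k}{\b}{\a}$ has full rank whenever $\wtSrk(\b)=n$ and the $a_i$ lie in distinct nontrivial classes. This is \cite[Lem.~29]{Martinez-Penas2018SkewLinearizedReed} combined with \cref{lem:p-indep_conditions}, via relation \eqref{eq:connection_remainder_gen-op}. Applied with $\autinv$, it gives that the Moore matrix has rank $n-k$. Then $\dim = n-k = \dim \GLRS{\b,\a,\veclambda;n,k}_\aut^\perp$, so the inclusion is an equality.

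\textbf{Main obstacle.} The delicate step is the exponent bookkeeping in the orthogonality computation. We must confirm that $\aut^{h-1}$ applied to $\autinv(a)^{\doubleBrackAutinv{h-1}}$ yields $a^{\doubleBrackAut{h-1}}$ with no leftover shift, and that it combines with $\aut^{h-1}(a^{\doubleBrackAut{g-1}})$ to give $a^{\doubleBrackAut{g+h-2}}$. If the indices come out shifted, we would instead evaluate at $\autinv(\b)$-type locators and adjust the claimed parameters accordingly. The stated choice of $\autinv(\a)$ is consistent with the identity $\autinv(a)^{\doubleBrackAutinv{h-1}} = \aut^{-(h-1)}(a^{\doubleBrackAut{h-1}})$ checked above.
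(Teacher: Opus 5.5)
Your proposal is correct and follows the paper's proof. It uses the same generator matrices $\G$ and $\H$, the same reduction of $\G\H^\top=\0$ to the defining equations of $\v$ for $e=1,\dots,n-1$ (your exponent bookkeeping checks out), and the same full-rank argument for the generalized Moore matrix with locators $\v\star\veclambda$ and evaluation parameters $\autinv(\a)$. Your norm argument showing that the $\autinv(a_i)$ stay in distinct nontrivial $\autinv$-conjugacy classes fills in a step the paper only asserts.
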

	
	\begin{proof}
		This proof is an adaptation of the one of \cite[Thm.~4]{Martinez-PenasKschischang2019ReliableSecureMultishot} and follows the same strategy as the proof of \cref{lem:gsrs_duals}.
		For the orthogonality of the codes, we show that $\G = \opVandermonde{k}{\b}{\a} \cdot \diag(\veclambda)$ and $\H = \opVandermondeInv{n-k}{\v \star \veclambda}{\autinv(\a)} \cdot \diag(\veclambda^{-1})$ satisfy $\G \H^{\top} = \0$.
		In other words, we require that for all $g = 1, \dots, k$ and all $h = 1, \dots, n - k$
		\begin{equation}
			\bigl( \opfullexp{\a}{\b}{g-1} \star \veclambda \bigr) \cdot \bigl( \opfullexpinv{\autinv(\a)}{\v \star \veclambda}{h-1} \star \veclambda^{-1} \bigr)^{\top} = 0
		\end{equation}
		holds.
		With similar steps as in \cref{lem:gsrs_duals}, we rephrase this equivalently as
		\begin{equation}
			\aut^{h-1} \Bigl( \sum_{j = 1}^{n} \opfullexp{a_j}{b_j}{g+h-2} v_j \lambda_j \Bigr) = 0
		\end{equation}
		and finally as $\sum_{j = 1}^{n} \opfullexp{a_j}{b_j}{e-1} v_j \lambda_j = 0$ for all $e = 1, \dots, n-1$.
		By definition, this is true for any $\v \in \GLRS{\b, \a, \veclambda; n, n - 1}_{\aut}^{\perp}$ and we hence proved that $\GLRS{\v \star \veclambda, \autinv(\a), \veclambda^{-1}; n, n - k}_{\autinv} \subseteq \GLRS{\b, \a, \veclambda; n, k}_{\aut}^{\perp}$ applies.
		
		Since $\autinv(\a)$ contains representatives of distinct nontrivial $\autinv$-conjugacy classes and we assumed that $\wtSrk(\v \star \veclambda) = n$ applies, $\opVandermondeInv{n-k}{\v \star \veclambda}{\autinv(\a)}$ has full $\Fqm$-rank $n - k$ according to \cite[Thm. 2]{Martinez-Penas2018SkewLinearizedReed} and \cite[Thm. 4.5]{LamLeroy1988VandermondeWronskianMatrices}.
		Further, $\veclambda$ and thus also $\veclambda^{-1}$ contains only nonzero entries and consequently $\H = \opVandermondeInv{n-k}{\v \star \veclambda}{\autinv(\a)}$ has full rank.
		This shows that the \ac{GLRS} code on the right-hand side of the statement has dimension $n-k$, which finishes the proof.
		\qed
	\end{proof}
	
	The nonzero vector $\v$ that is arbitrarily chosen from $\GLRS{\b, \a, \veclambda; n, n - 1}_{\aut}^{\perp}$ in the above theorem actually satisfies $\wtSrk(\v) = n$, which follows from an adaptation of the proof of \cite[Thm.~4]{Martinez-PenasKschischang2019ReliableSecureMultishot}.
	Thus, the condition $\wtSrk(\v \star \veclambda) = n$ from the theorem statement is automatically satisfied when $\veclambda \in \Fqm^{n}$ satisfies any of the two conditions given in \cref{lem:sr-glrs-codes}.
	This means that the sum-rank-metric \ac{GLRS} codes defined in \cref{lem:sr-glrs-codes}, i.e., the ones with $\Fqm$-block multipliers or $\Fq$-column multipliers, are closed under duality.
	In particular, this holds for \ac{LRS} codes and generalizes the known results on \ac{LRS} duals from \cite[Thm.~4]{Martinez-PenasKschischang2019ReliableSecureMultishot}.

	\subsection{Connections between GSRS, GLRS, GRS, and Gabidulin codes}
	\label{subsec:code_connections}
	
	It is well-known that there is an isometry which maps \ac{GSRS} codes in the realm of the skew metric to \ac{GLRS} codes in the sum-rank world \cite{Martinez-Penas2018SkewLinearizedReed}.
	We now give equalities between codes of the two families which are independent of the considered metrics.
	We believe that the resulting explicit transformations of the code parameters make the results more accessible.
	
	\begin{lemma}[From GLRS to GSRS codes]
		\label{lem:glrs-to-gsrs}
		For \ac{GLRS} parameters $\a, \b, \veclambda \in \Fqm^{\n}$ as in \cref{def:glrs_code}, it holds
		\begin{equation}
		    \GLRS{\b, \a, \veclambda; n, k}_{\aut} = \genSkewRS{\ConjAut{\a}{\b}, \veclambda \star \b; n, k}_\aut.
		\end{equation}
	\end{lemma}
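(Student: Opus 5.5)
We prove the equality entrywise on codewords: the two codes come from the same set of skew polynomials $f \in \SkewPolyringZeroDer$ with $\deg(f) < k$, so it suffices to show that for every such $f$ the two evaluation vectors coincide,
\begin{equation}
	\opev{f}{\b}{\a} \star \veclambda = \remev{f}{\ConjAut{\a}{\b}} \star \veclambda \star \b .
\end{equation}
Since $\star$ acts coordinatewise, we only need the scalar identity $\opev{f}{b}{a} = \remev{f}{\ConjAut{a}{b}}\, b$ for each position $j$, with $a = a_j$ and $b = b_j$. Note that $b_j \neq 0$ because $\wtSrk(\b) = n$ forces every entry to be nonzero (a zero entry would lower the rank of its block). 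Hence $\ConjAut{a_j}{b_j}$ is defined and $\veclambda \star \b$ has only nonzero entries, so the right-hand side is a legitimate GSRS column multiplier vector.

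The scalar identity follows directly from \eqref{eq:connection_remainder_gen-op}. That relation states $\opev{f}{b}{a} b^{-1} = \remev{f}{\opfull{a}{b} b^{-1}}$, and $\opfull{a}{b} b^{-1} = \aut(b) a b^{-1} = \ConjAut{a}{b}$ by definition of $\aut$-conjugation. Multiplying by $b$ gives the claim. One can also check it by hand from the explicit formulas: $\ConjAut{a}{b}^{\doubleBrackAut{i}} = \aut^{i}(b) b^{-1} a^{\doubleBrackAut{i}}$, because $\doubleBrackAut{i}$ telescopes the product $\prod_{t<i}\aut^{t}(\aut(b)b^{-1})$. Summing against $f_i$ and multiplying by $b$ then reproduces $\sum_i f_i \aut^i(b) a^{\doubleBrackAut{i}}$.

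It remains to confirm that the right-hand side really is a GSRS code in the sense of \cref{defi:GSRS}, i.e.\ that $\ConjAut{\a}{\b}$ is $\PAut$-independent. This is the only step that is not pure computation, and we expect it to be the main obstacle. We apply \cref{lem:p-indep_conditions} with $\vecalpha = \ConjAut{\a}{\b}$, which is already grouped by class (no permutation is needed). The representatives $a_1,\dots,a_{\shots}$ lie in distinct nontrivial conjugacy classes, so they are nonzero. Each block $\shot{\b}{i}$ consists of $\Fq$-linearly independent elements, because $\wtSrk(\b) = n$ means $\wtRk(\shot{\b}{i}) = \lenShot{i}$ for every $i$. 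Condition 2 of \cref{lem:p-indep_conditions} therefore holds, and $\ConjAut{\a}{\b}$ is $\PAut$-independent. With both codes well defined and having identical codewords, the equality follows.
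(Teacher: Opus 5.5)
Your proof is correct and follows the paper's argument. Both reduce the code equality to the coordinatewise identity $\opev{f}{b}{a} = \remev{f}{\ConjAut{a}{b}}\, b$, which comes from \eqref{eq:connection_remainder_gen-op} together with $\opfull{a}{b} b^{-1} = \ConjAut{a}{b}$. Your extra checks, namely that $\b$ has nonzero entries and that $\ConjAut{\a}{\b}$ is $\PAut$-independent via \cref{lem:p-indep_conditions}, are valid and confirm that the right-hand side is a legitimate GSRS code, which the paper's proof leaves implicit.
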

	
	\begin{proof}
	    Recall that each codeword $\c \in \GLRS{\b, \a, \veclambda; n, k}_\aut$ corresponds to a skew polynomial $f \in \SkewPolyringZeroDer$ of degree less than $k$ via $\c = \opev{f}{\b}{\a} \star \veclambda$.
		Since equation \eqref{eq:connection_remainder_gen-op} implies $\lambda \opev{f}{b}{a} = \lambda \remev{f}{\ConjAut{a}{b}} b$ for any $a, b, \lambda \in \Fqm^{\ast}$, the statement follows.
		\qed
	\end{proof}
	
	\begin{lemma}[From GSRS to GLRS codes]
		\label{lem:gsrs-to-glrs}
		Consider \ac{GSRS} parameters $\vecalpha, \veclambda \in \Fqm^{n}$ satisfying the conditions of \cref{defi:GSRS}.
		Let further $\a, \b \in \Fqm^{\n}$ be the outcome of \cref{lem:p-indep_conditions} such that $\pi(\vecalpha) = \ConjAut{\a}{\b}$ applies.
		Then,
		\begin{equation}
		    \genSkewRS{\pi(\vecalpha), \veclambda; n, k}_\aut = \GLRS{\b, \a, \veclambda \star \b^{-1}; n, k}_\aut.
		\end{equation}
	\end{lemma}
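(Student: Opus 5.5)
The plan is to reduce the claim to \cref{lem:glrs-to-gsrs}. That lemma, applied to the \ac{GLRS} parameters $\b$, $\a$ and $\veclambda \star \b^{-1}$, gives
\begin{equation}
	\GLRS{\b, \a, \veclambda \star \b^{-1}; n, k}_\aut = \genSkewRS{\ConjAut{\a}{\b}, \veclambda \star \b^{-1} \star \b; n, k}_\aut .
\end{equation}
The right-hand side equals $\genSkewRS{\pi(\vecalpha), \veclambda; n, k}_\aut$, because $\ConjAut{\a}{\b} = \pi(\vecalpha)$ by construction and $\b^{-1} \star \b = \1$. Equivalently, one can argue codeword by codeword: relation \eqref{eq:connection_remainder_gen-op} gives $\remev{f}{\ConjAut{a}{b}} \lambda = \opev{f}{b}{a} \, b^{-1} \lambda$ for every skew polynomial $f$ and all nonzero $a, b$. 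Both codes are therefore images of the same set of $f$ with $\deg(f) < k$, and corresponding codewords coincide.

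The real work is checking that $\b$, $\a$ and $\veclambda \star \b^{-1}$ are admissible parameters in the sense of \cref{def:glrs_code}. I would proceed in four steps.

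\begin{enumerate}
	\item \emph{Nonzero entries of $\b$.} Since $\pi(\vecalpha) = \ConjAut{\a}{\b}$ requires $\b$ to have nonzero entries, $\b^{-1}$ is defined. Hence $\veclambda \star \b^{-1}$ has only nonzero entries, because $\veclambda$ does.
	\item \emph{Sum-rank weight of $\b$.} Since $\vecalpha$ is $\PAut$-independent, and so is $\pi(\vecalpha)$ as the lclm is permutation-invariant, \cref{lem:p-indep_conditions} shows that each block $\shot{\b}{i}$ consists of $\Fq$-linearly independent elements. Thus $\wtRk(\shot{\b}{i}) = \lenShot{i}$, and summing over blocks gives $\wtSrk(\b) = n$.
	\item \emph{Distinct classes for $\a$.} The representatives $a_1, \dots, a_\shots$ lie in distinct conjugacy classes by construction.
	\item \emph{Nontrivial classes for $\a$.} \cref{lem:p-indep_conditions} gives that $\a$ has only nonzero entries. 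The trivial conjugacy class is $\{0\}$, since $\ConjAut{0}{c} = 0$, so each $a_i$ lies in a nontrivial class.
\end{enumerate}

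I expect the only subtle point to be this parameter bookkeeping, in particular matching the paper's notion of ``nontrivial'' conjugacy class with the class of $0$. The identity itself is a one-line consequence of \cref{lem:glrs-to-gsrs}.
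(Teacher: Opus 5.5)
Your proposal is correct and follows essentially the paper's route. The paper's proof simply says the claim ``follows from the same argument as \cref{lem:glrs-to-gsrs}'', namely the codeword-by-codeword identity obtained from \eqref{eq:connection_remainder_gen-op}. Your additional check that $\b$, $\a$ and $\veclambda \star \b^{-1}$ are admissible parameters in the sense of \cref{def:glrs_code} is sound, including reading the nontrivial classes as those other than $\{0\}$; the paper leaves this bookkeeping implicit.
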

	
	\begin{proof}
	    This follows from the same argument as \cref{lem:glrs-to-gsrs}.
		\qed
	\end{proof}
	
	The permutation $\pi$ in the above \cref{lem:gsrs-to-glrs} is no restriction for the choice of code locators $\vecalpha$ of the \ac{GSRS} code that is transformed.
	In fact, recall from \cref{rem:sum-rank_blocks} that $\pi$ only groups the entries of $\vecalpha$ by conjugacy class which is convenient for the \ac{GLRS} representation and the related sum-rank metric.
	It is however not necessary: the permuted \ac{GLRS} code with code locators $\pi^{-1}(\b)$ and evaluation parameters $\pi^{-1}(\a)$ has the same Hamming-metric properties.
	Further, its sum-rank properties carry over if the sum-rank weight is defined in terms of suitable index sets instead of a block partition $\n$.
	
	Next, we deal with prominent special cases of \ac{GSRS} and \ac{GLRS} codes: \ac{GRS} and Gabidulin codes.
	We state comprehensively how they can be obtained from the \ac{GSRS} and the \ac{GLRS} framework and thus give a good starting point to compare our distinguishing results to known distinguishers for these code families later.
	
    \paragraph{GRS codes.}
	Let us first recall the definition of a \acf{GRS} code of length $n$ and dimension $k$, which reads
	\begin{align}
	    \GRS{\b, \veclambda; n, k} := &\left\{ f(\b) \star \veclambda : f \in \Polyring \text{ with } \deg(f) < k \right\} \subseteq \Fqm^n
	\end{align}
	for distinct code locators $\b \in \Fqm^n$ and nonzero column multipliers $\veclambda \in \Fqm^n$.
	Observe that this definition employs conventional polynomials from $\Polyring$ and their standard evaluation.
	We can express $\GRS{\b, \veclambda; n, k}$ as a \ac{GSRS} and as a \ac{GLRS} code with block partition $\n = \1 \in \Fqm^{n}$ as follows:
	\begin{equation}\label{eq:grs_as_gsrs_glrs}
	    \GRS{\b, \veclambda; n, k}
	    = \genSkewRS{\b, \veclambda; n, k}_{\id}
	    = \GLRS{\1, \b, \veclambda; n, k}_{\id}.
	\end{equation}
	The chosen automorphism is the identity and the equalities follow from expressing the remainder evaluation and the generalized operator evaluation of any $f(x) = \sum_i f_i x^i \in \Fqm[x;\id]$ as
	\begin{gather}
		\remev{f}{b} = \sum\nolimits_i f_i b^{\doubleBrackWIndex{i}{\id}} = \sum\nolimits_i f_i b^i
		\\
		\text{and} \quad \opev{f}{1}{b} = \sum\nolimits_i f_i \opexp{1}{b}{i} = \sum\nolimits_i f_i \id^{i}(1) b^{\doubleBrackWIndex{i}{\id}} = \sum\nolimits_i f_i b^i
	\end{gather}
	for any $b \in \Fqm$, respectively.
	The code locators $\b$ of the \ac{GRS} code and its \ac{GSRS} representation actually become the evaluation parameters in its \ac{GLRS} representation.
	In contrast, the code locators of Gabidulin codes stay code locators also in the \ac{GLRS} representation, as we will see shortly.
	
	Observe also that the distinctness of $b_1, \dots, b_n$ translates to the required conditions in the \ac{GSRS} and the \ac{GLRS} case:
	Since the identity creates $q^m - 1$ nontrivial conjugacy classes of cardinality one, every vector with distinct nonzero entries such as $\b$ is $\PIdx{\id}$-independent.
	Further, $b_1, \dots, b_n$ are used as evaluation parameters in the \ac{GLRS} setting where they represent different nontrivial conjugacy classes due to the reasoning above.
	In addition, the all-one vector has sum-rank weight $n$ as required because each block has length precisely one.
	
    \paragraph{Gabidulin codes.}
	In the classical definition of Gabidulin codes, we pick $\Fq$-linearly independent code locators $\b \in \Fqm^n$ and define the respective code of length $n$ and dimension $k$ as
	\begin{align}\label{eq:gab-def}
	    \Gab{\b; n, k} :=& \bigl\{ f(\b) : f \in \Linpolyring \text{ with } \qdeg(f) < k \bigr\} \subseteq \Fqm^n,
	\end{align}
	where $\Linpolyring$ is the ring of linearized polynomials.
	A linearized polynomial has the form $f(x) = \sum_i f_i x^{q^i} = \sum_i f_i x^{\brackWIndex{i}{\frob}}$ with finitely many $\Fqm$-coefficients, and its evaluation at any $b \in \Fqm$ is given as $f(b) = \sum_i f_i b^{\brackWIndex{i}{\frob}}$.
	The $q$-degree $\qdeg(f)$ equals the maximum $i$ for which $f_i \neq 0$ applies and $-\infty$ if $f = 0$.

	We use the Frobenius automorphism $\sigma$ to express $\Gab{\b; n, k}$ as \ac{GSRS} and as \ac{GLRS} code with block partition $\n = (n) \in \Fqm$.
	Namely,
	\begin{align}\label{eq:gab_as_gsrs_glrs}
	     \Gab{\b; n, k}
	     = \genSkewRS{\b^{q-1}, \b; n,k}_{\sigma}
	     = \GLRS{\b, (1), \1; n, k}_{\sigma}.
	\end{align}
	Similar to the case of \ac{GRS} codes above, the equalities follow from the following facts about remainder and generalized operator evaluation of a skew polynomial $f(x) = \sum_i f_i x^i \in \Fqm[x;\frob]$ at any $b \in \Fqm$:
	\begin{gather}
	    b \remev{f}{b^{q-1}} = b \sum\nolimits_i f_i b^{(q-1) \cdot \doubleBrackWIndex{i}{\frob}}
	    = \sum\nolimits_i f_i b b^{q^i -1} = \sum\nolimits_i f_i b^{\brackWIndex{i}{\frob}}
		\\
		\text{and} \quad \opev{f}{b}{1} = \sum\nolimits_i f_i \opexp{b}{1}{i} = \sum\nolimits_i f_i \frob^i(b) 1^{\doubleBrackWIndex{i}{\frob}} = \sum\nolimits_i f_i b^{\brackWIndex{i}{\frob}}.
	\end{gather}
	Observe that the column multipliers in the \ac{GSRS} representation cannot be chosen freely in the Gabidulin scenario but they depend heavily on the code locators.

	The condition $\wtRk(\b) = n$ implies that the column multipliers of the \ac{GSRS} representation are nonzero.
	Further, $\b^{q-1} = \ConjFrob{\1}{\b}$ applies and thus $\b^{q-1}$ is $\PIdx{\frob}$-independent according to \cref{lem:p-indep_conditions} as demanded.
	The requirements for the respective \ac{GLRS} parameters are clearly satisfied.
	
	\begin{remark}
		Note that other works in the literature use the term skew evaluation codes slightly differently.
		In \cite{LiuManganielloEtAl2015ConstructionDecodingGeneralized,Liu2016GeneralizedSkewReed}, \ac{GSE} codes are a superclass of \ac{GSRS} codes.
		The relaxation lies in the fact that the code locators $\vecalpha \in \Fqm^{n}$ do not need to be $\PAut$-independent but only have to ensure that $\skewVandermonde{k}{\vecalpha}$ has full rank.
		In \cite{BoucherUlmer2014LinearCodesUsing}, remainder evaluation skew codes are the same as the \ac{GSE} codes from \cite{LiuManganielloEtAl2015ConstructionDecodingGeneralized,Liu2016GeneralizedSkewReed} but without column multipliers.
		There, also operator evaluation skew codes are studied.
		They are related to generalized operator evaluation but are not exactly the same as \ac{LRS} codes.
		In particular, their generator matrix is the Wronskian matrix and not the generalized Moore matrix.
	\end{remark}

	\subsection{Code distinguishability and McEliece-like cryptosystems}
	\label{sec:distinguishers_and_crypto}
	
	In this work, we study the algebraic structure of skew evaluation codes with a particular focus on understanding if these code families are efficiently distinguishable from random codes.
	A distinguisher for \ac{GSRS}/\ac{GLRS} codes is an algorithm that is capable of correctly answering the following question with a higher success probability than random guessing:
	Was a given matrix $\G \in \Fqm^{k \times n}$ drawn randomly from the set of all full-rank matrices in $\Fqm^{k \times n}$ or from the set of all generator matrices of \ac{GSRS}/\ac{GLRS} codes of length $n$ and dimension $k$?
	Remark that the notion of a distinguisher can be formalized in the form of a game, and put into relation with attackers of the underlying McEliece-like problem.
	But since the formalism seems not to be beneficial for the presentation of the present work, we point the interested reader to \cite[Sec.~II]{FaugereGauthier-UmanaEtAl2013DistinguisherHighRate}, for example.
	
	Our motivation clearly comes from code-based cryptography where the indistinguishability assumption for the code class used in the McEliece-like setting is commonly seen as a requirement for security.
	To be more precise, system designers usually aim for \ac{IND-CCA2} in accordance with the standard requirements from \ac{NIST}'s \ac{PQC} standardization for general-use \acp{KEM}.
	This can be achieved, as in the case of Classic McEliece, by applying an \ac{IND-CCA2} conversion to a \ac{PKE} which is \ac{OW-CPA}.
	For example, Kobara and Imai investigate the applicability of several known generic transformations and derive alternatives tailored to the McEliece setting in the \ac{ROM} in~\cite{KobaraImai2001SemanticallySecureMceliece}.
	
	Up to our knowledge, there are two ways in which the \acs{OW-CPA} security of McEliece-like cryptosystems can be attacked: by means of key-recovery attacks, in which the adversary retrieves an equivalent secret key from the sheer knowledge of the public key, or by means of generic decoding, in which the adversary ignores the secret code structure and tries to decode an obtained ciphertext as if it originated from a random code.
	The latter can be formalized as the \ac{SDP} whose decisional version was shown to be NP-complete in~\cite{BerlekampMcElieceEtAl1978InherentIntractabilityCertain} for the binary field and in~\cite{Barg1994SomeNewNp} for arbitrary finite fields.
	The former structural attacks correspond to the distinguishability problem for the used code family.
	The hardness of these two problems implies \ac{OW-CPA} security but is however not equivalent to it.
	It is nevertheless common to study them instead of the \ac{OW-CPA} security itself.
	
	Observe that a McEliece-like system can remain secure even if an efficient distinguisher breaks the indistinguishability assumption.
	This only turns into a security risk if the distinguisher can be extended to a key-recovery attack.
	In contrast, every successful break of the \ac{OW-CPA} security yields a distinguisher for the public keys or solves the \ac{SDP}.

	\section{GRS subcodes of skew evaluation codes}
	\label{sec:grs_subcodes}
	
	In addition to their inherent algebraic structure arising from being evaluation codes of skew polynomials, generalized skew and linearized Reed–Solomon codes also exhibit a rich compositional property: they can be expressed as a direct product of several classical generalized Reed--Solomon (GRS) codes.
	We establish this decomposition in the following and also explore its shape in the special case of Gabidulin and GRS codes.
	For the ease of notation, let us first define
	\begin{equation}
		\label{eq:def_ki}
        \kIdx{i} \defeq
	    \begin{cases}
	        \left\lceil \tfrac{k}{m} \right\rceil & \text{for } i = 1, \dots, (k \mod m) \\
	        \left\lfloor \tfrac{k}{m} \right\rfloor & \text{for } i = (k \mod m) + 1, \dots, m.
	    \end{cases}
	\end{equation}
	
    \begin{theorem}[GRS subcodes of GSRS codes]\label{thm:gsrs-decomp}
        Let $\vecalpha, \veclambda \in \Fqm^n$ be \ac{GSRS} parameters according to \cref{defi:GSRS}.
	    Then, it applies
	    \begin{equation}
			\genSkewRS{\vecalpha, \veclambda; n, k}_\aut = \bigoplus_{i=1}^{m}  \genRS{N_{\aut}(\vecalpha), \vecalpha^{\doubleBrackAut{i-1}} \star \veclambda;n, \kIdx{i}}
	    \end{equation}
        with $\kIdx{1}, \dots, \kIdx{m}$ from \eqref{eq:def_ki}.
		Note that the code locators $N_{\aut}(\vecalpha)$ of the GRS codes might not all be distinct.
    \end{theorem}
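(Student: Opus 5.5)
The plan is to read off the decomposition directly from the rows of the generator matrix $\G = \skewVandermonde{k}{\vecalpha} \cdot \diag(\veclambda)$. Its rows are $\vecalpha^{\doubleBrackAut{t}} \star \veclambda$ for $t = 0, \dots, k-1$. I will sort these rows by the residue of $t$ modulo $m$ and show that each residue class spans exactly one of the claimed \ac{GRS} codes.

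The key computation is an exponent identity. Write $t = jm + r$ with $0 \le r < m$. From the definition of $\doubleBrackAut{\cdot}$ we get
\begin{equation}
\doubleBrackAut{jm + r} = \doubleBrackAut{r} + q^{rs} \doubleBrackAut{jm}.
\end{equation}
Next, $\alpha^{\doubleBrackAut{jm}} = \prod_{u=0}^{jm-1} \aut^{u}(\alpha)$. Since $\aut^{m} = \id$ on $\Fqm$, this equals $N_{\aut}(\alpha)^{j}$. Because $N_{\aut}(\alpha) \in \Fq$ is fixed by $\aut$, we obtain
\begin{equation}
\alpha^{\doubleBrackAut{jm + r}} = \alpha^{\doubleBrackAut{r}} \cdot \aut^{r}\bigl(N_{\aut}(\alpha)^{j}\bigr) = \alpha^{\doubleBrackAut{r}} \, N_{\aut}(\alpha)^{j}.
\end{equation}
Applied entrywise, the row with index $t = jm + r$ equals $N_{\aut}(\vecalpha)^{j} \star \vecalpha^{\doubleBrackAut{r}} \star \veclambda$. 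This is exactly the evaluation of the monomial $x^{j}$ at the locators $N_{\aut}(\vecalpha)$, multiplied by the column multipliers $\vecalpha^{\doubleBrackAut{r}} \star \veclambda$.

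Set $i = r+1$. The indices $t \in \{0, \dots, k-1\}$ with $t \equiv r \pmod m$ are $t = jm + r$ for $j = 0, \dots, \kIdx{i} - 1$. A short count confirms this: the residue $r$ occurs $\lceil k/m \rceil$ times if $r < (k \mod m)$ and $\lfloor k/m \rfloor$ times otherwise, which matches \eqref{eq:def_ki}. Hence the rows in class $r$ span precisely $\genRS{N_{\aut}(\vecalpha), \vecalpha^{\doubleBrackAut{i-1}} \star \veclambda; n, \kIdx{i}}$, read as the evaluation set of polynomials of degree less than $\kIdx{i}$. It follows that the GSRS code is the sum of these $m$ codes.

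The step I expect to need the most care is showing that the sum is direct, because repeated locators in $N_{\aut}(\vecalpha)$ could make an individual \ac{GRS} code degenerate. I will argue by dimension count. Since $\vecalpha$ is $\PAut$-independent, $\skewVandermonde{k}{\vecalpha}$ has rank $k$, and $\diag(\veclambda)$ is invertible, so the GSRS code has dimension $k$. Each summand is spanned by its $\kIdx{i}$ rows, so its dimension is at most $\kIdx{i}$, and $\sum_i \kIdx{i} = k$. A sum of subspaces whose dimensions add up to at most $k$ and which spans a $k$-dimensional space must have each summand of dimension exactly $\kIdx{i}$ and trivial pairwise intersections. This gives both the direct sum and the correct dimensions of the \ac{GRS} pieces, even when the norms are not all distinct.
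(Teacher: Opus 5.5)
Your proof is correct and follows essentially the paper's route: you group the rows $\vecalpha^{\doubleBrackAut{t}} \star \veclambda$ by $t$ modulo $m$, use $\alpha^{\doubleBrackAut{jm+r}} = N_{\aut}(\alpha)^{j}\alpha^{\doubleBrackAut{r}}$ to identify each group with a GRS code on the locators $N_{\aut}(\vecalpha)$, and get directness from $\sum_i \kIdx{i} = k$. Your version is slightly cleaner in two places. First, you treat general $\aut$ directly, and your exponent identity also holds when $m$ divides $t$, whereas the paper's displayed formula, which mixes $\lfloor (x-1)/m \rfloor$ with $x$ modulo $m$, is off there. Second, your dimension squeeze gives $\dim\bigl(\sum_i \mathcal{C}_i\bigr) = \sum_i \dim \mathcal{C}_i$ and $\dim \mathcal{C}_i = \kIdx{i}$ without appealing to distinct norms; that equality is the correct criterion for directness, and it is stronger than the ``trivial pairwise intersections'' you phrase it as.
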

	
	\begin{proof}
	For simplicity we assume that $\theta=\sigma$ is the Frobenius automorphisms, and note that the same arguments in the proof can be used for general $\aut$. Moreover, we can take the column mutlipliers $\veclambda$ out, prove the statement for trivial column multipliers, and multiply the component codes by them again in the end.
	
	Consider the basis of $\genSkewRS{\vecalpha, \mathbf{1}; n, k}_\aut$ consisting of $\vecalpha^{\doubleBrackAut{x}}$ for $x=0,\dots,k-1$ and
	 note that for $0<x<k$
	\begin{align}
		\alpha_j^{\doubleBrackAut{x}}&= \alpha_j^{q^{x-1}+\dots+q+1}
		= \alpha_j^{(1+q+\dots+q^{m-1})\lfloor\frac{x-1}{m}\rfloor+1+q+\dots+q^{(x-1) \mod m}}\\
	&=  N_{\aut}(\alpha_j)^{\lfloor\frac{x-1}{m}\rfloor} \alpha_j^{\doubleBrackAut{x \mod m}}.
	\end{align}
	For $x=0$ we get that
	$$\alpha_j^{\doubleBrackAut{0}}=1=N_{\aut}(\alpha_j)^0 \alpha_j^{\doubleBrackAut{0}}.$$
	Hence, for $0\leq i < k-m\lfloor \frac{k}{m} \rfloor $, the basis vectors with exponents $\doubleBrackAut{i},\doubleBrackAut{i+m},\dots,$ $\doubleBrackAut{i+m(\lceil \frac{k}{m} \rceil-1)}$ generate the code
	$$ \genRS{N_{\aut}(\vecalpha), \vecalpha^{\doubleBrackAut{i}};n,\left\lceil \tfrac{k}{m} \right\rceil}.$$
	Similarly, the basis vectors with exponents $\doubleBrackAut{i},\doubleBrackAut{i+m},\dots,\doubleBrackAut{i+m(\lfloor \frac{k}{m} \rfloor-1)}$ for $k-m\lfloor \frac{k}{m} \rfloor \leq i <m$ generate
	$$ \genRS{N_{\aut}(\vecalpha), \vecalpha^{\doubleBrackAut{i}};n,\left\lfloor \tfrac{k}{m} \right\rfloor}.$$
	This proves that all claimed \ac{GRS} subcodes are contained in $\genSkewRS{\vecalpha, \veclambda; n, k}_\aut$.
	Finally, we show that the sum of the $m$ distinct \ac{GRS} codes is direct since their dimensions sum up precisely to $k$.
	This is easy to see in case $k$ divides $m$ since $\sum_{i=1}^{m} k_i = \sum_{i=1}^{m} \tfrac{k}{m} = k$.
	If $k$ does not divide $m$, we can split up the sum in the spirit of \eqref{eq:def_ki} and use the fact that $k \mod m = k - m \left\lfloor \tfrac{k}{m} \right\rfloor$ to obtain
	\begin{align}
		\sum_{i=1}^{m} k_i
		&= (k \mod m) \cdot \left\lceil \tfrac{k}{m} \right\rceil + (m - k \mod m) \cdot \left\lfloor \tfrac{k}{m} \right\rfloor
		\\
		&= \left(k - m \left\lfloor \tfrac{k}{m} \right\rfloor\right) \cdot \left(\left\lfloor \tfrac{k}{m} \right\rfloor + 1\right) + \left(m - k + m \left\lfloor \tfrac{k}{m} \right\rfloor\right) \cdot \left\lfloor \tfrac{k}{m} \right\rfloor
		= k.
	\end{align}
	This finishes the proof.
			\qed
	     \end{proof}
	
	As mentioned in \cref{thm:gsrs-decomp}, the resulting GRS subcodes are not always proper classical GRS codes as their code locators might not all be distinct.
	In fact, we can easily see how many distinct elements $N_{\aut}(\vecalpha)$ contains:
	Recall therefore that members of the same $\aut$-conjugacy class have the same norm according to Hilbert's Theorem 90.
	Thus, apply \cref{lem:p-indep_conditions} to obtain a representation $\pi(\vecalpha) = \ConjAut{\a}{\b}$ with $\a, \b \in \Fqm^{\n}$.
	Since $\vecalpha$ is $\PAut$-independent, $a_1, \dots, a_\shots$ belong to distinct nontrivial conjugacy classes and $N_{\aut}(a_1), \dots, N_{\aut}(a_\shots)$ are distinct.
	This means that $N_{\aut}(\vecalpha) = N_{\aut}(\a)$ contains precisely $n_i$  copies of $N_{\aut}(a_i)$ for each $i = 1, \dots, \shots$.
	As a consequence, this fact ensures that each of the GRS subcodes in \cref{thm:gsrs-decomp} attains the claimed dimension.
	The Vandermonde part of their generator matrices is $\skewVandermonde{\kIdx{i}}{N_{\aut}(\a)}$ for $\kIdx{i} \in \{\lceil \frac{k}{m} \rceil, \lfloor \frac{k}{m} \rfloor\}$ and has rank $\min(\shots, \kIdx{i})$.
	The minimum is ensured to always be $\kIdx{i}$ since $\tfrac{k}{m} \leq \tfrac{n}{m} \leq \tfrac{\shots m}{m} \leq \shots$.

    \begin{remark}\label{rem:k<m}
        In the case $k\leq m$, the decomposition presented above has exactly $k$ GRS components, all of which have dimension one.
    \end{remark}
	
	In the decomposition above, the component GRS codes share a common set of code locators, all contained in the base field $\Fq$, but differ in their respective sets of column multipliers. From a cryptographic perspective, the presence of these particular column multipliers inhibits straightforward recovery attacks of the Sidelnikov–Shestakov type.
	On the other hand, this structural regularity also has implications for the behavior of the code under the coordinate-wise (or star) product, particularly with respect to its square. Specifically, the structured nature of the component codes causes the dimension of the square code to deviate from that of a random linear code of the same length and dimension for many code parameters, as we will analyze in the next section.
	
	\begin{theorem}[GRS subcodes of GLRS codes]
		\label{thm:glrs-decomp}
		Let $\a, \b, \veclambda \in \Fqm^{\n}$ be \ac{GLRS} parameters according to \cref{def:glrs_code}.
		Then, it applies
		\begin{align}
		    \GLRS{\b, \a, \veclambda; n, k}_{\aut}
			&= \bigoplus_{i=1}^{m} \GRS{N_{\aut}(\a), \a^{\doubleBrackAut{i-1}} \star \veclambda \star \aut^{i-1}(\b);n, \kIdx{i}}
		\end{align}
		with $\kIdx{1}, \dots, \kIdx{m}$ being defined as in \eqref{eq:def_ki}.
		Note that the code locators $N_{\aut}(\a)$ of the GRS codes might not all be distinct.
	\end{theorem}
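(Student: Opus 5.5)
The plan is to reduce the statement to \cref{thm:gsrs-decomp} via the code equality of \cref{lem:glrs-to-gsrs}. That lemma gives
\begin{equation}
\GLRS{\b, \a, \veclambda; n, k}_{\aut} = \genSkewRS{\ConjAut{\a}{\b}, \veclambda \star \b; n, k}_\aut .
\end{equation}
We then apply \cref{thm:gsrs-decomp} with code locators $\vecalpha = \ConjAut{\a}{\b}$ and column multipliers $\veclambda \star \b$. This yields the decomposition
\begin{equation}
\bigoplus_{i=1}^{m} \genRS{N_{\aut}(\ConjAut{\a}{\b}), \ConjAut{\a}{\b}^{\doubleBrackAut{i-1}} \star \veclambda \star \b; n, \kIdx{i}} .
\end{equation}
Before invoking the theorem, we check its hypotheses. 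The vector $\ConjAut{\a}{\b}$ is $\PAut$-independent by \cref{lem:p-indep_conditions}, because the $a_j$ lie in distinct nontrivial classes and $\wtSrk(\b)=n$ means each block of $\b$ is $\Fq$-linearly independent. The multipliers $\veclambda\star\b$ are nonzero, since the entries of $\b$ are nonzero by the same sum-rank weight condition.

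It remains to rewrite the parameters in the claimed form. For the code locators, conjugate elements have equal norm by Hilbert's Theorem 90, so $N_{\aut}(\ConjAut{\a}{\b}) = N_{\aut}(\a)$. One can also check this directly: $N_\aut(\aut(b)ab^{-1}) = N_\aut(a)\,N_\aut(\aut(b))N_\aut(b)^{-1} = N_\aut(a)$.

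For the column multipliers, we compute entrywise, for $a,b\in\Fqm$ with $b\ne0$,
\begin{equation}
\bigl(\aut(b)ab^{-1}\bigr)^{\doubleBrackAut{i-1}} b
= a^{\doubleBrackAut{i-1}} \prod_{j=0}^{i-2} \aut^{j}\bigl(\aut(b)b^{-1}\bigr) \cdot b
= a^{\doubleBrackAut{i-1}} \, \aut^{i-1}(b) b^{-1} \cdot b
= a^{\doubleBrackAut{i-1}} \aut^{i-1}(b).
\end{equation}
The first equality uses $x^{\doubleBrackAut{i-1}} = \prod_{j=0}^{i-2}\aut^j(x)$ together with multiplicativity of $\aut$. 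The second is a telescoping product. Hence the multiplier of the $i$-th component is $\a^{\doubleBrackAut{i-1}}\star\veclambda\star\aut^{i-1}(\b)$, exactly as claimed. This computation is the same identity as $\opfullexp{a}{b}{i-1} = \aut^{i-1}(b)a^{\doubleBrackAut{i-1}}$ combined with \eqref{eq:connection_remainder_gen-op}.

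As an alternative route, one could rerun the proof of \cref{thm:gsrs-decomp} directly on the rows $\opfullexp{\a}{\b}{x}$ of the generalized Moore matrix. Write $x = mt + r$ with $0\le r<m$. Since $\aut^m=\id$, we get $\aut^{x}(b)=\aut^{r}(b)$ and $a^{\doubleBrackAut{x}} = N_\aut(a)^{t}a^{\doubleBrackAut{r}}$. So the rows with $x\equiv r \pmod m$ span a GRS code with locators $N_\aut(\a)$ and multipliers $\a^{\doubleBrackAut{r}}\star\veclambda\star\aut^{r}(\b)$.

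Directness and the dimensions $\kIdx{i}$ carry over verbatim from \cref{thm:gsrs-decomp}, since the total dimension is $k$. The main obstacle is purely bookkeeping: getting the telescoping exponent right, i.e.\ the shift between $\aut^{i-1}$ and $\doubleBrackAut{i-1}$ for general $\aut=\frob^s$. No conceptual difficulty is expected.
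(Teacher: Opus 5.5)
Your proof is correct, but your main argument takes a different route from the paper's. The paper's proof is essentially what you list as the alternative. It writes the $x$-th row of each block of the generalized Moore matrix as $\aut^{x}(b)\,a^{\doubleBrackAut{x}}$. It then notes that $\aut^m=\id$ makes the factor $\aut^{x}(b)$ depend only on $x \bmod m$, moves it into the column multipliers, and reruns the grouping argument of \cref{thm:gsrs-decomp}.

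Your primary argument instead transports the statement through the code equality of \cref{lem:glrs-to-gsrs} and uses \cref{thm:gsrs-decomp} as a black box. This buys modularity: directness and the exact dimensions $\kIdx{i}$ are inherited rather than recounted. It also makes explicit that the GRS components of a GLRS code are exactly those of its GSRS representation. The cost is that you must check the hypotheses of \cref{thm:gsrs-decomp}, namely $\PAut$-independence of $\ConjAut{\a}{\b}$ via \cref{lem:p-indep_conditions} and nonzero multipliers $\veclambda\star\b$. You also need two identities: invariance of $N_{\aut}$ under $\aut$-conjugation, and the telescoping identity $\ConjAut{a}{b}^{\doubleBrackAut{i-1}}\, b = a^{\doubleBrackAut{i-1}}\aut^{i-1}(b)$. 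You verify all of these correctly. The direct route avoids this detour, since the locators $N_{\aut}(\a)$ appear immediately and the only new ingredient is the periodicity of $\aut^{x}(b)$.

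As an aside, your bookkeeping $a^{\doubleBrackAut{x}} = N_{\aut}(a)^{t}a^{\doubleBrackAut{r}}$ with $x = mt+r$ is the accurate form. The exponent $\lfloor \tfrac{x-1}{m}\rfloor$ in the paper's proof of \cref{thm:gsrs-decomp} is off by one when $m \mid x$, although the resulting GRS codes are unaffected.
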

	
	\begin{proof}
		Recall that $\opfullexp{a}{b}{i} = \aut^i(b) a^{\doubleBrackAut{i}}$ applies for $i>0$ and for all $a, b \in \Fqm$, and that the code $\GLRS{\b, \a, \veclambda; n, k}_\aut$ has a generator matrix of the form $\G = \opVandermonde{k}{\b}{\a} \cdot \diag(\veclambda) \in \Fqm^{k \times \n}$ defined in \eqref{eq:GLRS_gen_mat}.
		Thus, we can express the $i$-th block of the Moore part $\opVandermonde{k}{\b}{\a}$ as
		\begin{align}
		    \opVandermonde{k}{\shot{\b}{i}}{\shot{\a}{i}} &=
			\begin{pmatrix}
			    \subShot{b}{1}{i} & \cdots & \subShot{b}{\lenShot{i}}{i} \\
				\opfull{a_i}{\subShot{b}{1}{i}} & \cdots & \opfull{a_i}{\subShot{b}{\lenShot{i}}{i}} \\
			    \vdots & \cdots & \vdots \\
				\opfullexp{a_i}{\subShot{b}{1}{i}}{k-1} & \cdots & \opfullexp{a_i}{\subShot{b}{\lenShot{i}}{i}}{k-1}
			\end{pmatrix}
		    \\
			&=
		    \begin{pmatrix}
			    \subShot{b}{1}{i} \cdot {a_i}^{\doubleBrackAut{0}} & \cdots & \subShot{b}{\lenShot{i}}{i} \cdot {a_i}^{\doubleBrackAut{0}} \\
				\aut(\subShot{b}{1}{i}) \cdot {a_i}^{\doubleBrackAut{1}} & \cdots & \aut(\subShot{b}{\lenShot{i}}{i}) \cdot {a_i}^{\doubleBrackAut{1}} \\
				\vdots & \cdots & \vdots \\
				\aut^{k-1}(\subShot{b}{1}{i}) \cdot {a_i}^{\doubleBrackAut{k - 1}} & \cdots & \aut^{k-1}(\subShot{b}{\lenShot{i}}{i}) \cdot {a_i}^{\doubleBrackAut{k - 1}}
		    \end{pmatrix}.
		\end{align}
		Ignoring the parts depending on $\shot{\b}{i}$, this matrix looks like the skew Vandermonde matrix $\skewVandermonde{k}{\shot{\a}{i}}$, i.e., similar to the \ac{GSRS} case.
		Recall that \cref{thm:gsrs-decomp} combines the rows whose indices coincide modulo $m$ into one \ac{GRS} subcode.
		Since $\aut^{m} = \id$ holds, the factor in the $j$-th row and $l$-th column is $\aut^{j \mod m}(\subShot{b}{l}{i})$.
		In other words, this factor is the same for a fixed column and for rows with indices coinciding modulo $m$.
		It can thus be moved into the column multipliers of the respective \ac{GRS} subcode and we obtain the theorem's statement with the same strategy that we used in the \ac{GSRS} setting in \cref{thm:gsrs-decomp}.
		\qed
	\end{proof}
	
	\paragraph{Gabidulin codes.}
	Let us next look at how the decompositions behave for the special cases of Gabidulin codes.
	Note therefore that the required $\Fq$-linear independence of the $n$ code locators from $\Fqm$ implies that $k\leq n \leq m$ holds.
	
	\begin{corollary}[GRS subcodes of Gabidulin codes]
		Let all entries of the vector $\b \in \Fqm^n$ be $\Fq$-linearly independent.
		Then,
	    \begin{equation}
	        \Gab{\b; n, k}
			= \bigoplus_{i=1}^{k} \GRS{\1, \frob^{i-1}(\b); n, 1}.
	    \end{equation}
	\end{corollary}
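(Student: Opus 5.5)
The plan is to specialize \cref{thm:glrs-decomp} to the \ac{GLRS} representation of Gabidulin codes from \eqref{eq:gab_as_gsrs_glrs}, namely $\Gab{\b; n, k} = \GLRS{\b, (1), \1; n, k}_{\frob}$. This is a code with a single block $\n = (n)$, evaluation parameter $a_1 = 1$, trivial column multipliers and automorphism $\aut = \frob$. The required GLRS conditions were already verified right after \eqref{eq:gab_as_gsrs_glrs}: the entries of $\b$ are $\Fq$-linearly independent, so $\wtRk(\b) = n$, and $1$ lies in a nontrivial conjugacy class. We may therefore apply \cref{thm:glrs-decomp} directly.

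Substituting these parameters yields the following ingredients.
\begin{itemize}
\item The common code locators become $N_{\frob}(\1) = \1$.
\item The $i$-th column multiplier vector becomes $\1^{\doubleBrackWIndex{i-1}{\frob}} \star \1 \star \frob^{i-1}(\b) = \frob^{i-1}(\b)$.
\item Since the linear independence of the code locators forces $k \le n \le m$, definition \eqref{eq:def_ki} gives $\kIdx{i} = 1$ for $i = 1, \dots, k$ and $\kIdx{i} = 0$ for $i = k+1, \dots, m$. This matches \cref{rem:k<m}.
\end{itemize}
The components with $\kIdx{i} = 0$ are the zero code and can be dropped from the direct sum. What remains is $\bigoplus_{i=1}^{k} \GRS{\1, \frob^{i-1}(\b); n, 1}$, which is exactly the claimed decomposition.

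The only delicate point is formal. The \ac{GRS} definition requires distinct code locators, but here all locators equal $1$. For dimension one this is harmless: $\GRS{\1, \frob^{i-1}(\b); n, 1}$ is simply the $\Fqm$-span of the vector $\frob^{i-1}(\b)$, since only constant polynomials occur.

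As a sanity check independent of \cref{thm:glrs-decomp}, a direct argument works. From \eqref{eq:gab-def}, $\Gab{\b; n, k}$ is spanned by the rows $\b^{\brackWIndex{i-1}{\frob}} = \frob^{i-1}(\b)$ for $i = 1, \dots, k$. Each row spans one of the one-dimensional summands. The sum is direct because the Moore matrix with rows $\frob^{0}(\b), \dots, \frob^{k-1}(\b)$ has full rank $k$ whenever the entries of $\b$ are $\Fq$-linearly independent.
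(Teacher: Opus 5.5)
Your proposal is correct and is essentially the second half of the paper's proof, which likewise inserts the GLRS representation $\GLRS{\b, (1), \1; n, k}_{\frob}$ into \cref{thm:glrs-decomp} and uses $k \le n \le m$ to obtain $k$ one-dimensional components with locators $N_{\frob}(\1) = \1$ and multipliers $\frob^{i-1}(\b)$. The paper additionally derives the same decomposition from the GSRS representation via \cref{thm:gsrs-decomp} and uses the identity $(b^{q-1})^{\doubleBrackWIndex{i-1}{\frob}} b = \frob^{i-1}(b)$ to show both routes give the same subcodes; this consistency check is not needed for the statement, and your direct Moore-matrix argument is a valid self-contained alternative.
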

	
	\begin{proof}
		If we express $\Gab{\b; n, k}$ as \ac{GSRS} code as in \eqref{eq:gab_as_gsrs_glrs} and apply \cref{thm:gsrs-decomp}, we obtain
		\begin{align*}
		    		    \Gab{\b; n, k}
			&= \bigoplus_{i=1}^{m} \GRS{N_{\aut}(\b^{q-1}), (\b^{q-1})^{\doubleBrackWIndex{i-1}{\frob}} \star \b; n, \kIdx{i}}
				        \\&= \bigoplus_{i=1}^{k} \GRS{N_{\aut}(\b^{q-1}), (\b^{q-1})^{\doubleBrackWIndex{i-1}{\frob}} \star \b; n, 1}.
		\end{align*}
		Similarly, the \ac{GLRS} representation of $\Gab{\b; n, k}$ from \eqref{eq:gab_as_gsrs_glrs} and \cref{thm:glrs-decomp} yield
		\begin{equation}
		    \Gab{\b; n, k}
			= \bigoplus_{i=1}^{m} \GRS{N_{\aut}(\1), \frob^{i-1}(\b); n, \kIdx{i}}
		    = \bigoplus_{i=1}^{k} \GRS{N_{\aut}(\1), \frob^{i-1}(\b); n, 1}.
		\end{equation}
		The resulting subcodes are indeed the same since $b^{q-1} = \frob(b) b^{-1} = \ConjFrob{1}{b}$ applies to any $b \in \Fqm$ and the norm of all conjugates of $1$ equals $1$.
		Further, every $b \in \Fqm$ satisfies
		\begin{equation}
		    (b^{q-1})^{\doubleBrackWIndex{i-1}{\frob}} b = b^{q^{i-1}-1} b = \frob^{i-1}(b).
		\end{equation}
		\qed
	\end{proof}
	
	\paragraph{GRS codes.}
	In the case of \ac{GRS} codes, the automorphism $\aut$ equals the identity map.
	Thus, we are automatically in the case $m = 1$, that is $\Fqm=\Fq$, and our results about \ac{GSRS} and \ac{GLRS} decompositions become trivial: we only obtain the \ac{GRS} code itself.
	More precisely, the application of \cref{thm:gsrs-decomp} and \cref{thm:glrs-decomp} to the \ac{GSRS} and the \ac{GLRS} representations of $\GRS{\b, \veclambda; n, k}$ as given in \eqref{eq:grs_as_gsrs_glrs} both yield
	\begin{equation}
	    \GRS{\b, \veclambda; n, k}
		= \bigoplus_{i=1}^{1} \GRS{N_{\aut}(\b), \b^{\doubleBrackWIndex{i-1}{\id}} \star \veclambda; n, \kIdx{i}}
	\end{equation}
	for distinct code locators $\b \in \Fqm^n$ and nonzero column multipliers $\veclambda \in \Fqm^n$.
	Clearly, $\kIdx{1} = k$ and $b^{\doubleBrackWIndex{i-1}{\id}} = 1$ for all $b \in \Fq$ since $i = 1$ is the only valid option.
	Further, the norm $N_{\aut}(b)$ satisfies $b^{\doubleBrackWIndex{m}{\id}} = b$ for every $b \in \Fq$.
	The combination of these insights yields precisely $\GRS{\b, \veclambda; n, k}$ on the right-hand side of the equation.

	\section{Square code distinguishers}
	\label{sec:square_code_dist}
	
	We now exploit the decomposition of \ac{GSRS} and \ac{GLRS} codes we have seen in the last section to analyze the behavior of the squares of such codes.
	Recall therefore that the \emph{star product} (also called the \emph{Schur product}) of two vectors is their coordinate-wise product, i.e.,
	$$ \x\star \y:= (x_1y_1,\dots, x_n y_n)$$
	for any $\x,\y \in \Fqm^n$.
	For two linear codes $\mycode{C}_1, \mycode{C}_2 \subseteq \Fqm^n$, we define their star product $\mycode{C}_1 \star \mycode{C}_2$ as the code spanned by all pairwise star products $\c_1 \star \c_2$ of codewords $\c_1 \in \mycode{C}_1$ and $\c_2 \in \mycode{C}_2$.
	If $\mycode{C}_1 = \mycode{C}_2= \mycode{C}$, we call $\mycode{C} \star \mycode{C}$ the square (code) of $\mycode{C}$ and denote it by $\mycode{C}^{\star 2}$.
	Let us first state how the square code of random codes behaves.
	
	\begin{lemma}[{Squares of random codes \cite[Thm.~2.2 and Thm. 2.3]{CascudoCramerEtAl2015SquaresRandomLinear}}]
		\label{lem:random_squares}
		Consider a random linear code $\mycode{C} \subseteq \Fqm^n$ of dimension $k$ and let $\delta \defeq \left| \tfrac{k(k+1)}{2} - n \right|$.
		Then, there exists a positive $\gamma \in \RR$ such that for all large enough $k$ the equality
		$$\dim(\mycode{C}^{\star 2}) = \min \left\{\binom{k+1}{2}, n\right\} = \min \left\{\frac{k(k+1)}{2}, n\right\}$$
		is satisfied with probability at least $1 - 2^{\delta \gamma}$.
	\end{lemma}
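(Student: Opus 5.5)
The lemma is quoted from \cite{CascudoCramerEtAl2015SquaresRandomLinear}, so we only sketch a route; note that the probability bound is intended as $1-2^{-\delta\gamma}$. Throughout, write $N \defeq \binom{k+1}{2}$.

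\textbf{Reformulation.} Model the random code by a generator matrix $\G \in \Fqm^{k\times n}$ with independent, uniformly random columns $\x_1,\dots,\x_n \in \Fqm^k$. Conditioning on full rank costs only a negligible factor. The square code $\mycode{C}^{\star 2}$ is then the image of the linear evaluation map
\[
\mathrm{ev}\colon \mathrm{Sym}^2(\Fqm^k)^{\ast} \to \Fqm^n,\qquad Q \mapsto \bigl(Q(\x_1),\dots,Q(\x_n)\bigr),
\]
which is defined on the $N$-dimensional space of quadratic forms in $k$ variables. Hence
\[
\dim(\mycode{C}^{\star 2}) = N - \dim\{Q : Q(\x_j)=0 \text{ for all } j\}.
\]
Dually, $\dim(\mycode{C}^{\star 2})$ is the rank of the $n$ vectors $\x_j\x_j^{\top}$ in the space of symmetric $k\times k$ matrices. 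The upper bound $\min\{N,n\}$ is trivial, so it remains to show that equality holds with the stated probability. I would split into two regimes.

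\textbf{Case $n \ge N$, so $\delta = n-N$.} We must show that no nonzero quadratic form vanishes on all $\x_j$. The plan is a union bound over $Q \neq 0$, stratified by the rank $r$ of $Q$. A form of rank $r$ is equivalent to a nondegenerate form in $r$ variables, so its zero proportion is at most $q^{-m}+c\,q^{-mr/2}$. The number of forms of rank $r$ is at most roughly $q^{m(rk - r(r-1)/2)}$. Summing these counts against the $n$-th power of the zero probability gives a failure probability of $q^{-\Omega(\delta)}$ once $k$ is large.

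I expect this to be the \emph{main obstacle}. The low-rank forms (small $r$) are the dangerous ones, because their zero probability stays well above $q^{-m}$. One must check that their scarcity, about $q^{mrk}$ of them, compensates. This works because $n \ge N \gg rk$ for small $r$, whereas high-rank forms have zero probability essentially $q^{-m}$ and number only $q^{mN}$.

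\textbf{Case $n < N$, so $\delta = N-n$.} We must show that $\x_1\x_1^{\top},\dots,\x_n\x_n^{\top}$ are linearly independent. Reveal the points one at a time. After $t<n$ points, let $W_t$ be the span of the matrices seen so far, with $\dim W_t = t$. The new point fails exactly when $\x\x^{\top}\in W_t$, that is, when $\x$ is a common zero of the space $W_t^{\perp}$ of quadratic forms, which has dimension $N-t \ge \delta$. A space of quadratic forms of dimension $s$ contains forms of large rank, and its common zero set has density at most about $q^{-m\cdot\Omega(s)}$. For this density bound I would reuse the rank-stratified estimates from the first case, applied to a random member of $W_t^{\perp}$. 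Summing these failure probabilities over $t=0,\dots,n-1$ is dominated by the last steps, where $N-t$ is closest to $\delta$. The sum is therefore geometric and gives $2^{-\gamma\delta}$.

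Combining the two regimes, and absorbing constants together with the requirement that $k$ is large into $\gamma$, yields the lemma.
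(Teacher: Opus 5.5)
The paper gives no argument of its own here; it only cites the result, so your sketch has to stand on its own. Your case $n\ge N$ does. The union bound over nonzero quadratic forms, stratified by rank, goes through as you describe once $\gamma$ is taken small enough to absorb the rank-$2$ terms, whose zero proportion is about $2q^{-m}$.

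The gap is in the case $n<N$. Your key claim is that an $s$-dimensional space of quadratic forms has a common zero set of density $q^{-m\cdot\Omega(s)}$. This is false, both in general and for the spaces $W_t^{\perp}$ you actually meet.

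\begin{itemize}
\item Every form in $W_t^{\perp}$ vanishes on the lines $\Fqm\x_1,\dots,\Fqm\x_t$. So the failure probability at step $t$ is at least about $t\,q^{-m(k-1)}$, however large $s=N-t$ is.
\item Already for $t=1$ you have $s=N-1\approx k^2/2$, while $\Pr[\x_2\x_2^{\top}\in W_1]=\Pr[\x_2\in\Fqm\x_1]\approx q^{-m(k-1)}$.
\item Structured spaces are worse. $\{x_1\ell(x) : \ell \text{ linear}\}$ is a $k$-dimensional space of forms of rank at most $2$, and its common zero set is the whole hyperplane $x_1=0$.
\item The quadrics in the ideal $(x_1,\dots,x_c)$ span about $ck$ dimensions, yet their common zero set has density $q^{-mc}$.
\end{itemize}

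So averaging over a random member of $W_t^{\perp}$ only helps once you know that most members of this \emph{random} space have large rank. That is a new probabilistic statement about quadrics through random points. It does not follow from the estimates of your first case, and it is presumably where the real work of the cited result lies.

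More seriously, the target bound in this regime cannot be reached. With probability of order $n^2q^{-m(k-1)}$, two columns of the generator matrix are proportional, $\x_j=c\,\x_i$, or some column is zero. Then every word of $\mycode{C}^{\star 2}$ satisfies $y_j=c^2y_i$ (respectively $y_i=0$), so $\dim\mycode{C}^{\star 2}\le n-1<\min\{N,n\}$.

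Take, e.g., $n\approx k^2/4$, so that $\delta\approx k^2/4$. Then $q^{-mk}\gg 2^{-\gamma\delta}$ for every fixed $\gamma>0$ and all large $k$. Hence the lemma as transcribed is false for $n<N$ with $\delta\gg k$, even after you fix the sign of the exponent. No geometric sum of per-step bounds can produce $2^{-\gamma\delta}$ there.

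The most one can hope for is decay exponential in $\min\{\delta,k\}$. As far as I recall, this matches how the cited work phrases its result: exponential convergence when $N-n$ is at least linear in $k$. A correct treatment of the second regime must restate the claim accordingly. It must then supply the missing control of low-rank forms in $W_t^{\perp}$.
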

	
	Using the fact that \ac{GSRS} and \ac{GLRS} codes contain GRS codes as subcodes, we can deduce the following theorem:
	
	\begin{theorem}[Squares of GSRS and GLRS codes]
		\label{thm:squares_of_gsrs_glrs}
		Let $\vecalpha, \veclambda \in \Fqm^{n}$ be \ac{GSRS} parameters as given in \cref{defi:GSRS} and consider the outcome of \cref{lem:p-indep_conditions} such that $\a, \b \in \Fqm^{\n}$ satisfy $\pi(\vecalpha) = \ConjAut{\a}{\b}$.
		Then, the dimensions of the squares
        $\genSkewRS{\vecalpha, \veclambda; n, k}_\aut^{\star 2}$ and $\GLRS{\b, \a, \veclambda; n, k}_\aut^{\star 2}$ are equal and at most
        $$\begin{cases}
            \min\left\{\frac{k(k+1)}{2}, n\right\}& \text{ if } k\leq m\\
            \min\left\{k(m+1) - \frac{m(m+1)}{2}, n\right\} & \text{ otherwise}.
        \end{cases}$$
	\end{theorem}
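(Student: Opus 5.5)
First, the two squares coincide. By \cref{lem:gsrs-to-glrs}, $\genSkewRS{\pi(\vecalpha), \veclambda; n, k}_\aut = \GLRS{\b, \a, \veclambda \star \b^{-1}; n, k}_\aut$. The statement pairs the GSRS code with the GLRS code with multiplier $\veclambda$. I will argue that the square dimension is invariant under permuting coordinates and under multiplying by any vector $\d$ with nonzero entries. Indeed, $(\mycode{C}\star\d)^{\star 2} = \mycode{C}^{\star 2}\star(\d\star\d)$, and $\c\mapsto\c\star\d$ is injective. Hence all these codes have squares of the same dimension, and the upper bound only needs to be proved for $\genSkewRS{\vecalpha, \veclambda; n, k}_\aut$. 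The same reasoning lets me take $\veclambda=\1$, or simply carry $\veclambda$ along.

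The trivial bounds are $\dim \mycode{C}^{\star 2}\le n$ and $\dim\mycode{C}^{\star 2}\le \binom{k+1}{2}$, the latter because the square is spanned by the products $\g_i\star\g_j$ with $i\le j$ of basis vectors. This settles the case $k\le m$.

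For $k>m$, I use the basis $\vecalpha^{\doubleBrackAut{x}}\star\veclambda$, $x=0,\dots,k-1$, from the proof of \cref{thm:gsrs-decomp}. Writing $\boldsymbol{N}=N_\aut(\vecalpha)$, we have $\vecalpha^{\doubleBrackAut{x}} = \boldsymbol{N}^{\lfloor x/m\rfloor}\star\vecalpha^{\doubleBrackAut{x \bmod m}}$. This holds for every $x\ge0$, using $\doubleBrackAut{qm+r} = q\doubleBrackAut{m}+\aut$-shifted terms. Since $\aut^m=\id$, one gets $\doubleBrackAut{qm+r}\equiv q\doubleBrackAut{m}+\doubleBrackAut{r}$ modulo $q^{ms}-1$, which is enough as exponents of field elements. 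Set $\boldsymbol{u}_r=\vecalpha^{\doubleBrackAut{r}}\star\veclambda$ for $r=0,\dots,m-1$. Then every basis vector has the form $\boldsymbol{N}^{a}\star\boldsymbol{u}_r$ with $am+r\le k-1$. A product of two basis vectors is therefore
\[
\boldsymbol{N}^{a+a'}\star\boldsymbol{u}_r\star\boldsymbol{u}_{r'}.
\]
For a fixed unordered pair $\{r,r'\}$, this lies in the span of $\boldsymbol{N}^{e}\star\boldsymbol{u}_r\star\boldsymbol{u}_{r'}$, where $e$ ranges from $0$ to $e_{\max}(r)+e_{\max}(r')$. Here $e_{\max}(r)=k_{r+1}-1$ is the largest $a$ allowed for residue $r$. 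Equivalently, $\GRS{\boldsymbol{N},\boldsymbol{u}_r;n,k_{r+1}}\star\GRS{\boldsymbol{N},\boldsymbol{u}_{r'};n,k_{r'+1}}\subseteq\GRS{\boldsymbol{N},\boldsymbol{u}_r\star\boldsymbol{u}_{r'};n,k_{r+1}+k_{r'+1}-1}$, which is the standard fact that GRS products add degrees. Summing over pairs $r\le r'$ gives
\[
\dim\mycode{C}^{\star 2}\le\sum_{0\le r\le r'\le m-1}(k_{r+1}+k_{r'+1}-1).
\]

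It remains to evaluate this sum. Each $k_i$ appears in $m+1$ terms: $m-1$ times from pairs with a distinct partner, and twice from the diagonal term. The number of pairs is $\binom{m+1}{2}$. The sum is therefore $(m+1)\sum_i k_i-\binom{m+1}{2}=k(m+1)-\tfrac{m(m+1)}{2}$, using $\sum_i k_i=k$ from the proof of \cref{thm:gsrs-decomp}. Combined with $\le n$, this gives the claim.

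The main point needing care is that $k>m$ guarantees $k_i\ge1$ for all $i$, so every residue class is present and the count is valid. The exponent identity for general $\aut=\frob^s$ also needs checking; I expect it to follow from $x^{q^{ms}}=x$. If instead $k<m$, absent classes would spoil the formula, which is why that case uses the $\binom{k+1}{2}$ bound.
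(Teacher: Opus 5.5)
Your proof is correct and follows essentially the paper's route: split the code into the $m$ GRS subcodes with common locators $N_{\aut}(\vecalpha)$, bound each square by $2k_i-1$ and each mixed product by $k_i+k_j-1$, and sum to $k(m+1)-\tfrac{m(m+1)}{2}$, using the trivial $\binom{k+1}{2}$ bound when $k\le m$. Three of your steps are, if anything, cleaner than the corresponding ones in the paper: the explicit invariance of the square dimension under nonzero coordinate scaling and permutation (needed because the two codes in the statement differ by $\b$ and $\pi$), the double-counting of the sum, and the identity $\vecalpha^{\doubleBrackAut{x}}=N_{\aut}(\vecalpha)^{\lfloor x/m\rfloor}\star\vecalpha^{\doubleBrackAut{x \bmod m}}$.
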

	
	\begin{proof}
    We first note that the dimensions of the squares of the GSRS and the GLRS code must be equal due to \cref{lem:glrs-to-gsrs} and \cref{lem:gsrs-to-glrs}.
    
    If $k\leq m$, then the GSRS code constitutes the direct sum of $k$ one-dimensional GRS codes with the same evaluation parameters according to \cref{thm:gsrs-decomp} and \cref{rem:k<m}.
    Then the square code consists of $\tfrac{k(k+1)}{2}$ star products of one-dimensional codes.
    We ignore potential nontrivial intersections to arrive at the upper bound $\tfrac{k(k+1)}{2}$ on the square code dimension of the \ac{GSRS} code.
    
        In the case $k> m$, the decomposition of \cref{thm:gsrs-decomp} consists of $m$ GRS subcodes of dimension $k_i\in \bigl\{\lceil \frac{k}{m} \rceil, \lfloor\frac{k}{m} \rfloor \bigr\}$ for $i=1,\dots,m$. The square code decomposes into the sum of $m$ squares of GRS codes of dimensions $k_1,\dots, k_m$, and $\tfrac{m(m-1)}{2}$ Schur products of two different GRS codes with the same evaluation parameters. Each of the former has an expected dimension of $2k_i -1$, respectively, and we expect that each of the latter has dimension $k_i+k_j-1$ where $k_i$ and $k_j$ denote the dimensions of the codes in the star product.
    
        Let us now obtain an upper bound on the dimension of $\genSkewRS{\vecalpha, \veclambda; n, k}_\aut^{\star 2}$ by summing up the square code dimensions of all parts.
        In other words, we assume that all \ac{GRS} squares and Schur products intersect only trivially.
        We denote $r:= k \mod m = k - m \left\lfloor \tfrac{k}{m} \right\rfloor$ with $0\leq r<m$ and express the sum of the dimensions for the square parts as
        \begin{equation}
            \sum_{i=1}^m (2k_i-1) = \sum_{i=1}^{r}  (2\left\lceil \tfrac{k}{m} \right\rceil -1)  +\sum_{i=r+1}^{m} (2\left\lfloor \tfrac{k}{m} \right\rfloor -1)
        \end{equation}
        and the sum of the dimensions for the mixed star products as
        \begin{align}
            \sum_{i=1}^m \sum_{j=i+1}^m (k_i+k_j -1) &= \sum_{i=1}^{r} \sum_{j=i+1}^{r} (2\left\lceil \tfrac{k}{m} \right\rceil -1) + \sum_{i=1}^{r} \sum_{j=r+1}^{m} (2\left\lfloor \tfrac{k}{m} \right\rfloor)
            \\ &\phantom{=~} + \sum_{i=r+1}^{m} \sum_{j=i+1}^{m} (2\left\lfloor \tfrac{k}{m} \right\rfloor -1).
        \end{align}
        Together with the fact $2\left\lceil \tfrac{k}{m} \right\rceil -1 =2\left\lfloor \tfrac{k}{m} \right\rfloor +1$, we can combine the above and obtain the upper bound
        \begin{align}
			&\frac{r(r+1)}{2} (2\left\lfloor \tfrac{k}{m} \right\rfloor +1) + \frac{(m-r)(m-r+1)}{2} (2\left\lfloor \tfrac{k}{m} \right\rfloor -1) +r(m-r) 2\left\lfloor \tfrac{k}{m} \right\rfloor\\
	        &= (2\left\lfloor \tfrac{k}{m} \right\rfloor -1)\cdot \tfrac{1}{2}\bigl(r(r+1)+(m-r)(m-r+1)+2r(m-r)\bigr)\\
			&\phantom{=~} + r(r+1) + r(m-r)\\
	        &= \frac{m(m+1)}{2}\left(2\left\lfloor\tfrac{k}{m}\right\rfloor-1\right) + r\left(m+1\right)\\
	        &= k(m+1) - \tfrac{m(m+1)}{2}
        \end{align}
        on $\dim(\genSkewRS{\vecalpha, \veclambda; n, k}_\aut^{\star 2})$ as long as this is at most $n$.
        Note that, if $k$ divides $m$, we get $r=0$ and the sums with upper limit $r$ vanish.
		\qed
    \end{proof}

	\cref{thm:squares_of_gsrs_glrs} gives hope to distinguish \ac{GSRS} and \ac{GLRS} codes for the case $k > m$.
	Since $k = m + 1$ implies
	\begin{equation}
	    k(m+1)- \frac{m(m+1)}{2}= (m+1)^2- \frac{m(m+1)}{2}= \frac{(m+2)(m+1)}{2}= \binom{k+1}{2},
	\end{equation}
	we restrict to $k > m + 1$ in the following.
	
	McEliece-like cryptosystems need to disguise the secret code in order to hide the algebraic structure which can be used to decrypt messages.
	We study an isometric disguising strategy for \ac{GSRS} and \ac{GLRS} codes and the Hamming metric:
	
	\begin{lemma}[Squares of disguised GSRS and GLRS codes]
		\label{lem:squares_Hamming_disguising}
		Let $\mycode{C}$ be a code obtained from applying a Hamming-metric isometry to a \ac{GSRS} or \ac{GLRS} code with parameters adhering to \cref{defi:GSRS} and \cref{def:glrs_code}, respectively.
		Then, the square $\mycode{C}^{\star 2}$ behaves as described in \cref{thm:squares_of_gsrs_glrs}.
	\end{lemma}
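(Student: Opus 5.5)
The plan is to reduce the statement to \cref{thm:squares_of_gsrs_glrs} by showing that a Hamming-metric isometry acts on codes in a way that commutes with the star product up to a fixed monomial transformation. The key structural fact is that every $\Fqm$-linear Hamming isometry of $\Fqm^n$ is given by a monomial matrix $\M = \diag(\d)\P$, with nonzero $\d \in \Fqm^n$ and a permutation matrix $\P$, as recalled at the end of \cref{sec:prelims}. So write $\mycode{C} = \mycode{C}_0 \cdot \diag(\d)\P$, where $\mycode{C}_0$ is the original \ac{GSRS} or \ac{GLRS} code.

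Next I compute the square. For codewords $\c_0\diag(\d)\P$ and $\c_0'\diag(\d)\P$, permuting coordinates commutes with the coordinatewise product, and scaling coordinates gives $(\c_0 \star \d)\star(\c_0' \star \d) = (\c_0 \star \c_0') \star (\d \star \d)$. Taking spans,
\begin{equation}
	\mycode{C}^{\star 2} = \mycode{C}_0^{\star 2} \cdot \diag(\d \star \d)\,\P.
\end{equation}
Since $\d \star \d$ has only nonzero entries, $\diag(\d \star \d)\P$ is invertible. Hence $\dim(\mycode{C}^{\star 2}) = \dim(\mycode{C}_0^{\star 2})$, and the bound of \cref{thm:squares_of_gsrs_glrs} transfers directly.

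For a more structural statement, I can also note the following. Applying $\diag(\d)$ to a \ac{GSRS} code only replaces $\veclambda$ by $\veclambda \star \d$, which is again a valid nonzero column multiplier. Permuting coordinates keeps $\vecalpha$ $\PAut$-independent, since $\lclm$ does not depend on the order of its arguments. So $\mycode{C}$ is itself a \ac{GSRS} code (possibly permuted), and the decomposition of \cref{thm:gsrs-decomp} persists with permuted locators and multipliers $\N_{\aut}$-independent of the disguise. For \ac{GLRS} codes I would first pass to the \ac{GSRS} form via \cref{lem:glrs-to-gsrs}.

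There is no real obstacle in this argument. The only point needing care is justifying that Hamming isometries are exactly monomial maps. I will take this from the cited \cite[Sec.~1.7]{HuffmanPless2003FundamentalsErrorCorrecting}, as the paper already does.
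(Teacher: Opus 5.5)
Your proof is correct, but your main argument takes a different route from the paper's.

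The paper never computes the square of the disguised code. It writes the monomial matrix as $\diag(\d)\P$, absorbs $\diag(\d)$ into the column multipliers ($\veclambda \mapsto \veclambda \star \d$), and notes that the column permutation keeps the code in the \ac{GSRS}/\ac{GLRS} family. It then applies \cref{thm:squares_of_gsrs_glrs} to $\mycode{C}$ itself. This is precisely your secondary ``structural'' remark.

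Your primary argument instead uses the family-independent identity $(\mycode{C}_0 \diag(\d)\P)^{\star 2} = \mycode{C}_0^{\star 2} \diag(\d \star \d)\P$. In other words, square-code dimension is invariant under monomial equivalence for any code. This is more general and needs no closure property of the family. In particular, it avoids relying on the convention (see the remark after \cref{lem:gsrs-to-glrs}) that a \ac{GLRS} code with columns permuted across blocks still counts as a \ac{GLRS} code.

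The paper's route, on the other hand, yields more structure. The public code is itself (up to permutation) a \ac{GSRS} code, so the decomposition of \cref{thm:gsrs-decomp} and the shortening step of \cref{thm:distinguisher} apply to it directly.

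There is a small slip in your remark. After disguising, the multipliers of the \ac{GRS} components do change, to $\vecalpha^{\doubleBrackAut{i-1}} \star \veclambda \star \d$ and then permuted. Only the locators $N_{\aut}(\vecalpha)$ are unaffected up to permutation.
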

	
	\begin{proof}
		Hamming-metric isometries correspond to multiplication of a generator matrix with a monomial matrix from the right.
		We can decompose any monomial matrix into the product of a diagonal matrix with nonzero entries on the diagonal and a permutation matrix.
		The first part can be combined with the code's column multipliers, resulting in another \ac{GSRS}/\ac{GLRS} code.
		The second part only scrambles the columns which does not change the fact that the code is a \ac{GSRS} or \ac{GLRS} code.
		This implies the statement.
		\qed
	\end{proof}
	
	Note that the above results yield an efficient distinguisher for \ac{GLRS} and \ac{GSRS} codes as long as their square code dimension differs from the expectancy for a random code.
	More precisely, we obtain the following statement:
	
	\begin{lemma}[Naive square code distinguisher]\label{lem:naive_dist}
		Let $\mycode{C} \subseteq \Fqm^{n}$ be a \ac{GSRS} or \ac{GLRS} code as defined in \cref{defi:GSRS} and \cref{def:glrs_code}.
		Assume that its dimension $k$ satisfies $m+1 < k < \tfrac{n}{m+1} + \tfrac{m}{2}$ and disguise $\mycode{C}$ by means of a Hamming-metric isometry.
		Then, the disguised version of $\mycode{C}$ can be distinguished from a random code with the same length and dimension with high probability by means of its square code dimension.
	\end{lemma}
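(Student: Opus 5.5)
The plan is to compare the upper bound from \cref{thm:squares_of_gsrs_glrs} with the generic square dimension from \cref{lem:random_squares}, using \cref{lem:squares_Hamming_disguising} to handle the disguise.

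First, by \cref{lem:squares_Hamming_disguising}, the disguised code $\mycode{C}'$ is again a \ac{GSRS} or \ac{GLRS} code up to a coordinate permutation. A permutation does not change the dimension of the square, because it commutes with the star product. Since $k>m$, \cref{thm:squares_of_gsrs_glrs} therefore gives
\begin{equation}
\dim(\mycode{C}'^{\star 2}) \le k(m+1) - \tfrac{m(m+1)}{2}.
\end{equation}
Second, I will show that this bound is strictly below both quantities in the random-code formula $\min\{\binom{k+1}{2}, n\}$ from \cref{lem:random_squares}.
\begin{itemize}
\item For the comparison with $\binom{k+1}{2}$, a short computation gives
\begin{equation}
\binom{k+1}{2} - \Bigl(k(m+1) - \tfrac{m(m+1)}{2}\Bigr) = \tfrac12 (k-m)(k-m-1).
\end{equation}
This is positive exactly when $k>m+1$, which is the lower bound in the hypothesis.
\item For the comparison with $n$, the hypothesis $k < \tfrac{n}{m+1} + \tfrac m2$ gives $k(m+1) - \tfrac{m(m+1)}{2} < n$ after multiplying by $m+1$.
\end{itemize}
Hence the structured codes have square dimension strictly below $\min\{\binom{k+1}{2},n\}$.

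Third, I describe the distinguisher and its success probability. It takes a generator matrix $\G$, forms the at most $\binom{k+1}{2}$ star products of pairs of rows, and computes the rank in polynomial time. It outputs ``structured'' if the rank is below $\min\{\binom{k+1}{2},n\}$. On \ac{GSRS}/\ac{GLRS} inputs it is always correct by the second step. On random inputs, \cref{lem:random_squares} says that it errs only with the small probability stated there, for large enough $k$. The advantage is therefore close to one.

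The main obstacle is making ``with high probability'' precise. \cref{lem:random_squares} is asymptotic in $k$, and its failure probability depends on $\delta = |\binom{k+1}{2}-n|$. I would simply inherit that statement: the distinguisher succeeds with the probability guaranteed there, for $k$ large enough. The inequality computations themselves are routine.
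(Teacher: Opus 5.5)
Your proposal is correct and follows the paper's own argument: the paper also uses the bound from \cref{thm:squares_of_gsrs_glrs}, carried over to the disguised code via \cref{lem:squares_Hamming_disguising}, and checks the same two strict inequalities against $n$ and $\binom{k+1}{2}$. Your factorization $\tfrac12(k-m)(k-m-1)$ is an explicit form of the paper's condition $k\notin\{m,m+1\}$, and your description of the test and its error probability spells out what the paper leaves implicit without changing the argument.
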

	
	\begin{proof}
		In order to obtain distinguishability, the expected square code dimension for random and for GSRS/GLRS codes needs to deviate by at least one.
		Thus, it is possible to distinguish when the following two inequalities are satisfied:
	    \begin{align}
	        k(m+1) - \tfrac{m(m+1)}{2} < n \quad &\iff \quad k < \tfrac{n}{m+1} + \tfrac{m}{2} ,\\
	        k(m+1) - \tfrac{m(m+1)}{2} < \tfrac{k(k+1)}{2} \quad &\iff \quad k\notin \{m,m+1\}.
	    \end{align}
        As mentioned above, $k$ needs to be larger than $m+1$ to get a different formula for the expected dimension.
		Therefore, the second condition above is obsolete.
		\qed
	\end{proof}
	
	As with classical generalized Reed–Solomon codes, designers of cryptosystems may attempt to choose parameters for \ac{GSRS} or \ac{GLRS} codes such that their square code dimensions are indistinguishable from those of random linear codes. However, when shortening is taken into account, the supposed advantage of the extension parameter $m$ in the square code dimension estimate becomes negligible.
	
	\begin{theorem}[Square code distinguisher]
		\label{thm:distinguisher}
		Let $\mycode{C} \subseteq \Fqm^{n}$ be a \ac{GSRS} or a \ac{GLRS} code with respect to \cref{defi:GSRS} and \cref{def:glrs_code}, respectively.
		Assume that its dimension $k$ satisfies $m + 1 < k < n - \tfrac{1}{2} (m^2 + 3m)$ and disguise $\mycode{C}$ by means of a Hamming-metric isometry.
		Then
        we can distinguish the disguised version of $\mycode{C}$ from a random code by means of its square code dimension with high probability.
	\end{theorem}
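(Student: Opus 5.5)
We reduce to the naive square code distinguisher of \cref{lem:naive_dist} by shortening. The obstruction in \cref{lem:naive_dist} is that when $k(m+1) - \tfrac{m(m+1)}{2} \geq n$, the square of $\mycode{C}$ fills the ambient space and carries no information. Shortening the disguised code at a set $I$ of $s$ coordinates lowers the length to $n-s$ and the dimension to $k-s$. This reduces the bound $k(m+1)-\tfrac{m(m+1)}{2}$ by $s(m+1)$ but the length only by $s$, so for suitable $s$ the bound drops below the new length. Since the attacker does not know the secret structure, we shorten at an arbitrary set $I$. The first step is therefore to show that the shortened code is again a disguised \ac{GSRS} (equivalently \ac{GLRS}) code of length $n-s$ and dimension $k-s$.

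For this structural step we work with the \ac{GSRS} representation, using \cref{lem:squares_Hamming_disguising} to absorb the monomial disguise. The subcode of $\genSkewRS{\vecalpha, \veclambda; n, k}_\aut$ vanishing on $I$ consists of the evaluations of the skew polynomials $f = g \cdot P_I$ with $\deg g < k-s$, where $P_I = \lclm_{j\in I}(x-\alpha_j)$ has degree $s$ by $\PAut$-independence. We then apply the product rule for remainder evaluation, $\remev{(gP_I)}{\alpha} = \remev{g}{\ConjAut{\alpha}{\remev{P_I}{\alpha}}}\,\remev{P_I}{\alpha}$ for $\remev{P_I}{\alpha}\neq 0$. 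Here $\remev{P_I}{\alpha_j}\neq 0$ for $j\notin I$, again by $\PAut$-independence, since a root outside $I$ would raise the degree of the lclm. This identifies the shortened code with $\genSkewRS{\vecalpha', \veclambda'; n-s, k-s}_\aut$, where $\vecalpha'_j = \ConjAut{\alpha_j}{\remev{P_I}{\alpha_j}}$ and $\veclambda'_j = \lambda_j\remev{P_I}{\alpha_j}$ for $j \notin I$.

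We must still check that $\vecalpha'$ is $\PAut$-independent. This should follow from the fact that $\lclm$ over all of $\vecalpha$ equals a left multiple of $P_I$ with quotient $\lclm_{j\notin I}(x-\alpha'_j)$, which forces the latter to have degree $n-s$. Alternatively, one can route the argument through the dual via \cref{lem:gsrs_duals} and puncturing, using the duality $(\shortCIdx{I})^\perp=\puncVarIdx{(\mycode{C}^\perp)}{I}$; puncturing trivially preserves the \ac{GSRS} form. We expect this P-independence bookkeeping to be the main technical point.

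Next we choose $s$. We need $k-s > m+1$ and $(k-s)(m+1) - \tfrac{m(m+1)}{2} < n-s$. The second condition is $k - s < \tfrac{n-s}{m+1} + \tfrac{m}{2}$, which is equivalent to $s\,m > k(m+1) - n - \tfrac{m(m+1)}{2}$. Taking $s$ maximal subject to $k-s = m+2$, i.e.\ $s = k-m-2$, the condition becomes $(m+2)(m+1) - \tfrac{m(m+1)}{2} < n-k+m+2$, that is, $k < n - \tfrac{1}{2}(m^2+3m)$, which is exactly the hypothesis. The hypothesis $k > m+1$ guarantees $s \geq 0$. By \cref{thm:squares_of_gsrs_glrs} the shortened code then has square dimension at most $(m+2)(m+1) - \tfrac{m(m+1)}{2}$, which is strictly less than both $n-s$ and $\binom{m+3}{2}$.

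Finally, we compare with random codes. Shortening a random code of dimension $k$ at $s$ positions gives, with high probability, a random-like code of dimension $k-s$ and length $n-s$. By \cref{lem:random_squares}, its square has dimension $\min\{\binom{k-s+1}{2}, n-s\}$ with high probability. This value exceeds the bound above, so computing the square dimension of the shortened public code, which costs polynomial-time linear algebra, distinguishes the two cases. We assume here that the asymptotic statement of \cref{lem:random_squares} is applicable to the shortened random code.
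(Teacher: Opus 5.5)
Your proposal is correct and follows the paper's proof: you shorten the disguised code at $s = k-m-2$ positions to land in the regime of \cref{lem:naive_dist}, and the same computation shows that a valid $s$ exists exactly when $k < n - \tfrac{1}{2}(m^2+3m)$. Beyond that, you justify the step the paper only asserts, namely that shortening preserves the \ac{GSRS} structure, via the product rule and the factorization of the $\lclm$ (strictly, only the spanning set and hence the upper bound on the square dimension is needed, which does not require $\PAut$-independence of the new locators), and you make explicit the paper's implicit assumption that shortened random codes still behave as in \cref{lem:random_squares}.
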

	
	\begin{proof}
		We want to shorten $\mycode{C}$ such that we end up in the naive case of \cref{lem:naive_dist}.
		Let us denote the number of shortened coordinates by $s$ and note that the shortened code is a GSRS code of length $n-s$ and dimension $k-s$.
		Hence, we need to find an integer $s$ that satisfies both
		\begin{equation}
		    k - s > m + 1
		    \quad \text{and} \quad
		    k - s < \tfrac{n-s}{m+1} + \tfrac{m}{2}.
		\end{equation}
		While the first inequality is equivalent to $s \leq k - m - 2$, the second can be simplified to
		\begin{equation}
		    s > \tfrac{1}{m}\bigl((k - \tfrac{m}{2})(m + 1) - n\bigr).
		\end{equation}
		A valid $s$ thus exists if
		\begin{equation}
		    k - m - 2 > \tfrac{1}{m}\bigl((k - \tfrac{m}{2})(m + 1) - n\bigr)
		    \quad \iff \quad
		    k < n - \tfrac{1}{2} (m^2 + 3m)
		\end{equation}
		and \cref{lem:naive_dist} then ensures that the square code dimension successfully distinguishes the shortened code from random ones with high probability.
		In fact, $s = k - m - 2$ is always a viable choice if the necessary conditions from above are satisfied.
		\qed
	\end{proof}
	
	\begin{remark}
	    The restriction $k < n - \tfrac{1}{2} (m^2 + 3m)$ in \cref{thm:distinguisher} can also be interpreted as a condition on the redundancy $n - k$ or on the code length $n$.
	    While the latter reads $n > \tfrac{1}{2} (m^2 + 3m) + k$, the naive approach from \cref{lem:naive_dist} implicitly assumes $n > \tfrac{1}{2}(m^2 + 5m) + 2$ since $m + 1 < k < \tfrac{n}{m+1} + \tfrac{m}{2}$ cannot be satisfied otherwise.
	\end{remark}
	
	Observe that the computation of the square code of a $k$-dimensional code in $\Fqm^{n}$ takes $\OCompl{k^2n^2}$ operations \cite[Prop.~5]{CouvreurGaboritEtAl2014DistinguisherBasedAttacks}.
	Since its dimension can be derived via Gaussian elimination, the above distinguisher is clearly polynomial-time.

	\section{Connections to known distinguishers}
	\label{sec:other_distinguishers}
	
	We use this section to recall distinguishers for special classes of skew evaluation codes from the literature to give context for our results.
	We discuss them by code family: first GRS codes, then Gabidulin codes, and finally LRS codes.
	
	\paragraph{GRS codes.}
	The behavior of \ac{GRS} under the Schur product was studied e.g.\ in \cite{Wieschebrink2010CryptanalysisNiederreiterPublic,CouvreurGaboritEtAl2014DistinguisherBasedAttacks}.
	It holds $\dim (\GRS{\b, \veclambda; n, k}^{\star 2}) = \min\{ 2k-1,n\}$ due to the code equality $\GRS{\b, \veclambda; n, k}^{\star 2} = \GRS{\b, \veclambda^{\star 2}; n, 2k - 1}$.
	We obtain the resulting known distinguisher as a special case of \cref{thm:distinguisher}, as the \ac{GRS} setting with $m = 1$ yields $k(m+1)-\tfrac{m(m+1)}{2} = 2k -1$.
	In particular, both distinguishers share the restriction $k \neq 2$ since $k=2$ results in squares of dimension $3$ both for random and \ac{GRS} codes.
	
	Remark that there are efficient key-recovery attacks for \ac{GRS} codes \cite{SidelnikovShestakov1992InsecurityCryptosystemsBased,CouvreurGaboritEtAl2014DistinguisherBasedAttacks} whose analogs for \ac{GSRS} and \ac{GLRS} codes are interesting open research questions.
	
	\paragraph{Gabidulin codes.}
	We first note that our square code distinguisher in the GSRS (or GLRS) representation cannot be applied, since Gabidulin codes require $m\geq n$, which in turn implies that $m\geq k$.
	By \cref{rem:k<m} we thus expect the square code to have the same dimension as the square of a random code.
 
	Many McEliece-like cryptosystems using Gabidulin codes without advanced disguising strategies have been broken with attacks based on the following behavior related to the Frobenius automorphism $\frob$ (see e.g.\ \cite{Overbeck2005NewStructuralAttack,Overbeck2007StructuralAttacksPublic,Horlemann-TrautmannMarshallEtAl2018ExtensionOverbecksAttack,Horlemann-TrautmannMarshallEtAl2016ConsiderationsRankBased}):
	Consider a code $\mycode{C} \subseteq \Fqm^n$ of dimension $k$ over a large enough field $\Fqm$.
	Then, with high probability, the dimension of the vector space $\mycode{C} + \frob(\mycode{C})$ equals $\min\{2k, n\}$ if $\mycode{C}$ is a random code but $\min\{k+1, n\}$ for $\mycode{C}$ being a Gabidulin code.
	In fact, it is true that $\Gab{\b; n, k} + \frob(\Gab{\b; n, k}) = \Gab{\b; n, k+1}$.
	The property that the dimension increases by exactly one makes it easy to iteratively apply it until a code of codimension one is achieved.
	This can then be used to derive (equivalent) code locators $\b$ via the known duality relation for Gabidulin codes.
	
	Observe that this behavior does not carry over to all \ac{GSRS} and \ac{GLRS} codes.
	Nevertheless, we have a look at  how $\mycode{C} + \frob(\mycode{C})$ behaves for a \ac{GSRS} code $\mycode{C}$ as it is interesting to see how the pieces fall into place for Gabidulin codes.
    The $i$-th row of the generator matrix $\skewVandermonde{k}{\vecalpha} \cdot \diag(\veclambda)$ of $\genSkewRS{\vecalpha, \veclambda; n, k}_\aut$ is $\vecalpha^{\doubleBrackAut{i-1}} \cdot \diag(\veclambda)$, and applying $\frob$ to it yields
	\begin{equation}\label{eq:frob_row}
	    \frob(\vecalpha^{\doubleBrackAut{i-1}} \cdot \diag(\veclambda)) = \vecalpha^{q \cdot \doubleBrackAut{i-1}} \cdot \diag(\veclambda^q)
		\quad \text{for } i = 1, \dots, k.
	\end{equation}
	If we want to achieve the analog $\genSkewRS{\vecalpha, \veclambda; n, k}_\aut + \frob(\genSkewRS{\vecalpha, \veclambda; n, k}_\aut) = \genSkewRS{\vecalpha, \veclambda; n, k+1}_\aut$ of the equality that holds for Gabidulin codes, each row in \eqref{eq:frob_row} needs to be an $\Fqm$-linear combination of rows of $\skewVandermonde{k+1}{\vecalpha} \cdot \diag(\veclambda)$.
	In other words, we would need coefficients $\mu_{i, 1}, \dots, \mu_{i, k+1} \in \Fqm$ for every $i = 1, \dots, k$ such that for each column $j = 1, \dots, n$ it holds
	\begin{gather}\label{eq:overbeck-col-mult}
	    \alpha_j^{q \cdot \doubleBrackAut{i-1}} \lambda_j^q
		= \sum_{l=1}^{k+1} \mu_{i, l} \alpha_j^{\doubleBrackAut{l-1}} \lambda_j.
	\end{gather}
	This is only true for very particular column multipliers such as the ones of Gabidulin codes which satisfy $\Gab{\vecalpha; n, k} = \genSkewRS{\vecalpha, \vecalpha^{\frac{1}{q-1}}; n,k}_{\sigma}$ in the setting $\aut = \frob$ according to \eqref{eq:gab_as_gsrs_glrs}.
	
	\paragraph{\ac{LRS} codes.}
	There is an Overbeck-like distinguisher for \ac{LRS} codes which runs in polynomial time if the evaluation parameters, i.e., the vector $\a$ in \cref{def:glrs_code}, are known \cite{HoermannBartzEtAl2023DistinguishingRecoveringGeneralized}.
	The respective operator for a vector $\a \in \Fqm^{\n}$ and any $j \in \NN$ is defined as
	\begin{equation}\label{eq:def_full_operator}
		\Gamma_{\a}^j: \quad \Fqm^{k \times \n} \to \Fqm^{(j+1)k \times \n},
		\qquad \M \mapsto
		\begin{pmatrix}
			\M \\
			\opfull{\a}{\M} \\
			\vdots \\
			\opfullexp{\a}{\M}{j}
		\end{pmatrix}.
	\end{equation}
	While a generator matrix $\G \in \Fqm^{k \times \n}$ of the \ac{LRS} code $\linRS{\b, \a; n, k}_\aut$ yields $\rk(\Gamma_{\a}^j(\G)) = k + j$ for $j =0, \dots, n-k$, a random full-rank matrix $\R \in \Fqm^{k \times \n}$ whose blocks have full column rank over $\Fq$ likely satisfies $\rk(\Gamma_{\a}^j(\R)) = \min\{(j+1)k, n\}$.
	In fact, the code generated by $\Gamma_{\a}^j(\G)$ is precisely $\linRS{\b, \a; n, k+j}_\aut$ for $j = 0, \dots, n-k$.
	The next lemma shows that we can translate this distinguisher from \cite{HoermannBartzEtAl2023DistinguishingRecoveringGeneralized} into the \ac{GSRS} setting and that it behaves conceptually similar.
	While $\Gamma_{\a}^j(\cdot)$ applied to a \ac{LRS} generator matrix in generalized Moore form basically adds $j$ additional rows, the respective \ac{GSRS} counterpart adds $j$ rows to the generator matrix of the corresponding \ac{GSRS} code in skew Vandermonde form multiplied by the diagonal matrix containing the particularly chosen column multipliers.
	
	\begin{lemma}
		Let $\vecalpha \in \Fqm^n$ be a $\PAut$-independent vector with the representation $\pi(\vecalpha) = \ConjAut{\a}{\b}$ from \cref{lem:p-indep_conditions}.
		The Overbeck-like distinguisher $\Gamma_{\a}^j(\cdot)$ with $j=0, \dots, n-k$ for \ac{LRS} codes can be applied to the \ac{GSRS} code $\genSkewRS{\vecalpha, \b; n, k}_\aut$ via the transformations between \ac{GSRS} and \ac{GLRS} codes.
		Then, the resulting code in the \ac{GSRS} setting is $\genSkewRS{\vecalpha, \b; n, k+j}_\aut$.
	\end{lemma}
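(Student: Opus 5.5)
The plan is to route everything through the transformation lemmas and then apply the known LRS behaviour of $\Gamma_{\a}^j$. The column multipliers $\b$ in $\genSkewRS{\vecalpha, \b; n, k}_\aut$ are chosen exactly so that the generalized linearized counterpart has trivial column multipliers. That is what lets the LRS result be used unchanged.

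\emph{Step 1 (translate to GLRS).} Since $\pi$ only permutes coordinates, I will work with $\pi(\vecalpha) = \ConjAut{\a}{\b}$ and suppress $\pi$, as in \cref{rem:sum-rank_blocks}. By \cref{lem:gsrs-to-glrs} with $\veclambda = \b$,
\begin{equation}
\genSkewRS{\pi(\vecalpha), \b; n, k}_\aut = \GLRS{\b, \a, \b \star \b^{-1}; n, k}_\aut = \linRS{\b, \a; n, k}_\aut .
\end{equation}
For this I need the GLRS hypotheses. \Cref{lem:p-indep_conditions} applied to the $\PAut$-independent $\vecalpha$ gives that each block of $\b$ is $\Fq$-linearly independent, i.e.\ $\wtSrk(\b) = n$, and that $\a$ has nonzero entries from distinct classes. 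These classes are nontrivial because $\a$ is nonzero. Concretely, the generator matrix $\skewVandermonde{k}{\ConjAut{\a}{\b}}\diag(\b)$ equals $\opVandermonde{k}{\b}{\a}$ entrywise, by \eqref{eq:connection_remainder_gen-op}: each entry satisfies $\ConjAut{a}{b}^{\doubleBrackAut{i}}\, b = \aut^i(b)\, a^{\doubleBrackAut{i}}$.

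\emph{Step 2 (apply the LRS result).} Taking this generalized Moore matrix as $\G$, the statement recalled before the lemma (from \cite{HoermannBartzEtAl2023DistinguishingRecoveringGeneralized}) says that $\Gamma_{\a}^j(\G)$ generates $\linRS{\b, \a; n, k+j}_\aut$ for $j = 0, \dots, n-k$.

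To make this self-contained, I would check two things directly:
\begin{itemize}
\item[(i)] The operator acts row-wise: $\opfull{a}{\opfullexp{a}{b}{i}} = \opfullexp{a}{b}{i+1}$. This follows from $\aut(\aut^i(b)a^{\doubleBrackAut{i}})\,a = \aut^{i+1}(b)\,a^{q^s\doubleBrackAut{i}+1}$ together with $q^s\doubleBrackAut{i}+1 = \doubleBrackAut{i+1}$.
\item[(ii)] The operator is $\aut$-semilinear: $\opfull{a}{\mu c} = \aut(\mu)\,\opfull{a}{c}$.
\end{itemize}
By (ii), the row space of $\Gamma_{\a}^j(\G)$ does not depend on the choice of generator matrix. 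By (i), it is spanned by the first $k+j$ Moore rows, which have full rank because $\wtSrk(\b)=n$ and $k + j \le n$.

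\emph{Step 3 (translate back).} Applying \cref{lem:glrs-to-gsrs} with $\veclambda = \1$ gives
\begin{equation}
\linRS{\b, \a; n, k+j}_\aut = \genSkewRS{\ConjAut{\a}{\b}, \b; n, k+j}_\aut ,
\end{equation}
and undoing $\pi$ yields $\genSkewRS{\vecalpha, \b; n, k+j}_\aut$.

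The main obstacle is only the bookkeeping of the permutation and of the blockwise parameter $\a$ inside $\Gamma_{\a}^j$. The operator must be applied with the correctly permuted $\a$: for the unpermuted code this means using $\pi^{-1}(\a)$ and $\pi^{-1}(\b)$ coordinatewise, which is harmless because $\Gamma$ acts columnwise.
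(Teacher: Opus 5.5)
Your proposal is correct and follows the paper's proof. With column multipliers $\b$, \cref{lem:gsrs-to-glrs} turns the code into $\linRS{\b, \a; n, k}_\aut$, the known LRS behaviour of $\Gamma_{\a}^j$ yields $\linRS{\b, \a; n, k+j}_\aut$, and \cref{lem:glrs-to-gsrs} with trivial multipliers translates back, where the paper simply assumes $\pi = \id$ and you treat $\pi$ as a coordinate permutation. Your extra checks (i) and (ii) only make the cited LRS fact self-contained, including that the result does not depend on the chosen generator matrix.
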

	
	\begin{proof}
	    Choose $\a, \b \in \Fqm^{\n}$ appropriately for \cref{lem:gsrs-to-glrs} and assume $\pi = \id$ without loss of generality such that
	    \begin{equation}
	        \genSkewRS{\vecalpha, \veclambda; n, k}_\aut = \GLRS{\b, \a, \veclambda \star \b^{-1}; n, k}_\aut.
	    \end{equation}
	    Then, we can apply the Overbeck-like distinguisher $\Gamma_{\a}^j(\cdot)$ to the right-hand side if the column multipliers are trivial, i.e., if $\veclambda = \b$.
	    The output is a generator matrix of $\linRS{\b, \a; n, k+j}_\aut$ and can be translated back into the \ac{GSRS} setting via \cref{lem:glrs-to-gsrs}.
	    We obtain $\genSkewRS{\vecalpha, \b; n, k+j}_\aut$, which concludes the proof.
		\qed
	\end{proof}

	\section{Experimental results}
	\label{sec:experiments}
	
	We ran simulations in SageMath~\cite{stein_sagemath} to verify our results experimentally.
	Since \ac{GSRS} and \ac{GLRS} codes can be transformed into each other (see \cref{subsec:code_connections}), we only simulated the square code distinguisher for \ac{GSRS} codes.
	Further, we did not showcase the shortening mechanism from \cref{thm:distinguisher} which makes the naive distinguisher from \cref{lem:naive_dist} applicable to a wider range of parameters.
	
	We picked several parameter sets $(q, m, n, k)$ and observed 100 iterations for each of them.
	In every run, we chose a code $\genSkewRS{\vecalpha, \veclambda; n, k}_\aut \subseteq \Fqm^n$ with respect to a randomly chosen automorphism $\aut$ with fixed field $\Fq$, with random nonzero block multipliers $\veclambda \in \Fqm^n$, and with random $\PAut$-independent code locators $\vecalpha \in \Fqm^n$.
	Even though \cref{thm:distinguisher} shows that Hamming-isometric disguising does not affect the applicability of the square code distinguisher, we decided to disguise the code nevertheless for illustrative purposes.
	As a consequence, we computed the square code dimension of the code generated by $\skewVandermonde{k}{\vecalpha} \cdot \diag(\veclambda) \cdot \M$ where $\M$ denotes a random monomial matrix, i.e., a Hamming-metric isometry.
	We also determined the dimension of the squares of random codes with the same parameters to show when the square code dimension differs and allows to distinguish.
	
	Our experimental results are showcased in \cref{tab:experiments}.
	Since the right-hand side of the condition $k < \tfrac{n}{m+1} + \tfrac{m}{2}$ from \cref{lem:naive_dist} and thus the range of distinguishable codes grows for fixed $q$ and $m$, we picked the maximum possible code length $n = m(q-1)$ in all cases.
	Recall that the upper bound $n \leq m(q-1)$comes from the need for P-independent code locators for \ac{GSRS} codes and \cref{lem:p-indep_conditions}.
	Note that we included in each case the largest $k$ for which the distinguishing conditions are satisfied and the smallest one for which they are not.
	Additionally, we picked another $k$ for which the codes are distinguishable to also collect some data points that are not directly at the breaking point.
	
	The expected square code dimension for both \ac{GSRS} and random codes coincided precisely with the predicted bounds in all our experiments: squares of \ac{GSRS} codes have dimension $\min\bigl\{k(m+1) - \tfrac{m(m+1)}{2}, n\bigr\}$ (see \cref{thm:squares_of_gsrs_glrs}), and the dimension of squares of random codes is $\min\bigl\{\tfrac{k (k+1)}{2}, n\bigr\}$ (see \cref{lem:random_squares}).
	
	\begin{table}
	    \centering
	    \caption{Experimental results for the square code dimension of \ac{GSRS} and random codes with different parameters.}
	    \label{tab:experiments}
	    \begin{tabular}{cccc|cc}
	        \toprule
			\multicolumn{4}{c|}{\textbf{Code parameters}} & \multicolumn{2}{c}{\textbf{Square code dimension}} \\[5pt]
	        $q$ & $m$ & $n$ & $k$ & GSRS codes & Random codes \\
	        \midrule
			$2^4$ & 4 & 60 & 10 & 40 & \numtimes{55}{100} \\
			$2^4$ & 4 & 60 & 13 & 55 & \numtimes{60}{100} \\
			$2^4$ & 4 & 60 & 14 & 60 & \numtimes{60}{100} \\
	        \midrule
			$2^4$ & 6 & 90 & 12 & 63 & \numtimes{78}{100} \\
			$2^4$ & 6 & 90 & 15 & 84 & \numtimes{90}{100} \\
			$2^4$ & 6 & 90 & 16 & 90 & \numtimes{90}{100} \\
	        \midrule
			$2^6$ & 2 & 126 & 32 & 93 & 126 \\
			$2^6$ & 2 & 126 & 42 & 123 & 126 \\
			$2^6$ & 2 & 126 & 43 & 126 & 126 \\
	        \midrule
			$2^6$ & 4 & 252 & 42 & 200 & 252 \\
			$2^6$ & 4 & 252 & 52 & 250 & 252 \\
			$2^6$ & 4 & 252 & 53 & 252 & 252 \\
	        \midrule
			$2^6$ & 6 & 378 & 44 & 287 & 378 \\
			$2^6$ & 6 & 378 & 56 & 371 & 378 \\
			$2^6$ & 6 & 378 & 57 & 378 & 378 \\
			\bottomrule
	    \end{tabular}
	\end{table}

	\section{Conclusion}
	\label{sec:conclusion}

    In this work, we studied the algebraic structure and distinguishability of generalized skew Reed–Solomon (GSRS) and generalized linearized Reed–Solomon (GLRS) codes in the context of code-based cryptography. We first establish explicit transformations between the GLRS and GSRS settings, showing that structural properties of one family can be transferred to the other. Using these transformations, we prove that the dual of a GSRS code is itself a GSRS code under certain conditions. Some of these results generalize previously known statements, while others are new contributions, providing a rigorous algebraic framework for further analysis of these code families.

We then prove that both GSRS and GLRS codes admit a decomposition into direct sums of classical GRS subcodes. As a consequence, we determine bounds on the dimension of the square code for these codes and construct a polynomial-time distinguisher capable of separating them from random linear codes for large parameter regimes. The distinguisher remains valid under Hamming-metric isometries, and, through shortening operations, applies to a well-defined range of code parameters.

Finally, we emphasize that the existence of a distinguisher does not immediately yield key recovery or a complete cryptographic break of a McEliece-like cryptosystem based on the respective code family. While the distinguisher identifies GSRS and GLRS codes among random linear codes, constructing efficient attacks that recover secret keys from the public code remains an open problem. Addressing this problem requires new techniques beyond the square code criterion, and it constitutes an important direction for future research. Altogether, our results provide precise algebraic insights into GSRS and GLRS codes and establish a framework for assessing their security in post-quantum cryptographic schemes.

	\bibliographystyle{splncs04}
	\bibliography{references}

	\appendix
	\renewcommand{\theHsection}{appendix.\Alph{section}}
	\crefalias{section}{appendix}

	\section{The \reskew cryptosystem}
	\label{sec:reskew}
	
	We use this appendix to introduce the \reskew cryptosystem which is built on \textbf{Re}ed--\textbf{S}olomon codes in a s\textbf{kew} setting and uses Hamming-isometric disguising.
	It allows us to showcase the cryptographic potential of \ac{GSRS} codes by selecting parameter sets adhering to \ac{NIST}'s security levels and providing their key and ciphertext sizes for comparison with state-of-the-art schemes.
	Moreover, \reskew is a straightforward example of a code-based cryptosystem whose public keys are susceptible to the square code distinguisher presented in the main paper.
	As pointed out in \cref{sec:distinguishers_and_crypto}, efficient distinguishing does not directly break a McEliece-like cryptosystem but we understand the skepticism the community usually shows for systems based on distinguishable algebraic codes.
	
	In the design of the scheme, we closely follow Classic McEliece \cite{ClassicMcEliece-spec} which advanced to the fourth and last round of \ac{NIST}'s standardization project for post-quantum \acp{KEM} \cite{AlagicAponEtAl2022StatusReportFourth}.
	In particular, \reskew employs the Niederreiter framework~\cite{Niederreiter1986KnapsackTypeCryptosystems}, which is equivalent to McEliece's original approach to code-based cryptography~\cite{McEliece1978PublicKeyCryptosystem} according to~\cite{LiDengEtAl1994EquivalenceMceliecesNiederreiters}.
	We also adopt systematic public keys, that is, we do not use a whole parity-check matrix $\H \in \Fqm^{(n-k) \times n}$ but only the nonidentity part $\T$ of its systematic form $(\I_{n - k} \mid \T)$.
	
	As in the main part of the paper, we work with the Hamming metric even though many \ac{GSRS} codes have good properties with respect to the skew metric as well (see \cref{lem:skew-metric-gsrs}).
	Our approach has the advantage that we can employ well-established estimators for the complexity of generic decoding when selecting parameters \cite{cryptographicestimators}.
	Nevertheless, an adaptation of \reskew to the skew metric probably allows to reduce the public key size even further.
	It is also possible to translate the system to the sum-rank metric by switching to \ac{GLRS} codes.

	\subsection{System description}
	\label{subsec:reskew-description}

	We present \reskew as a \ac{PKE} scheme and thus describe it by means of the three algorithms for key generation, encryption, and decryption in the following.
	We start with a parameter set of the form $(q, m, s, n, k, t)$ containing the following:
	\begin{itemize}
		\item A prime power $q$ and an integer $m > 1$ which determine the considered field extension $\Fqm$ over $\Fq$.
		\item An integer $0 < s < m$ which chooses the $\Fqm$-automorphism $\aut$ as $\aut(x) = x^{q^s}$ for $x \in \Fqm$ and thus decides on the skew-polynomial ring $\SkewPolyringZeroDer$.
		\item Three integers $n$, $k$, and $t$ which define the code's length $n \leq m (q - 1)$, its dimension $k < n$, and a decodable error weight $t \leq \left\lfloor \tfrac{1}{2}(n - k) \right\rfloor$.
	\end{itemize}
	
	\begin{algorithm}[ht]
		\caption{\reskew key generation.}\label{alg:key_gen}
		
		\Input{\reskew parameter set with $\params = (q, m, s, n, k, t)$.}
		
		\Output{\reskew key pair $(\pk, \sk)$.}
		
		\BlankLine
		
		\Fn{\functionTitle{KeyGen}{$\params$}}{
			Set up the GSRS generator matrix $\Gsec = \skewVandermonde{k}{\b} \cdot \diag(\veclambda) \in \Fqm^{k \times n}$ for random $\PAut$-independent code locators $\b \in \Fqm^{n}$ and random nonzero column multipliers $\veclambda \in \Fqm^{n}$. \;
			Compute the systematic form $(\U \mid \I_{k})$ of $\Gsec$. \;
			Set up the GSRS parity-check matrix $\Hpub = (\I_{n-k} \mid -\U^{\top}) \in \Fqm^{(n - k) \times n}$. \;
			\Return{
				public key $\pk = (\T)$ with $\T = -\U^{\top}$,
				secret key $\sk = (\b, \veclambda)$.
			}
		}
	\end{algorithm}
	
	\paragraph{Key generation.}
	\cref{alg:key_gen} displays the \reskew key generation and starts from a selected parameter set.
	After suitable code locators $\b$ and column multipliers $\veclambda$ are drawn uniformly at random, the generator matrix $\Gsec$ is constructed as the Vandermonde-like matrix $\skewVandermonde{k}{\b} \cdot \diag(\veclambda)$.
	Then, its systematic form is computed to disguise the secret parameters $\b$ and $\veclambda$, and to derive the parity-check matrix $\Hpub$.
	Since the identity part of $\Hpub$ can be restored easily, only $\T$ is used as public key $\pk$.
	Thus, the $k (n - k)$ entries of $\T$ determine \reskew's public key size, which is $k (n-k) \cdot \lceil \log_2(q^m) \rceil$ bits.
	The secret key $\sk$ consists of the two length-$n$ vectors $\b$ and $\veclambda$.
	As a consequence, $2 n$ field elements need to be stored and the secret-key size amounts to $2 n \cdot \lceil \log_2(q^m) \rceil$ bits.
	
	Note that the insecure subclasses of \ac{GRS} and Gabidulin codes can be filtered out easily in the third line of~\cref{alg:key_gen}.
	However, the probability of sampling such a code is negligible and we therefore omit the check.
	Further, notice that \ac{GSRS} codes are \ac{MDS} and thus naturally admit a parity-check matrix in systematic form \cite[Prop.~11.4]{Roth2006IntroductionCodingTheory}.
	This ensures that \cref{alg:key_gen} always succeeds.
	
	\begin{algorithm}[ht]
		\caption{\reskew encryption.}\label{alg:encaps}
		
		\Input{\reskew public key $\pk = (\T)$, message $\m \in \Fqm^{n}$ of weight $t$.}
		
		\Output{Ciphertext $\c$.}
		
		\BlankLine
		
		\Fn{\functionTitle{Encrypt}{$\pk, \m$}}{
			Set up $\Hpub = (\I_{n-k} \mid \T)$. \;
			Compute $\c = \m \Hpub^{\top} \in \Fqm^{n-k}$. \;
			\Return{ciphertext $\c$.}
		}
	\end{algorithm}
	
	\paragraph{Encryption.}
	\cref{alg:encaps} depicts the encryption process whose inputs are a public key $\T$ and a message $\m$ of weight $t$ to be encrypted.
	After the parity-check matrix $\Hpub = (\I_{n-k} \mid \T)$ is set up, $\m$ is encrypted as $\m \Hpub^{\top}$.
	Since the ciphertext has length $n - k$, it can be stored in $(n-k) \cdot \lceil \log_2(q^m) \rceil$ bits.
	
	\begin{algorithm}[ht]
		\caption{\reskew decryption.}\label{alg:decaps}
		
		\Input{\reskew secret key $\sk = (\b, \veclambda)$, ciphertext $\c \in \Fqm^{n - k}$.} %
		
		\Output{Decrypted message $\m$.}
		
		\BlankLine
		
		\Fn{\functionTitle{Decrypt}{$\sk, \c$}}{
			Set up $\Gsec = \skewVandermonde{k}{\b} \cdot \diag(\veclambda)$. \;
			Obtain $\cZero = (\c \mid \0) \in \Fqm^{n}$ by appending $k$ zeros to $\c$. \;
			Use $\Gsec$ to find a codeword $\cHat$ with distance at most $t$ from $\cZero$. \;
			Recover $\m = \cZero - \cHat$. \;
			\Return{decrypted message $\m$.}
		}

	\end{algorithm}
	
	\paragraph{Decryption.}
	\reskew's decryption is displayed in~\cref{alg:decaps} and starts from a ciphertext $\c$ and the corresponding secret key $\sk = (\b, \veclambda)$.
	First, the secret generator matrix $\Gsec$ is constructed as $\skewVandermonde{k}{\b} \cdot \diag(\veclambda)$ and a length-$n$ vector $\cZero$ is obtained by appending $k$ zeros to the ciphertext $\c$.
	Then, $\Gsec$ is used to perform bounded-distance decoding with radius $t$ on the vector $\cZero$, that is, the \ac{GSRS} decoder finds a codeword $\cHat$ whose distance to $\cZero$ is at most $t$.
	The decrypted message $\m$ is then retrieved as the difference $\cZero - \cHat$, which we prove in~\cref{lem:decryption} below.
	The main idea is to exploit the systematic form of the public key to derive a vector $\cZero$ for which the unique codeword $\cHat$ with distance at most $t$ can be determined explicitly.
	The lemma adapts the decryption strategy used in Classic McEliece and described in \cite[Sec.~4.4]{ClassicMcEliece-spec} from binary extension fields to arbitrary finite fields.
	
	\begin{lemma}
		\label{lem:decryption}
	    When $\c = \textsc{Encrypt}(\pk, \m)$ is a faithfully generated \reskew ciphertext and $(\pk, \sk)$ is a valid key pair, then $\textsc{Decrypt}(\sk, \c)$ from~\cref{alg:decaps} correctly recovers the original message $\m$.
	\end{lemma}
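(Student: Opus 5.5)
We show that $\cZero - \m$ is a codeword of the secret \ac{GSRS} code and that its distance to $\cZero$ is at most $t$. Unique decodability then forces the decoder output $\cHat$ to equal $\cZero - \m$.

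\emph{Step 1: $\cZero$ and $\m$ have the same syndrome.} By \cref{alg:encaps}, $\c = \m \Hpub^{\top}$ with $\Hpub = (\I_{n-k} \mid \T)$. Since the first $n-k$ columns of $\Hpub$ form the identity and the last $k$ entries of $\cZero = (\c \mid \0)$ vanish, we get
\begin{equation}
	\cZero \Hpub^{\top} = \c \, \I_{n-k}^{\top} + \0 \cdot \T^{\top} = \c = \m \Hpub^{\top}.
\end{equation}
Hence $(\cZero - \m) \Hpub^{\top} = \0$.

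\emph{Step 2: $\Hpub$ is a parity-check matrix of the secret code.} Write the systematic form of $\Gsec$ as $(\U \mid \I_k) = \mat{S}\Gsec$ with $\mat{S}$ invertible, so it generates the same \ac{GSRS} code. A direct computation gives
\begin{equation}
	(\U \mid \I_k)(\I_{n-k} \mid -\U^{\top})^{\top} = \U - \U = \0.
\end{equation}
Since $\Hpub$ has full rank $n-k$, it is a parity-check matrix of $\genSkewRS{\b, \veclambda; n, k}_\aut$. Combined with Step 1, $\cZero - \m$ lies in the secret code.

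\emph{Step 3: distance and uniqueness.} We have $\wtH(\cZero - (\cZero - \m)) = \wtH(\m) = t$. The code is \ac{MDS} with $\dH = n-k+1$, and $t \le \lfloor \tfrac{1}{2}(n-k) \rfloor$, so $2t < \dH$. Therefore $\cZero - \m$ is the unique codeword within Hamming distance $t$ of $\cZero$. Any correct bounded-distance \ac{GSRS} decoder of radius $t$ (e.g.\ those cited in the introduction) must return $\cHat = \cZero - \m$, and \cref{alg:decaps} outputs $\cZero - \cHat = \m$.

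The only point needing care is Step 2. It relies on the existence of the systematic form $(\U \mid \I_k)$, i.e.\ on the last $k$ columns of $\Gsec$ being linearly independent. This holds because \ac{GSRS} codes are \ac{MDS}, so every set of $k$ columns of a generator matrix is an information set. The paper already notes this when arguing that \cref{alg:key_gen} always succeeds.
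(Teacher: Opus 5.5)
Your proof is correct and follows the paper's argument. It shows $\cZero \Hpub^{\top} = \c$, deduces that $\cZero - \m$ is a codeword at Hamming distance $\wtH(\m) = t$ from $\cZero$, and uses the MDS property ($2t \le n-k < n-k+1$) to conclude that this codeword is the unique one the decoder can return. Your Step 2, checking that $\Hpub = (\I_{n-k} \mid -\U^{\top})$ is a parity-check matrix of the secret GSRS code, is something the paper takes for granted from the key generation, so it is a welcome extra detail rather than a different route.
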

	
	\begin{proof}
		First observe that the vector $\cZero = (\c \mid \0) \in \Fqm^{n}$ satisfies
		\begin{equation}
			\cZero \Hpub^{\top} = (\c \mid \0) \cdot (\I_{n - k} \mid \T)^{\top} = \c.
		\end{equation}
		This implies that $\cZero - \m$ is a codeword of the \ac{GSRS} code defined by the parity-check matrix $\Hpub$ because
		\begin{equation}
			(\cZero - \m) \cdot \Hpub^{\top} = \cZero \Hpub^{\top} - \m \Hpub^{\top} = \c - \c = \0
		\end{equation}
		applies.
		Further, $\cZero - \m$ has distance $t$ from $\cZero$, as $\distH(\cZero, \cZero - \m) = \wtH(\m) = t$.
		
		Note that the considered \ac{GSRS} code has length $n$ and dimension $k$, and thus a minimum distance of $n - k + 1$.
		Since $t$ is chosen as $t = \left\lfloor \tfrac{1}{2}(n-k) \right\rfloor$, at most one codeword can lie within a distance of $t$ of a given point.
		Therefore, $\cZero - \m$ is the unique codeword of distance at most $t$ from $\cZero$ and the \ac{GSRS} decoder in~\cref{alg:decaps} recovers it as $\cHat$.
		As a consequence, we can recover $\m$ as $\cZero - \cHat = \cZero - (\cZero - \m) = \m$.
		\qed
	\end{proof}

	Remark that the decryption algorithm relies on unique decoding of \ac{GSRS} codes up to half the Singleton bound.
	This can be achieved efficiently by a Berlekamp--Welch-like approach in cubic complexity~\cite{LiuManganielloEtAl2015ConstructionDecodingGeneralized}, for example.

	\subsection{Parameter sets}
	\label{subsec:param-selection}
	
	In the following, we present \reskew parameter sets for each of the security levels that \ac{NIST} defined for their \ac{PQC} standardization process.
	Security levels one, three, and five refer to any break of the system requiring at least as many computational resources as a brute-force key search on AES-128, AES-192, or AES-256, respectively~\cite[Sec.~4.A.5]{nist2016pqccall}.
	Their security-bit equivalents are estimated as \numprint{143}, \numprint{207}, and \numprint{272} bits in~\cite[Sec.~4.A.5]{nist2016pqccall}.
	We take a conservative path and add a security margin of five bits on top of \ac{NIST}'s bit estimates and thus arrive at target security levels of \numprint{148}, \numprint{212}, and \numprint{277} security bits, respectively.
	
	\paragraph{Information-set decoding.}
	Assuming that we choose \ac{GSRS} codes with parameters for which no efficient structural attacks are known, the selection of suitable parameter sets boils down to estimating the complexity of the best known attacks on the underlying hard problem.
	The fastest known strategy for the computational \ac{SDP} is \acf{ISD}, which Prange coined in 1962~\cite{Prange1962UseInformationSets}.
	The main idea is to introduce a permutation matrix $\P \in \Fqm^{n \times n}$ and alter the equality $\e \H^{\top} = \s$ to obtain $\e \P^{-1} \P \H^{\top} = \s$.
	This step is equivalent to switching to another \ac{SDP} instance with parity-check matrix $\H \P^{\top}$ and weight-$t$ error $\e \P^{-1}$.
	The hope is that now all $t$ nonzero entries of $\e \P^{-1}$ are condensed in the first $n - k$ coordinates because this special case allows to recover $\e$ efficiently by means of linear algebra.
	It can be easily checked if the new \ac{SDP} instance has the desired property and thus the process can be repeated for random permutations until it successfully recovers $\e$.
	The expected number of permutations to try until succeeding is
	\begin{equation}\label{eq:prange}
		\frac{\binom{n}{t}}
	    {\binom{n - k}{t}}
	\end{equation}
	and this is the dominating factor of the computational complexity of Prange's algorithm~\cite[Sec.~4.3]{cryptographicestimators}.
	The literature contains many improvements and variants of \ac{ISD} procedures and most of them allow that a constant predefined share of the error weight is still located in the last $k$ entries of the permuted error.
	This reduces the expected number of trials but increases the cost of each iteration since the enumeration of valid second parts of $\e$ takes longer than assuming it being equal to $\0$.
	Note that a series of improvements in terms of the enumeration are tailored to binary fields and not applicable to $q$-ary fields with $q > 2$~\cite[Sec.~4.3]{cryptographicestimators}.
	Thus, these variants are no threat for \reskew due to its need for a nontrivial field extension.
	
	We employ the \texttt{CryptographicEstimators} library~\cite{cryptographicestimators} to obtain complexity estimates for \ac{ISD} algorithms, which let us assess the security of potential parameter sets and reach good tradeoffs between efficiency and security.
	We aim for conservative estimates and stick to the widely adopted \ac{RAM} model, which neglects memory-access costs.
	We use the library's \ac{SDP} estimator for nonbinary finite fields for which the \ac{ISD} variants by Prange, by Lee--Brickell, and by Stern are implemented.
	Note that this estimator counts additions in the respective finite field but neglects multiplications because the latter are assumed to be implemented via lookup tables~\cite[Sec.~4.3]{cryptographicestimators}.
	
	\paragraph{Parameter constraints.}
	Observe that \reskew parameters need to satisfy certain properties.
	In particular, we need to choose a nontrivial automorphism $\aut$ to exclude the \ac{GRS} case and thus prohibit efficient key-recovery attacks.
	This implies that $s$ with $\aut(x) = x^{q ^s}$ can take values from $1$ to $m - 1$ and therefore $m > 1$ is necessary.
	Additionally, the code locators of a \ac{GSRS} code need to be P-independent which yields the restriction $n \leq m (q - 1)$.
	
	We further fix $t = \left\lfloor \tfrac{1}{2}(n - k) \right\rfloor$ as the maximal Hamming weight for which the chosen nontrivial \ac{GSRS} code of dimension $k < n$ can decode errors uniquely.
	This is reasonable since a larger error weight does not increase key and ciphertext sizes but tends to result in higher \ac{ISD} costs.
	
	\begin{table}
		\centering
		\caption{\reskew parameter sets.}
		\label{tab:reskew-params}
		\begin{tabular}{ll|rrrrrr}
			\toprule
			{parameter set} & NIST level & {$q$}          & {$m$}        & {$s$} & {$n$}          & {$k$}          & {$t$}          \\
			\midrule
			ReSkew-1         & Level 1    & \numprint{233} & \numprint{2} & \numprint{1} & \numprint{427} & \numprint{325} & \numprint{51}  \\
			ReSkew-1-bin     & Level 1    & \numprint{256} & \numprint{2} & \numprint{1} & \numprint{427} & \numprint{325} & \numprint{51}  \\
			\midrule
			ReSkew-3         & Level 3    & \numprint{331} & \numprint{2} & \numprint{1} & \numprint{627} & \numprint{465} & \numprint{81}  \\
			ReSkew-3-bin     & Level 3    & \numprint{512} & \numprint{2} & \numprint{1} & \numprint{626} & \numprint{464} & \numprint{81}  \\
			\midrule
			ReSkew-5         & Level 5    & \numprint{457} & \numprint{2} & \numprint{1} & \numprint{842} & \numprint{624} & \numprint{109} \\
			ReSkew-5-bin     & Level 5    & \numprint{512} & \numprint{2} & \numprint{1} & \numprint{842} & \numprint{624} & \numprint{109} \\
			\bottomrule
		\end{tabular}
	\end{table}
	
	\begin{table}
		\centering
		\caption{Key and ciphertext sizes for the \reskew parameter sets from~\cref{tab:reskew-params}, rounded up to full bytes.}
		\label{tab:reskew-props}
		\begin{tabular}{lr|rrr}
			\toprule
			\multirow{2}{*}{parameter set} & NIST  & public key        & secret key      & ciphertext     \\
			& level & (bytes)           & (bytes)         & (bytes)        \\
			\midrule
			ReSkew-1                        & 1     & \numprint{66300}  & \numprint{1708}  & \numprint{204} \\
			ReSkew-1-bin                    & 1     & \numprint{66300}  & \numprint{1708}  & \numprint{204} \\
			\midrule
			ReSkew-3                        & 3     & \numprint{160077} & \numprint{2665} & \numprint{345} \\
			ReSkew-3-bin                    & 3     & \numprint{169128} & \numprint{2817} & \numprint{365} \\
			\midrule
			ReSkew-5                        & 5     & \numprint{306072} & \numprint{3789} & \numprint{491} \\
			ReSkew-5-bin                    & 5     & \numprint{306072} & \numprint{3789} & \numprint{491} \\
			\bottomrule
		\end{tabular}
	\end{table}
	
	\paragraph{Parameter sets.}
	Since public key size is the major drawback of Classic McEliece, we decided to optimize the \reskew parameter sets for small public keys.
	We iterated over many possibilities with the above discussed parameter restrictions in mind and analyzed the bit-security level by means of the \ac{SDP} estimator from the \texttt{CryptographicEstimators} library.
	Then, we picked the parameter set with the smallest public key for each of the target security levels of \numprint{148}, \numprint{212}, and \numprint{277} bits and named them ReSkew-1, ReSkew-3, and ReSkew-5, respectively.
	In addition, we provide the parameter sets ReSkew-1-bin, ReSkew-3-bin, and ReSkew-5-bin based on binary extension fields to simplify the implementation and to avoid storage overhead.
	
	All proposed \reskew parameter sets are depicted in~\cref{tab:reskew-params} and the resulting system properties are displayed in~\cref{tab:reskew-props}.
	Remark that all sets use extension fields of degree two and thus the only valid choice for a nontrivial automorphism, namely the Frobenius automorphism $\sigma$ defined by $\sigma(x) = x^{q}$.
	The code rates for the three security levels are about \numprint{0.76}, \numprint{0.74}, and \numprint{0.74}, which aligns with the rates between \numprint{0.7} and \numprint{0.8} employed in Classic McEliece.
	The resulting public key sizes are about \numprint{66.3} kB, \numprint{0.17} MB, and \numprint{0.31} MB for the security categories one, three, and five, respectively.
	In particular, this means an improvement of the public key size by at least a factor three compared to Classic McEliece for each security level.

\end{document}